\documentclass[acmsmall,acmthm=false, nonacm]{acmart}

\usepackage{graphicx} 

\title{Neuro-Relational Programs: Unifying Queries and Neural Computation over Structured Data
}
\author{Arie Soeteman}
\email{a.w.soeteman@uva.nl}
\affiliation{
  \institution{ILLC, University of Amsterdam}
  \city{Amsterdam}
  \country{Netherlands}
}
\author{Balder ten Cate}
\email{b.d.tencate@uva.nl}
\affiliation{
  \institution{ILLC, University of Amsterdam}
  \city{Amsterdam}
  \country{Netherlands}
}

\author{Maurice Funk}
\email{maurice.funk@uni-leipzig.de}
\affiliation{
  \institution{Leipzig University}
  \city{Leipzig}
  \country{Germany}
}
\affiliation{
\institution{ScaDS.AI Center Dresden/Leipzig}
\country{Germany}
}

\author{Benny Kimelfeld}
\email{bennyk@cs.technion.ac.il}
\affiliation{
  \institution{Technion}
  \city{Haifa}
  \country{Israel}
}
\affiliation{
  \institution{RelationalAI}
  \country{United States}
}

\author{Carsten Lutz}
\email{clu@informatik.uni-leipzig.de}
\affiliation{
  \institution{Leipzig University}
  \city{Leipzig}
  \country{Germany}
}
\affiliation{
\institution{ScaDS.AI Center Dresden/Leipzig}
\country{Germany}
}

\author{Moritz Sch{\"o}nherr}
\email{schoenherr@informatik.uni-leipzig.de}
\affiliation{
  \institution{Leipzig University}
  \city{Leipzig}
  \country{Germany}
}
\affiliation{
\institution{ScaDS.AI Center Dresden/Leipzig}
\country{Germany}
}

\date{}
\usepackage{graphicx} 
\usepackage{amsmath}
\usepackage{mathtools,amsthm}
\usepackage{color,todonotes,xspace}

\usepackage{hyperref}
\usepackage{thm-restate}
\usepackage{cleveref}
\usepackage{bbm}

\usepackage{enumitem}
\setlist[itemize]{leftmargin=2em,itemsep=0.5em,topsep=0.6em}
\setlist[enumerate]{leftmargin=2em,itemsep=0.5em,,topsep=0.6em}

\newcommand{\balder}[1]{}
\newcommand{\clu}[1]{}
\newcommand{\arie}[1]{}
\newcommand{\moritz}[1]{}
\newcommand{\benny}[1]{}

\theoremstyle{plain}
\newtheorem{remark}{Remark}[section]
\newtheorem{theorem}[remark]{Theorem}
\newtheorem{lemma}[theorem]{Lemma}
\newtheorem{corollary}[theorem]{Corollary}
 
\newtheorem{definition}[theorem]{Definition}
\newtheorem{example}[theorem]{Example}

\newcommand{\ord}{\operatorname{ord}}
\newcommand{\tsum}{\mathop{\textsc{sum}}\xspace}

\newcommand{\relu}{\textup{ReLU}\xspace}

\newcommand{\multiset}[1]{\{\!\!\{#1\}\!\!\}}
\newcommand{\sem}[1]{\llbracket#1\rrbracket}

\newcommand{\calA}{\mathcal{A}}

\newcommand{\calC}{\mathcal{C}}
\newcommand{\calF}{\mathcal{F}}

\newcommand{\calL}{\mathcal{L}}

\newcommand{\calN}{\mathcal{N}}

\newcommand{\calQ}{\mathcal{Q}}
\newcommand{\calS}{\mathcal{S}}

\newcommand{\FFN}{\textup{FFN}\xspace}
\newcommand{\FFNs}{\textup{FFNs}\xspace}
\newcommand{\reluFFN}{\ensuremath{\relu\tdash\FFN}\xspace}
\newcommand{\reluFFNs}{\ensuremath{\relu\tdash\FFNs}\xspace}

\newcommand{\FOC}{\textup{FO+C}\xspace}
\newcommand{\FOCrats}{\textup{FOCQ}\xspace}

\newcommand{\tdash}{\textup{-}}

\def\reals{\mathbb{R}}

\def\nats{\mathbb{N}}
\def\rats{\mathbb{Q}}

\newcommand{\TCO}{\ensuremath{\text{TC}^0}\xspace}

\usetikzlibrary{decorations.pathmorphing, calligraphy,calc}
\pgfset{decoration/amplitude = 1.5pt}
\pgfset{decoration/segment length = 3pt}
\pgfset{decoration/post length = 3pt}

\def\angs#1{\langle#1\rangle}
\def\scs{\mathcal{S}}
\def\arity{\mathrm{arity}}

\def\adom{\mathrm{adom}}
\def\Adom{\mathrm{Adom}}
\def\consts{\mathsf{Consts}}

\def\fbags{\mathcal{B}_{\mathsf{fin}}}
\def\set#1{\mathord{\{#1\}}}
\def\relu{\mathrm{ReLU}}

\def\from{\mathrel{\Leftarrow}}
\def\fromprod{\mathrel{\Leftarrow_{\odot}}}
\def\fromconcat{\mathrel{\Leftarrow_{\oplus}}}
\def\fromo{\mathrel{\Leftarrow_{o}}}

\def\e#1{\emph{#1}}
\def\tup#1{\mathbf{#1}}

\newcommand{\res}{\textup{\sf eval}\xspace}
\newcommand{\Hom}{\textup{\sf Hom}\xspace}

\newcommand{\bigmultiset}[1]{\big\{\!\!\big\{#1\big\}\!\!\big\}}

\begin{document}
\begin{abstract}
The conventional approach to deep learning over relational databases applies neural models, such as Graph Neural Networks (GNNs), to a graph representation of the database. Recent approaches instead operate on databases directly, associating tuples with embeddings and extending query mechanisms to jointly process embeddings and relational content. Inspired by these developments, we introduce Neuro-Relational Programs (NRPs), a declarative query language for relational databases whose facts carry numeric vector embeddings. NRPs extend Datalog-style rules with operations that combine, aggregate, and transform embeddings, thereby interleaving relational reasoning and learnable neural components within a single formalism.
This yields a general approach to neural computation over relational data: an NRP can be read both as a query plan with trainable components and as a neural architecture with relational structure built in.

Natural syntactic fragments of NRPs recover existing architectures and query formalisms. Zero-ary NRPs correspond to non-adaptive query algorithms; monadic NRPs generalize GNN-style message passing and precisely capture Deep Homomorphism Networks, a connection that we extend to frontier-guarded NRPs over databases with row-ids. We characterize the expressive power of unrestricted NRPs with $\reluFFN$ transformations by $\FOCrats$, an extension of first-order logic with counting interpreted over real-weighted structures, yielding a precise connection with uniform {\TCO} over ordered databases. Together, these results establish NRPs as a broad declarative framework for querying and neural computation over relational data.
\end{abstract}
\begin{CCSXML}
<ccs2012>
   <concept>
       <concept_id>10003752.10010070.10010111.10010113</concept_id>
       <concept_desc>Theory of computation~Database query languages (principles)</concept_desc>
       <concept_significance>500</concept_significance>
       </concept>
   <concept>
       <concept_id>10003752.10010070.10010111.10011734</concept_id>
       <concept_desc>Theory of computation~Logic and databases</concept_desc>
       <concept_significance>500</concept_significance>
       </concept>
   <concept>
       <concept_id>10003752.10010070.10010071</concept_id>
       <concept_desc>Theory of computation~Machine learning theory</concept_desc>
       <concept_significance>500</concept_significance>
       </concept>
 </ccs2012>
\end{CCSXML}

\ccsdesc[500]{Theory of computation~Database query languages (principles)}
\ccsdesc[500]{Theory of computation~Logic and databases}
\ccsdesc[500]{Theory of computation~Machine learning theory}

\keywords{Neural-Relational Learning, Descriptive Complexity}

\maketitle
\section{Introduction}
Recent advances in graph-based deep learning have fundamentally affected the practice of applying machine learning to relational databases. The prevailing approach is to represent the database as a graph, where database records become nodes and foreign-key relationships become edges, and then deploy graph-learning architectures such as graph neural networks (GNNs)~\cite{zhou2020graph,DBLP:conf/icml/FeyHHLR0YYL24,relbench,DBLP:conf/icml/WangWGWYWZ25}. To capture relational schema information, these graph architectures have been extended to \emph{heterogeneous} settings, where nodes and edges are associated with different types~\cite{DBLP:conf/aaai/ZhaoWSHSY21,DBLP:conf/www/WangJSWYCY19}. Consequently, designing a neural network over a relational database currently requires developers to work across two separate paradigms: relational query languages are used to define and export graph structures, while the predictive model itself is implemented imperatively using graph-learning frameworks such as PyTorch Geometric~\cite{fey2019fastgraphrepresentationlearning}, DGL~\cite{DBLP:journals/corr/abs-1909-01315}, and Scikit-network~\cite{DBLP:journals/jmlr/BonaldLLC20}. 

We argue that this separation can be elegantly avoided. Traditional query languages naturally support the design of graph-learning architectures once the relational model is extended with \e{tuple embeddings}, where each tuple is associated with a numeric vector~\cite{DBLP:conf/cikm/LubarskyTGK23,DBLP:conf/icde/TonshoffFGK23,DBLP:conf/sigmod/CappuzzoPT20}. From this perspective, joins determine not only how tuples are combined, but also how their embeddings are composed, while projections determine how embeddings are aggregated, similarly to provenance propagation in semiring annotations~\cite{DBLP:conf/pods/GreenKT07,DBLP:journals/pacmmod/ImMNP24}. This yields a unified declarative pipeline in which relational and neural computation are interleaved, allowing deep-learning architectures to be expressed directly over relational data.
In this paper, we formalize this approach through the concept of a \e{Neuro-Relational Program} (NRP). An NRP operates over an \e{embedded database}, where every fact is annotated with a numeric vector,  its embedding. It consists of a sequence of (non-recursive) Datalog-style rules that derive new embedded relations from existing ones. Conceptually, an NRP defines a computation in which embeddings are propagated and manipulated via relational operations. 

More specifically, NRPs use three types of rules that correspond to basic ways the relational structure guides neural computation. \e{Conjunction rules} implement joins and projections: for every match of a rule body, they combine the embeddings of the participating facts and then aggregate all matches that yield the same head tuple. \e{Disjunction rules} implement unions, merging alternative derivations of the same tuple by aggregating embeddings. \e{Transformation rules} update the embeddings of individual tuples using a differentiable map such as a feed-forward neural network.
\looseness=-1

The central contribution of this paper is a characterization of the expressive power of NRPs
 and of their relationship to existing logical and neural formalisms. We show that several previously studied formalisms arise naturally as syntactic restrictions of NRPs.
 In particular, we establish that:
\begin{itemize}
\item The class of \e{zero-ary NRPs} exactly captures the formalism of \e{non-adaptive query algorithms}~\cite{Chen2025,wu2023study,ten2024homomorphism,MFCS2025} (\Cref{sec:zero-ary}).
\item The class of \e{monadic NRPs} exactly captures the formalism of \e{deep homomorphism networks}~\cite{maehara2024deep,schonherr2026expressive}; moreover, when tuples in the input database are associated with row identifiers, as is common in database systems, this correspondence extends to \e{frontier guarded NRPs} (\Cref{sec:monadic-and-guarded}). 
\item Restricted to common transformation functions, the class of NRPs corresponds exactly to the logic $\FOCrats$, an
extension of first-order logic with counting and arithmetic interpreted
over real-weighted structures. \FOCrats is a close
relative of various other counting extensions of first-order logic that have recently been proposed as query languages for neural networks~\cite{grohe2025querylanguages,grohe2026recursive}. Over ordinary (unweighted) relational structures that include a linear order, $\FOCrats$ coincides with $\FOC$ and captures the complexity class \emph{uniform \TCO} (\Cref{{sec:descriptive}}).
\end{itemize}

These results establish NRPs as a broad framework for combining relational querying and neural computation, which unifies existing approaches to deep learning over databases.
\section{Related Work}
We now highlight relevant literature and how it relates to this work.
\paragraph{Annotated databases}
As mentioned above, the ideas of augmenting tuples with numerical information, and maintaining the augmented tuples through queries, are established in databases through the notion of \e{provenance}~\cite{DBLP:conf/pods/GreenKT07,DBLP:journals/pacmmod/ImMNP24} (also known as \e{lineage} and \e{annotation}). In the \e{provenance semiring} framework, every tuple is annotated by an element of a commutative semiring; the multiplication operation of the semiring is used for joining tuples, and the addition operation of the semiring is used for aggregating tuples on projection (and union)~\cite{DBLP:conf/pods/GreenKT07}. Special attention has been given to the case where the annotation is numerical, as in this work, but specifically targeting the case where this annotation is a  probability~\cite{DBLP:conf/vldb/DalviS04}. Recent years have seen research advances in the context of Datalog, on both provenance~\cite{DBLP:journals/pacmmod/ImMNP24,DBLP:journals/pacmmod/ZhaoDKRT24,DBLP:journals/vldb/DeutchGM18} and probability annotation~\cite{DBLP:conf/pods/GroheKKL20,DBLP:journals/tods/BaranyCKOV17,DBLP:conf/pods/AlvianoLMP23} (with ties to probabilistic logic programming~\cite{DBLP:conf/ijcai/RaedtKT07}).
Unlike semiring provenance, the transformations and aggregations considered here are not restricted to fixed algebraic operations and are not required to satisfy semiring axioms. Moreover, our annotations are numerical vectors rather than scalar values.

\paragraph{Embedding database records}
Considerable work has been devoted to tuple embeddings in a Euclidean space~\cite{DBLP:conf/icde/TonshoffFGK23,DBLP:conf/sigmod/CappuzzoPT20,DBLP:conf/sigmod/BordawekarS17,DBLP:conf/cikm/LubarskyTGK23,DBLP:conf/sigmod/MudgalLRDPKDAR18,DBLP:journals/ml/CvetkovIlievAV23,DBLP:conf/icml/KimGV24}, where the goal is very different from the focus of our study: compute vector annotations such that semantic proximity is translated into geometrical closeness. This task is typically translated into an optimization problem of finding embeddings that allow for precise reconstruction (decoding) of semantic relationships (kernels). 
Our use of embedded tuples also differs from the intention of \e{vector database} approaches~\cite{DBLP:journals/vldb/PanWL24}; we treat embeddings as a complementary representation of the tuples---not as raw data for efficient retrieval. Unlike RAG-based systems~\cite{DBLP:conf/nips/LewisPPPKGKLYR020}, our framework uses vector representations to compose neural networks rather than driving similarity and context search.

\paragraph{Probabilistic logic}
Our neuro-relational model is inspired by probabilistic-logic systems such as Scallop~\cite{DBLP:journals/pacmpl/LiHN23} and DeepProbLog~\cite{DBLP:conf/nips/ManhaeveDKDR18}, where every tuple is associated with a \e{probability} and inference rules use a neural-network formalism to state how the probability propagates to inferred tuples. In contrast, our annotation should be viewed more as an embedding in a multi-dimensional Euclidean space, detached from any probabilistic interpretation.

\paragraph{Query-language views of graph learning.}
A closely related line of work studies GNN-style computation through
database- and logic-inspired languages over graphs~\cite{DBLP:conf/iclr/GeertsR22,DBLP:conf/pods/Geerts23,10.1145/3801914}.
These works provide languages for reasoning about neural
computation over graph-structured inputs. NRPs differ in taking relational
databases, rather than graphs, as the primary abstraction: facts are
annotated with embeddings, and relational operations such as joins,
projections, and unions act simultaneously on relational facts and their
embeddings. A further distinction is that these works primarily focus on
non-uniform expressive-power analyses, whereas our results
are uniform in nature.

\paragraph{Neuro-Symbolic Artificial Intelligence (NSAI)} 
Our approach can also be viewed as an instantiation of NSAI, which applies to any computational paradigm that combines neural networks (``connectionist'' AI) and logical reasoning (``symbolic'' AI, involving derivation by rules). (See, e.g.,~\cite{zhang2024neurosymbolicaiexplainabilitychallenges,10.1007/s00521-024-09960-z,DBLP:series/faia/342,DBLP:journals/nn/YuYLWP23} for recent surveys on NSAI.) This can be a formalism for deep learning enhanced by (soft) logical rules, towards improved expressiveness and reduced effort in data labeling~\cite{DBLP:conf/cvpr/0004YG18,DBLP:conf/aaai/SilvaG21}, all the way to logical rules that incorporate neural-network inference in addition to logical inference~\cite{DBLP:conf/nips/ShahZSVYC20,DBLP:conf/iclr/ZhangCYRLQS20}. Arguably, models of this type have the potential of being easy to program and considerably more explainable than pure deep-learning architectures. 
\looseness=-1

\section{Neuro-Relational Programs}\label{sec:nrp}
In this section, we formally define the \e{Neuro-Relational Program} (NRP) framework, which unifies traditional database queries and deep neural networks. The underlying data model is an \e{annotated relational database}, where each fact is annotated with a numerical vector. Conceptually, the annotation can be viewed as an \e{embedding} of the fact in a Euclidean space. We first adapt the concept of a database \e{schema}, and then that of its database \e{instances}.
\looseness=-1

An \emph{extended arity} is an expression $k\angs{d}$ for natural numbers $k,d\geq 0$, where $k$ is a relation's \emph{content arity}, and $d$ its \emph{embedding dimension}. 
A \emph{schema} $\scs$ is a finite set of expressions $R[k]\angs{d}$, where each $R$ is a unique \emph{relation symbol} with extended arity $\arity(R) = k\angs{d}$.
Let $\consts$ be an infinite set of constants representing database elements. An \emph{embedded fact} (\emph{e-fact} for short) over $\scs$ is an expression of the form $R(\tup c)\angs{\tup e}$ where $R$ is a relation symbol in $\scs$ with  $\arity(R)=k\angs{d}$, $\tup c\in\consts^k$ is the \e{content tuple}, and $\tup e\in\reals^d$ is the embedding.

An \emph{embedded database} (\e{e-database} for short) over $\scs$ is a finite set $D$ of e-facts over $\scs$ such that no two e-facts share the same relation symbol and content; that is, if  $R(\tup c)\angs{\tup e}$ and  $R(\tup c)\angs{\tup e'}$ belong to $D$, then
$\tup e=\tup e'$.
We write $R(\tup a) \in D$ to denote that $R(\tup a)\angs{\tup e} \in D$ for some $\tup e \in \reals^d$. The \e{active domain} of an e-database $D$, denoted $\adom(D)$, is the set of all constants that occur in facts in the e-database $D$.
By an \e{e-relation} we refer to the subset of $D$ that consists of all e-facts $R(\tup c)\angs{\tup e}$ with the same $R$ for some relation symbol $R$.  An e-relation is \e{classical} if its embedding dimension is zero (i.e., the e-facts consist of only content, as in ordinary databases). An e-database is classical if all its e-facts have embedding dimension zero.

\paragraph{NRP syntax}
Fix a schema $\scs$. A \emph{rule} $\Psi$ over $\scs$ maps e-relations over the relation symbols $R_i$ of $\scs$ to a new e-relation over a relation symbol $R$ with arity $k\angs{d}$ outside of $\scs$. We say that $\Psi$ \e{produces} $R[k]\angs{d}$ from $\scs$. An NRP $\Pi$ is a sequence of rules $(\Psi_1, \dots, \Psi_n)$, where each $\Psi_i$ produces 
$R_i[k_i]\angs{d_i}$ from  schema $\scs \cup \{R_j[k_j]\angs{d_j} \,\mid\, 1 \leq j < i\}$. 
Let $\fbags(\reals^{d})$ denote the set of all finite multisets of $d$-dimensional real-valued vectors. Let $\arity(R)={k}\angs{d}$ and $\arity(R_i)={k_i}\angs{d_i}$. An NRP uses three types of rules:
\begin{enumerate}
    \item
          A \emph{conjunction rule} has the form
          $$R(\tup x)\angs{\alpha}\fromo R_1(\tup x_1)\land\dots\land R_\ell(\tup x_\ell)$$
          where every variable in $\tup x$ appears in at least one $\tup x_i$, 
$\alpha:\fbags(\reals^d)\rightarrow\reals^d$ is an \e{aggregation function} and 
$o:\reals^{d_1} \times \dots \times \reals^{d_\ell} \to \reals^d$ is a \e{combination function}.

    \item
          A \emph{disjunction rule} has the form
          $$R(\tup x)\angs{\alpha}\from R_1(\tup x)\lor\dots\lor R_\ell(\tup x)$$
          where $d_1=\cdots=d_\ell=d$ and $\alpha:\fbags(\reals^{d})\rightarrow \reals^{d}$ is an aggregation function.

    \item
          A \emph{transformation rule} has the form
        \[
        R(\tup x)\angs{\mu} \from R_0(\tup x)
        \]
      where $\mu: \reals^{d_0} \to \reals^d$ is a \e{transformation function}.
\end{enumerate}
In all cases, the expressions left and right of ``$\from$'' are the rule's \e{head} and \e{body}, respectively.
Throughout this paper we fix element-wise sum (denoted~$\tsum$) as the aggregation function, where summing over an empty multiset yields the zero vector $\tup 0^{(d)}$. We fix element-wise product (denoted~$\odot$) as the combination function, 
adjusted for variable-length inputs. To be precise, in conjunction rules of the form (1) above, we require that $d=\max\{d_1, \ldots, d_\ell\}$, and we use the combination function $\odot:\reals^{d_1}\times\cdots \times \reals^{d_\ell}\to\reals^d$  where $\odot(\tup{e}_1,\ldots,\tup{e}_\ell)$ is the tuple $\tup{e}\in\reals^d$ whose $j$-th element (for $j\leq d$) is the product of the
$j$-th elements of all input tuples $\tup{e}_i$ of 
length at least $j$. 
Following conventional practice in the expressiveness analysis of graph neural networks~\cite{grohe2024descriptive},
we typically use $\relu$-$\FFNs$ (feedforward networks with $\relu$ activations) as transformation functions. However, some results involve other function classes such as  the class $\calF_\times$ of element-wise multiplications over subspaces.\footnote{Specifically, $\calF_\times$ contains functions $f:(x_1,\ldots, x_n)=\Pi_{i\in S}(x_i)$ for subsets $S\subseteq\{1,\ldots,n\}$.} We assume classes of transformations are closed under composition.

Following standard Datalog terminology, we call the initial relations of $\scs$ \e{EDBs} (the \e{extensional database}), and the relations produced by $\Pi$ \e{IDBs} (the \e{intensional database}).

\paragraph{NRP semantics} 
Let $D$ be an e-database. A \emph{homomorphism} from a conjunction $R_1(\tup x_1)\land\dots\land R_\ell(\tup x_\ell)$ to $D$ is a function that maps every variable in $(\tup x_1, \dots, \tup x_\ell)$ to a constant, so that for all $i = 1, \dots, \ell$ there is an e-fact $R_i(h(\tup x_i))\angs{\tup e}$ in $D$ for some embedding $\tup e$, which we denote by $\tup e_{i,h}$. Here, $h(\tup x_i)$ denotes the tuple obtained by replacing every variable $x \in \tup x_i$ with $h(x)$. A rule $\Psi$ producing $R[k]\angs{d}$ derives a set $\Psi(D)$ of e-facts as follows:
\begin{enumerate}
    \item
          If $\Psi$ is a conjunction rule $R(\tup x)\angs{\alpha}\fromo R_1(\tup x_1)\land\dots\land R_\ell(\tup x_\ell)$, let $H$ be the set of homomorphisms from $R_1(\tup x_1)\land\dots\land R_\ell(\tup x_\ell)$ to $D$. Then $R(\tup a) \in \Psi(D)$ if and only if $h(\tup x)=\tup a$ for some $h \in H$. The embedding of $R(\tup a)$ is $\alpha(\multiset{o(\tup e_{1,h}, \dots, \tup e_{\ell, h}) \,\mid\, h \in H \text{ and } h(\tup x)=\tup a})$.

    \item
          If $\Psi$ is a disjunction rule $R(\tup x)\angs{\alpha}\from R_1(\tup x)\lor\dots\lor R_\ell(\tup x)$, then $R(\tup a)\in\Psi(D)$ if and only if $R_i(\tup a) \in D$ for some $i = 1 \dots \ell$. The embedding of $R(\tup a)$ is $\alpha(\multiset{\tup e \,\mid\, 1\leq i\leq \ell \text{ and } R_i(\tup a)\angs{\tup e}\in D })$.

    \item
          If $\Psi$ is a transformation rule $R(\tup x)\angs{\mu}\from R_0(\tup x)$, then $R(\tup a)\angs{\tup e} \in \Psi(D)$ if and only if $R_0(\tup a)\angs{\tup e_0} \in D$ where $\tup e = \mu(\tup e_0)$. 
\end{enumerate}
The result of applying an NRP $\Pi=(\Psi_1,\dots, \Psi_n)$ to an e-database $D$ is defined as follows. Let $D_0=D$ and, for $i = 1 \dots n$, let $D_i = D_{i-1} \cup \Psi_i(D_{i-1})$. Then $\Pi(D) = D_n$.  

We provide concrete examples of NRPs later in the paper (e.g., Examples~\ref{ex:query-algs} and~\ref{ex:guarded}), which use shorthand notations introduced next. Additionally, Appendix~\ref{app:example} contains a detailed example illustrating how to model an end-to-end relational learning pipeline as an NRP.

\paragraph{Shorthand notations}
We introduce several syntactic conveniences. In conjunction rules, when all body variables occur in the head the aggregation function is irrelevant, and when at most one atom in the body has a positive embedding dimension the combination function is irrelevant. In these cases the aggregation and combination functions may be omitted from the rule description.
We let `Adom' denote a $1\angs{0}$-ary relation containing all active domain elements, which is produced by a disjunction over all relations in the signature followed by a transformation that removes the embedding. Finally, for a real-valued $k$-tuple $\tup{e}$, we write $\text{True}_{\tup{e}}()$ to denote a $0\angs{k}$-ary fact
with embedding $\tup{e}$. It is produced by the pair of rules 
$\text{True}() \from$ (with an empty rule body) and 
$\text{True}_{\tup{e}}()\angs{\mu_{\tup{e}}}\from \text{True}()$, where $\mu_{\tup{e}}$ is the constant function returning $\tup{e}$. 

\paragraph{Syntactic restrictions}
A relation $R$ is  \emph{zero-ary} if $\arity(R) = 0\angs{d}$ and \emph{monadic} if $\arity(R) = 1\angs{d}$ for some $d\geq 0$. An NRP $\Pi$ is zero-ary if all its IDBs are zero-ary and all bodies of rules in $\Pi$ use zero-ary relations or relations with the extended arity $k\angs{0}$. Similarly, $\Pi$ is monadic if all its IDBs are zero-ary or monadic and all bodies use zero-ary or monadic relations or relations with arity $k\angs{0}$.
$\Pi$ is \e{frontier guarded} if, for every rule $\Psi$, the variables in the head of $\Psi$ co-occur in at least one body atom. $\Pi$ is \e{disjunction-free} if it has no disjunction rules.

\paragraph{NRP queries}

We study the expressiveness of NRPs as a query language over embedded 
databases. When all EDBs and IDBs have embedding dimension $0$, NRPs compute unions of conjunctive queries (non-recursive Datalog). Their added expressive power stems from their ability to process embeddings.
To formalize this, a $k\angs{d}$-ary \emph{embedded query} $\calQ$ over $\scs$ is a function from e-databases over $\scs$ to e-relations of arity $k\angs{d}$.
When $d=0$, $\calQ$ is a \emph{flat query}, mapping e-databases over $\scs$ to classical $k-$ary relations $\calQ(D) \subseteq \adom(D)^k$. If both $d=0$ and $k=0$, $\calQ$ is a \emph{Boolean flat query}.

Let $\Pi = (\Psi_1,\dots,\Psi_n)$ be an NRP where $\Psi_n$ produces $R_n$.
Then $\Pi$ computes the embedded query $\calQ_\Pi$ such that $\calQ_\Pi(D)$ is the e-relation $R_n$ in $\Pi(D)$. To compute flat queries we pair an NRP with a $d$-ary \e{acceptance policy} $P: \reals^d \to \{0,1\}$. If $R_n$ is the $k\angs{d}$-ary relation produced by $\Pi$ and $P$ is a $d$-ary acceptance policy, the \e{gated NRP} $(\Pi,P)$ computes the $k$-ary flat query:
\begin{align*}
    \calQ_{(\Pi,P)}(D) = 
\set{\tup a\mid
\exists\tup e:
R_n(\tup a)\angs{\tup e}\in\Pi(D)\land P(\tup e)=1}
\end{align*}
The specific class of allowed acceptance policies is largely irrelevant to our results, except when discussed explicitly in Section ~\ref{sec:descriptive}. For concreteness, unless specified otherwise one may assume a fixed policy where $P(\tup e)=1$ if the minimum element of $\tup e$ is strictly positive and $P(\tup e)=0$ otherwise.

\paragraph{Comparing combination functions}
\label{sec:CompareNRPs}

We briefly justify fixing the element-wise product $\odot$ as our combination function. While element-wise sum $(\tsum)$ and concatenation $(\oplus)$ are natural alternatives, both can be simulated using $\odot$.
Let $\Pi_2$ be a \emph{relational expansion} of $\Pi_1$ if $\Pi_2$ expands the set of IDBs of $\Pi_1$ while perfectly preserving the e-relations $\Pi_1$ computes for any input e-database.

\begin{restatable}{proposition}{propNRPProdSimulatesConcat}
\label{prop:NRM_prod_simulates_concat}
    For every NRP $\Pi_1$ using $\tsum$ or $\oplus$ as its combination function, 
    there exists a relational expansion $\Pi_2$ using only $\odot$ combination. Moreover:
    \begin{enumerate}
        \item If \,$\Pi_1$ is zero-ary / monadic / frontier guarded / disjunction-free, the same holds for $\Pi_2$;
        \item For any function class $\mathcal{F}$ 
    containing $\reluFFNs$, if
    $\Pi_1$ uses only transformations from $\mathcal{F}$, the same holds for $\Pi_2$. 
    \end{enumerate}
    \end{restatable}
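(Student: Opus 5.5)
Our plan is a rule-by-rule simulation. Disjunction and transformation rules do not use a combination function, so they are copied verbatim into $\Pi_2$. Each conjunction rule $R(\tup x)\angs{\alpha}\fromo R_1(\tup x_1)\land\dots\land R_\ell(\tup x_\ell)$ of $\Pi_1$, with $o\in\{\tsum,\oplus\}$, is replaced by a short sequence of rules that uses only $\odot$ and ends by producing exactly the e-relation $R$. Since the original IDBs keep their names and the new rules only add auxiliary IDBs, the result is a relational expansion.

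The key idea is padding with neutral elements. Let $d$ be the output dimension, i.e.\ $d=\max_i d_i$ for $\tsum$ and $d=\sum_i d_i$ for $\oplus$. For each body atom $R_i(\tup x_i)$ we first add a transformation rule $R_i'(\tup x_i)\angs{\mu_i}\from R_i(\tup x_i)$, where $\mu_i:\reals^{d_i}\to\reals^{d\cdot \ell}$ is linear (hence a $\relu$-$\FFN$, since $x=\relu(x)-\relu(-x)$) followed by adding constants; affine maps suffice. Concretely, $\mu_i$ places $\tup e_i$ in the $i$-th block of length $d$ (for $\oplus$: at the offset $\sum_{j<i}d_j$ inside that block, zeros elsewhere in that block) and fills every other block $j\neq i$ with the constant $1$. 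Then $\odot(\mu_1(\tup e_1),\dots,\mu_\ell(\tup e_\ell))$ is the concatenation of the blocks $\mu_i$-restricted, i.e.\ block $i$ equals the padded $\tup e_i$.

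Next we add the rule $R''(\tup x)\angs{\tsum}\fromprod R_1'(\tup x_1)\land\dots\land R_\ell'(\tup x_\ell)$, which has the same body shape and so the same homomorphisms. Finally we add $R(\tup x)\angs{\lambda}\from R''(\tup x)$, where $\lambda:\reals^{d\ell}\to\reals^d$ sums the $\ell$ blocks. Block-summing the padded vectors yields $\sum_i \tup e_i$ (zero-padded to length $d$) for $\tsum$, and $\tup e_1\oplus\dots\oplus\tup e_\ell$ for $\oplus$.

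The crucial point, and the main thing to verify, is that this final transformation commutes with aggregation. Because $\lambda$ is linear and the aggregation is $\tsum$, we have $\lambda(\sum_h v_h)=\sum_h\lambda(v_h)$, including the empty multiset, where both sides are $\tup 0$. Hence applying $\lambda$ after aggregating equals aggregating the combined values, which is precisely the semantics of the original rule, and the set of derived tuples coincides. If one prefers to avoid relying on linearity, one could instead keep $R''$ unaggregated by adding all body variables to its head, but that would break frontier-guardedness and monadicity. So we rely on linearity of $\lambda$ together with the fact that aggregation is fixed to $\tsum$.

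It remains to check the side conditions.
\begin{itemize}
\item The new rules use only transformations, one conjunction rule with the original body and head variables, and no disjunctions. So disjunction-freeness and frontier-guardedness are preserved.
\item The auxiliary $R_i'$ and $R''$ have the same content arity as $R_i$ and $R$. Hence zero-ary and monadic restrictions are preserved, with one exception handled next.
\item The exception is a body atom of arity $k\angs{0}$ with $k\ge2$ in a monadic program, for which lifting the embedding would create a non-monadic IDB with positive dimension. Such atoms contribute nothing to $\tsum$ or $\oplus$, so we leave them untransformed; under $\odot$ they contribute no coordinates, since only vectors of length at least $j$ enter the $j$-th product. Our construction then involves only the atoms with $d_i>0$.
\item The transformations are affine, hence $\relu$-$\FFNs$, and $\mathcal F$ is closed under composition and contains $\reluFFNs$. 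This gives item (2).
\end{itemize}
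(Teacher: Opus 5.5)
Your proposal is correct and follows essentially the paper's route. You pad the embeddings of positive-dimensional body atoms with the neutral element $1$ so that $\odot$ realizes concatenation, leave $k\angs{0}$ atoms untransformed, and recover $\tsum$ by a linear summing transformation applied after aggregation. The differences are cosmetic. The paper handles $\oplus$ directly (no final transformation is needed) and obtains $\tsum$ as $\oplus$ followed by summation, whereas you merge both cases into one block construction. You also make explicit the linearity argument that a post-aggregation transformation commutes with $\tsum$, which the paper leaves implicit.
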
  
In fact, Proposition~\ref{prop:NRM_prod_simulates_concat} holds more generally for any function class $\mathcal{F}$ that includes affine transformations (i.e., transformations defined by a single linear feed-forward layer).
We illustrate this proposition with an example. 
\begin{example}
Consider the schema with the three relation symbols $R_1$,  $R_2$, and $R_3$ having $\arity(R_1)=\arity(R_2)=1\angs{2}$ and $\arity(R_3)=2\angs{0}$. 
The rule:
    $$R(x) \angs{\tsum} \from_\oplus R_1(x)\land R_2(y)\land R_3(x,y)$$
    can be simulated with three rules:
    \[R'_1(x) \angs{\mu_1} \from R_1(x) \qquad
      R'_2(x) \angs{\mu_2} \from R_2(x) \qquad
      R(x) \angs{\tsum} \from_\odot R_1'(x)\land R'_2(y)\land R_3(x,y)\]
    where $\mu_1:\mathbb{R}^2\to\mathbb{R}^4$ maps vectors $(a_1,a_2)$ to $(a_1,a_2,1,1)$
    and $\mu_2:\mathbb{R}^2\to\mathbb{R}^4$ maps vectors $(a_1,a_2)$ to $(1,1,a_1,a_2)$. Since in monadic programs body atoms with content arity larger than $1$ have embedding dimension $0$, this translation preserves monadicity.
\end{example}
    Product as combination further suffices to simulate product transformation functions $(\calF_\times)$:
\begin{restatable}{proposition}{PropProductTransformationtoCombination}
\label{prop:NRPproductCombSimulatesProductTrans}
Let $\mathcal{F}$ be any function class containing $\reluFFNs$. For every NRP $\Pi_1$ using transformations from $\mathcal{F}\cup\calF_\times$, there exists a relational expansion $\Pi_2$ using $\odot$ combination and transformations from $\mathcal{F}$. Moreover, if \,$\Pi_1$ is zero-ary / monadic / frontier guarded / disjunction-free, the same holds for $\Pi_2$.
\end{restatable}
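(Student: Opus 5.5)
The plan is to eliminate each transformation rule that uses a function from $\calF_\times$ by replacing it with a small block of rules that computes the product through a $\odot$-conjunction rule. All other rules stay unchanged. Applied one rule at a time, this gives a relational expansion. Closure under composition lets us assume each transformation in $\Pi_1$ is either purely from $\mathcal{F}$ or a single product map. We only need to treat a rule $R(\tup x)\angs{\mu}\from R_0(\tup x)$ where $\mu(e_1,\ldots,e_n)=\prod_{i\in S}e_i$, with $S=\{i_1,\ldots,i_m\}$. Here the output is a scalar, or more generally a vector of several such products, one per output coordinate; the vector case reduces to the scalar case by handling each coordinate and combining via the concatenation simulation of Proposition~\ref{prop:NRM_prod_simulates_concat}. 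If $S=\emptyset$ the map is the constant $1$, which is affine and hence in $\mathcal{F}$.

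The core construction for a scalar output is as follows. For each $j=1,\ldots,m$, introduce a fresh relation $R^{(j)}$ with the same content arity as $R_0$ and embedding dimension $1$, defined by a transformation rule $R^{(j)}(\tup x)\angs{\pi_j}\from R_0(\tup x)$. Here $\pi_j$ is the linear projection onto coordinate $i_j$, which is a $\relu$-FFN. Then add the conjunction rule
\[
R(\tup x)\angs{\tsum}\fromprod R^{(1)}(\tup x)\land\dots\land R^{(m)}(\tup x).
\]
All body atoms share exactly the head variables $\tup x$. So for each tuple $\tup a$ with $R_0(\tup a)\in D$, there is exactly one homomorphism mapping $\tup x$ to $\tup a$, and the sum aggregates a singleton. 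The resulting embedding is $\prod_j e_{i_j}=\mu(\tup e_0)$. The content of $R$ is also exactly that of $R_0$, since every $R^{(j)}$ has precisely the tuples of $R_0$. One subtlety arises when $\tup x$ contains repeated variables or constants. In that case the homomorphism is still unique, because all atoms use the identical tuple $\tup x$.

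It remains to check the preserved properties. Each new rule's body variables coincide with the head variables and co-occur in one atom, so frontier guardedness is preserved. No disjunction rules are introduced. The new IDBs have the same content arity as $R_0$, so a zero-ary $R_0$ gives zero-ary new relations and a monadic one gives monadic relations. In the monadic case the bodies use only monadic relations, which is allowed. The transformations used are the projections $\pi_j$, together with the original $\mathcal{F}$-parts composed around them. If $\mu$ was a composition $g\circ p\circ f$ with $f,g\in\mathcal{F}$ and $p\in\calF_\times$, we first apply $f$ in a transformation rule, then the product block, then $g$.

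The main obstacle is precisely this composition issue. Closure under composition of $\mathcal{F}\cup\calF_\times$ means a transformation may interleave product maps and FFNs arbitrarily. The remedy is to first split every transformation rule into a chain of rules, each applying a single factor from $\mathcal{F}$ or $\calF_\times$, using fresh intermediate IDBs with the same content tuple. We then eliminate the product factors as above and verify by induction on the number of rules that every original IDB e-relation is reproduced exactly.
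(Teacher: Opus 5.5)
Your proposal is correct and follows the paper's own proof: for a rule $R(\tup x)\angs{\mu_S}\from R_0(\tup x)$, the paper also adds one coordinate-projection rule for each $i\in S$, then a single $\odot$-conjunction rule over these relations, all sharing the head tuple $\tup x$, and notes that the four fragment restrictions are preserved. Your extra treatment of $S=\emptyset$, of vector-valued product maps, and of transformations that interleave $\calF$- and $\calF_\times$-factors supplies details that the paper leaves implicit.
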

We show in the appendix that $\odot$ combination can conversely be simulated by $\oplus$ combination and $\calF_\times$ transformations, but monadicity and frontier guardedness are not preserved by this translation.

\section{Zero-Ary Programs and Non-Adaptive Query Algorithms}
\label{sec:zero-ary}

In this section, we consider the special case of zero-ary NRPs where all IDBs have content arity 0. It turns out that these neatly correspond to an existing formalism: \emph{non-adaptive query algorithms}.

For a classical database schema (where all relations have embedding dimension 0),
a \emph{non-adaptive left query algorithm over $\nats$} (or simply, \emph{non-adaptive query algorithm}) is a pair
$A=(\{F_1, \ldots, F_k\},X)$, consisting of a finite set of databases and a set $X\subseteq\nats^k$. Such a query algorithm $A$ \emph{accepts} a database $D$ if $(|\Hom(F_1,D)|,\ldots,|\Hom(F_k, D)|) \in X$, 
where $\Hom(F,D)$ is the set of homomorphisms from $F$ to $D$, i.e. the functions $h: \adom(F) \to \adom(D)$ such that $R(a_1,\dots,a_k) \in F$ implies $R(h(a_1),\dots,h(a_k)) \in D$.
A class $\mathcal{C}$ of databases \emph{admits a non-adaptive query algorithm}
if there exists such an algorithm that accepts precisely the databases in $\mathcal{C}$.
The study of query algorithms was initiated by Chen et al.~\cite{Chen2025} for the special case of graphs, and subsequently generalized to arbitrary relational databases (as defined here) in~\cite{ten2024homomorphism}. 
The term is fitting because one can think of a query algorithm as a procedure for determining class membership: compute the homomorphism-count vector and verify whether it belongs to the set $X$. However, it remains a somewhat abstract notion of an algorithm since it places no constraints on the effectiveness of $X$. 
Various results were obtained in \cite{Chen2025,wu2023study,ten2024homomorphism,MFCS2025}, such as the fact that every class defined by an alternation-free first-order sentence (i.e., Boolean combination of universal FO sentences) admits a query algorithm, whereas there is class defined by a
FO sentence with the quantifier prefix $\exists\forall$, namely $\exists x\forall y\neg E(x,y)$, that does not admit a non-adaptive query algorithm~\cite{Chen2025}. 

We say a gated zero-ary NRP $(\Pi,P)$ \emph{defines} the class $\mathcal{C}$ if $D \in \mathcal{C}$ if and only if $\calQ_{\Pi,P}(D) \neq \emptyset$. We prove that gated zero-ary NRPs define precisely the classes that admit non-adaptive query algorithms. As a corollary we obtain results regarding the definability of first-order queries. The following statements apply to NRPs with arbitrary transformation functions.
\begin{restatable}{theorem}{thmQueryAlgs}
\label{thm:query-algs}
    For all classes $\mathcal{C}$ of classical databases, 
\begin{center}
$\mathcal{C}$ admits a non-adaptive query algorithm
\quad
iff 
\quad $\mathcal{C}$ is defined
    by a gated zero-ary NRP.
\end{center}
    \end{restatable}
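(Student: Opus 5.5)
We prove the two directions separately. Since the statement allows arbitrary transformation functions and arbitrary acceptance policies, the set $X$ can be absorbed into the final transformation/policy. The substance of the argument is that zero-ary NRPs over classical databases compute exactly the information carried by finitely many homomorphism counts.

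\emph{($\Rightarrow$)} Let $A=(\{F_1,\dots,F_k\},X)$ be a non-adaptive query algorithm. For each $F_j$, view $F_j$ as a conjunction $\bigwedge_{R(\tup a)\in F_j} R(\tup x_{\tup a})$, with one variable $x_a$ per element $a\in\adom(F_j)$. We then build a zero-ary output as follows.
\begin{itemize}
\item First derive $\text{True}_{1}()$ (embedding $(1)$).
\item Use the conjunction rule
\[ C_j()\angs{\tsum}\fromprod \text{True}_1()\land \bigwedge_{R(\tup a)\in F_j} R(\tup x_{\tup a}). \]
The body atoms are classical EDBs plus one zero-ary atom, so the rule is zero-ary. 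Each homomorphism contributes $\odot(1)=1$, so the sum is $|\Hom(F_j,D)|$ whenever this count is positive.
\item If there are no homomorphisms, $C_j()$ is not derived at all. To fix this, take a disjunction $C'_j()\angs{\tsum}\from C_j()\lor Z()$, where $Z()$ is $\text{True}_0()$ with embedding $0$. Then $C'_j()$ always exists and carries the exact count.
\end{itemize}
(If $F_j$ has elements with no facts, homomorphisms map them anywhere in $\adom(D)$. To handle this, add $\mathrm{Adom}(x_a)$ atoms; $\mathrm{Adom}$ is only $1\angs{0}$-ary, so it is allowed in zero-ary bodies.)

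Next, concatenate the counts into one $0\angs{k}$ fact. Lift each $C'_j$ by a transformation to a vector that is $1$ everywhere except the count at coordinate $j$, and combine these with a single $\odot$ conjunction of zero-ary atoms. Finally apply the transformation $\mu(\tup v)=1$ if $\tup v\in X$ and $\mu(\tup v)=0$ otherwise, followed by the fixed policy. Arbitrary transformations permit this $\mu$; for pure $\odot$-simulation we could instead invoke \Cref{prop:NRM_prod_simulates_concat}.

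\emph{($\Leftarrow$)} This is the main step. We show by induction along the rule sequence that, for each zero-ary IDB $R_i$, both the presence of $R_i()$ and its embedding are functions of the vector $(|\Hom(F,D)|)_{F\in\mathcal F_i}$ for some finite set $\mathcal F_i$ of databases. Presence is handled via the count of a pattern: $R_i()$ exists iff some homomorphism count is positive. The cases are as follows.
\begin{itemize}
\item A transformation rule composes a function with the previous one.
\item A disjunction rule sums the embeddings of the present disjuncts, which is again a function of the union of the count vectors.
\item A conjunction rule is the crux. Its body consists of zero-ary IDB atoms $S_1(),\dots,S_m()$ and classical atoms forming a conjunctive query $q$. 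Because the zero-ary atoms are variable-free, every homomorphism contributes the same product $\odot(\tup e_{S_1},\dots,\tup e_{S_m})$, and there are $|\Hom(F_q,D)|$ homomorphisms, where $F_q$ is the canonical database of $q$. Variables that occur only in zero-ary atoms do not exist, since zero-ary atoms have no variables. The result is present iff all $S_i$ are present and $|\Hom(F_q,D)|>0$, and its embedding equals $|\Hom(F_q,D)|\cdot\odot(\dots)$, which is a function of the counts. An empty classical body gives count $1$, i.e.\ the empty $F$.
\end{itemize}

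With the induction in place, set $\{F_1,\dots,F_k\}=\mathcal F_n$ and define $X$ as the set of count vectors on which the final fact is present and $P$ accepts its embedding. The main obstacle is the bookkeeping for presence and absence together with the fixed $\tsum$/$\odot$ semantics. This is resolved by noting that all information about the database flows only through the counts $|\Hom(F_q,D)|$. One remaining detail: the homomorphism notion in the NRP semantics ranges over variables, which matches $\Hom(F_q,D)$ exactly.
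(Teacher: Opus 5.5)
Your proof is correct and follows the paper's argument essentially step for step. One direction uses the same count-then-disjoin-with-$\text{True}_0()$ construction, concatenation, and indicator-of-$X$ transformation; the other uses the same induction showing that the presence and embedding of each zero-ary IDB are functions of the homomorphism counts of the canonical databases of conjunction-rule bodies. Your parenthetical about $\mathrm{Adom}$ is unnecessary, since $\adom(F_j)$ consists only of elements occurring in facts of $F_j$. It would also break zero-arity, because $\mathrm{Adom}$ is a unary IDB rather than an EDB.
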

\begin{corollary}~
\begin{enumerate}
    \item
    Every Boolean flat query defined by an alternation-free FO sentence is computed by a gated zero-ary  NRP;
\item    
    The Boolean flat query $\exists x\forall y\neg R(x,y)$ is not computed by a gated zero-ary NRP. 
\end{enumerate}
\end{corollary}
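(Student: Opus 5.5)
The statement to be proved is the Corollary following Theorem~\ref{thm:query-algs}. The plan is to derive both items directly from Theorem~\ref{thm:query-algs}, combined with the known results on non-adaptive query algorithms cited just before the theorem. No new construction of NRPs is needed: Theorem~\ref{thm:query-algs} holds for arbitrary transformation functions, so it converts any query algorithm into a gated zero-ary NRP and back.

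For item (1), let $\varphi$ be an alternation-free FO sentence, that is, a Boolean combination of universal sentences, over the classical schema. Let $\mathcal{C}_\varphi$ be the class of classical databases satisfying $\varphi$. By the result of Chen et al.~\cite{Chen2025}, generalized to arbitrary relational schemas in~\cite{ten2024homomorphism}, $\mathcal{C}_\varphi$ admits a non-adaptive query algorithm. The left-to-right direction of Theorem~\ref{thm:query-algs} then gives a gated zero-ary NRP $(\Pi,P)$ defining $\mathcal{C}_\varphi$. So $\calQ_{(\Pi,P)}(D)\neq\emptyset$ iff $D\models\varphi$.

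It remains to translate "defines the class" into "computes the Boolean flat query". For a $0\angs{d}$-ary output relation, $\calQ_{(\Pi,P)}(D)$ is either $\{()\}$ or $\emptyset$. This is precisely the Boolean flat query of $\varphi$, reading $\{()\}$ as true. If the output relation has positive embedding dimension, the gating policy already yields a $0$-ary flat answer, so nothing further is needed.

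For item (2), argue by contradiction. Suppose some gated zero-ary NRP computes $\exists x\forall y\neg R(x,y)$. Then, by the same identification, it defines the class of databases satisfying this sentence. The right-to-left direction of Theorem~\ref{thm:query-algs} yields a non-adaptive query algorithm for that class. This contradicts the result of~\cite{Chen2025} that this class admits no such algorithm.

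The only delicate point, which I expect to be the main obstacle, is matching settings. The negative result of~\cite{Chen2025} is stated for graphs, i.e.\ one binary relation, possibly with conventions such as a nonempty domain. I would therefore take the schema to consist of a single binary relation $R[2]\angs{0}$, matching that setting, and note that "defines the class $\mathcal{C}$" in Theorem~\ref{thm:query-algs} quantifies over all classical databases of the schema. I would also check one subtlety about the domain: homomorphism counts and NRPs both depend only on the facts, so domain elements outside $\adom(D)$ are invisible. The class used in~\cite{Chen2025} must therefore be read with the domain equal to the active domain (or with isolated vertices encoded in the same way on both sides). Once the classes agree, item (2) follows immediately.
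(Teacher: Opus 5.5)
Your proposal is correct and follows the paper's route: the corollary comes from combining Theorem~\ref{thm:query-algs} in both directions with the cited positive and negative results on non-adaptive query algorithms from \cite{Chen2025,ten2024homomorphism}, and the paper gives no further argument. One caution on your setting-matching remark: with only $R[2]\angs{0}$ and active-domain semantics no element can be isolated, so the cited negative result cannot simply be ``read with domain equal to the active domain'', because the isolated-vertex property would become trivially false (on simple graphs) and the impossibility result would no longer apply. Your second option is the one that works: encode $G$ as $D_G=\{V(v)\mid v\in V(G)\}\cup\{R(u,v)\mid uv\in E(G)\}$, as in Example~\ref{ex:query-algs}. Then $|\Hom(F,D_G)|$ equals the homomorphism count of the underlying undirected graph of $F$ into $G$, and the impossibility result transfers directly.
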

We illustrate the theorem with an example. 
\begin{example}\label{ex:query-algs}
        Consider simple graphs (i.e., undirected graphs without loops), represented as databases over the schema $\scs=\{V[1]\angs{0},E[2]\angs{0}\}$. 
        The non-adaptive query algorithm $(\{C_3\},X)$
        where $C_3$ is the 3-cycle and $X=\{12n\mid n\geq 0\}$ accepts precisely the graphs containing an even number of 
        triangles. To see this, note that $C_3$ admits 6 homomorphisms to any given triangle. The same class of graphs is defined by 
        the following NRP:
        \[\begin{array}{lll}
          \text{Triangle}()\angs{\tsum} &\from& E(x,y)\land E(y,z)\land E(x,z) \land \text{True}_1() \\
          \text{TriangleOrZero}()\angs{\tsum} &\from& \text{Triangle}() \lor \text{True}_0() \\
          \text{Ans}()\angs{\mu} & \from& \text{TriangleOrZero}()
        \end{array}\]
        where $\mu(x)=1$ if $x$ is divisible by 12 and $\mu(x)=0$ otherwise. Note that a Triangle e-fact is only derived if the graph contains a triangle. To handle graphs without triangles we perform a disjunction with $\text{True}_0()$.  
    \end{example}

    \begin{remark}
In~\cite{ten2024homomorphism}, the authors also 
study \emph{non-adaptive left query algorithms over the Boolean semiring $\mathbb{B}$}. Such algorithms query homomorphism existence rather than homomorphism counts. A straightforward variation of the proof of Theorem~\ref{thm:query-algs} establishes a similar connection between these query algorithms and zero-ary NRPs with max aggregation.
\end{remark}

\section{Monadic and Frontier Guarded Programs}\label{sec:monadic-and-guarded}

In this section, we discuss two syntactic fragments of the class of NRPs, namely \e{monadic} and \e{frontier-guarded} NRPs, focusing on their connections to deep homomorphism networks~\cite{maehara2024deep,schonherr2026expressive}.

\subsection{Monadic Programs and Deep Homomorphism Networks}
\label{sec:DHN}

Monadic NRPs generalize Graph Neural Networks (GNNs), where node embeddings are updated by summing over local or global neighborhoods. 
Given a graph $G$ with nodes $V(G)$, edges $E(G) \subseteq V(G)\times V(G)$ and an embedding function $\lambda_i: V(G) \to \reals^d$, a layer of local-sum aggregation in a GNN produces a new embedding function:
\begin{align*}
    \lambda_{i+1}(u) = \rho(\lambda_i(u),\, \tsum\multiset{\lambda_i(v) \,\mid\,(u,v) \in E(G)})
\end{align*}
where $\rho$ is usually a parametrized function such as a $\reluFFN$. The same computation is performed by the following monadic NRP, where $\arity(E)=2\angs{0},\arity(\lambda_i)=1\angs{d}$:
\[\begin{array}{lll}
\text{AggrIsolated}(x) &\from &\Adom(x) \land \text{True}_{\tup 0}()\\
\text{AggrConnected}_i(x)\angs{\tsum} &\from &E(x,y) \land \lambda_i(y)\\
\text{Aggr}_i(x)\angs{\tsum} &\from &\text{AggrConnected}_i(x) \lor \text{AggrIsolated}(x)\\
\text{Concat}_i(x) &\from_\oplus &\lambda_{i}(x) \land \text{Aggr}_i(x)\\
\lambda_{i+1}(x)\angs{\rho} &\from &\text{Concat}_i(x)
\end{array}\]
Recall that concatenation $(\oplus)$ can be rewritten to product $(\odot)$ using Proposition~\ref{prop:NRM_prod_simulates_concat}.
We show in this section that monadic NRPs are more expressive than GNNs since they exactly capture the expressiveness of Deep Homomorphism Networks (DHN). 
DHNs are machine learning models that were originally defined on graphs with node embeddings, as an extension of GNNs~\cite{maehara2024deep}. They have recently been generalized to element-embedded databases \cite{schonherr2026expressive} where all elements (as opposed to all facts) have a numeric embedding. An element-embedded database 
$(D,\lambda)$ is a classical database with an embedding function $\lambda: \adom(D) \to \reals^d$, with $d \geq 0$. We use $D^a$ and $(D^a,\lambda)$ to denote pointed (element-embedded) databases with a distinguished element $a$, or $D^\bullet$ without explicit reference to this element. For the remainder of this section, we fix an input schema $\scs$ which for simplicity contains no zero-ary relations. We now define DHNs on databases as in \cite{schonherr2026expressive}:
\begin{definition}[Homomorphism Query]
    A homomorphism $h \in \Hom(F,D)$ is a function $h: \adom(F) \to \adom(D)$ such that $R(a_1,\dots,a_k) \in F$ implies $R(h(a_1),\dots,h(a_k)) \in D$.
    A homomorphism query
    is a pair $(F^\bullet, \mathbf{\mu})$, where $F^\bullet$ is a pointed classical database,
    and $\mathbf{\mu} = \{\mu^y : y \in \adom(F)\}$ is a set of transformation functions $\mu^y: \reals^d \to \reals^{d'}$ for some $d,d' \geq 0$. The result of evaluating $(F^\bullet,\mu)$ on pointed element-embedded database $(D^a,\lambda)$ with label dimension $d$ is the vector:
    \begin{align*}
        \res((F^\bullet,\mu),(D^a,\lambda)) := \tsum \bigmultiset{ \prod_{y \in \adom(F)} \mu_y(\lambda(h(y))) \mid h \in \Hom(F^\bullet, D^a)}
    \end{align*}
    Where $\tsum$ and $\prod$ denote element-wise sum and product, and $\Hom(F^\bullet,D^a)$ is the set of homomorphisms from $F$ to $D$ that map the distinguished element of $F$ to $a$.
\end{definition}
\begin{definition}[DHN layer]
    A Deep Homomorphism Network layer
    with input dimension $d \geq 0$ and output dimension $d' \geq 0$ is a pair $\calL = (((F^\bullet_1, \mathbf{\mu}_1), \dots, (F^\bullet_m, \mathbf{\mu}_m)), \rho)$ where each $(F^\bullet_i,\mathbf{\mu}_i)$ is a
    homomorphism query with input dimension $d$ and output dimension $d_i$, and $\rho: \reals^{d_1} \times \dots \times \reals^{d_m} \to \reals^{d'}$ is a transformation function. 
    Given an element-embedded database $(D,\lambda)$,
    $\calL$ produces an embedding function mapping each $a \in \adom(D)$ to:
    \begin{align*}
        \calL(D,\lambda)(a) & = \rho(\displaystyle\bigoplus_{1 \leq i \leq m} \left(\res((F^\bullet_i, \mu_i),(D^a,\lambda)) \right))
    \end{align*}
\end{definition}
\begin{definition}[DHN]
    A Deep Homomorphism Network
    is a sequence  $\calN=(\mathcal{L}_1, \ldots ,\mathcal{L}_n)$ of DHN layers where for $i = 1, \dots, n-1$ the output dimension of $\calL_i$ equals the input dimension of $\calL_{i+1}$. Applying $\calN$ to a labeled database $(D,\lambda)$ produces the labeling $(\calL_{n} \circ \cdots\circ \calL_1)(D,\lambda)$, where we write $(\calL \circ \calL')(D,\lambda)$ denoting $\calL(D,\calL'(D,\lambda))$.
    The input and output dimensions of $\calN$ are those of $\calL_1$ and $\calL_n$, respectively.
\end{definition}
\begin{definition}[DHN classifier]
A DHN $\calN$ with output dimension $d$ and a $d$-ary acceptance policy $P: \reals^d \to \{0,1\}$ together define a binary node classifier $(\calN,P)$. We call this a \emph{DHN classifier}.
\end{definition}
DHNs generalize the standard message-passing over edges of GNNs to complex structural patterns. The practical value of this generalization is demonstrated in \cite{irmai2026triangle}, where triangle-based message passing is used to outperform state-of-the-art heuristics for the multicut problem while maintaining feasible runtimes.
The expressive power of DHNs was studied in \cite{schonherr2026expressive}, where, in particular, it was shown that DHNs are strictly more expressive than GNNs, even when the latter are augmented with homomorphism count as additional node features, as in~\cite{barcelo2021graph}.

We formalize how DHNs compute queries over e-databases. Let $P_1, \dots, P_n$ be the monadic relations in $\scs$ with $\arity(P_i) = 1\angs{d_i}$. Given an e-database $D$, we define its element-embedded representation as $\varepsilon(D)=(D',\lambda')$. Here $D'$ is a classical database containing all facts in $D$ without their embeddings and the $1\angs{0}$-ary relation `$\Adom$' that contains all elements in the active domain. The embedding function $\lambda'$ stores the embeddings of monadic relations in $D$:
$$\lambda'(a) = \displaystyle\bigoplus_{1 \leq i \leq n} \begin{cases}
    (1)\oplus\tup e &\text{ if } P_i(a)\angs{\tup e} \in D\\
    (0)\oplus\tup 0^{(d_i)} &\text{ otherwise}
\end{cases}$$
$\calN$ then computes the following unary embedded query:
\begin{align*}
  \calQ_\calN(D) = \{(a)\angs{\tup e} \,\mid\, \calN(\varepsilon(D))(a) = \tup e\}  
\end{align*}
Similarly, a DHN classifier $(\calN,P)$ computes the flat query $\calQ_{(\calN,P)}$ containing all elements of which the output embedding is accepted by $P$. Since DHNs work on element-embedded databases they always compute an embedding for each element in the database. We call a $k\angs{d}$-ary embedded query $\calQ$ \emph{total} if for every database $D$ and tuple $\tup a \in \adom(D)^k$ there is an $\tup e \in \reals^d$ such that $\tup a \angs{\tup e} \in \calQ(D)$.
\begin{restatable}{theorem}{ThmNRPTotalDHN}
\label{Thm:NRPMatchesDHN}
For every total unary embedded query $\calQ$, 
\begin{center}
     $\calQ$ is computed by a DHN $\calN$
    \quad iff \quad $\calQ$ is computed by a monadic NRP $\Pi$.
\end{center}
Moreover, for each class of transformation functions $\mathcal{F}$ including $\reluFFNs$, $\calN$ has transformations in $\calF \cup \calF_\times$ if and only if $\Pi$ has transformations in $\calF$.
\end{restatable}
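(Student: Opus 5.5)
We prove the two directions separately by induction: one over the layers of the DHN, the other over the rules of the monadic NRP. Throughout, we use Propositions~\ref{prop:NRM_prod_simulates_concat} and~\ref{prop:NRPproductCombSimulatesProductTrans} to freely use $\oplus$ combination and $\calF_\times$ transformations on the NRP side.

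\emph{DHN $\Rightarrow$ NRP.} We maintain the invariant that after simulating $i$ layers there is a monadic IDB $\lambda_i(x)$ containing every active-domain element with embedding $(\calL_i\circ\cdots\circ\calL_1)(\varepsilon(D))(a)$.

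For $\lambda_0$, for each monadic $P_j$ we derive, via a transformation, $P_j'(x)$ with embedding $(1)\oplus\tup e$. We also derive a default fact $Z_j(x)$ from $\Adom(x)\land\text{True}_{\tup 0}()$. A disjunction of the two yields the correct block, since the sum has exactly one nonzero summand. We then concatenate the blocks over $j$.

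For a homomorphism query $(F^\bullet,\mu)$ with distinguished element $y_0$:
\begin{itemize}
\item for each $y\in\adom(F)$ we apply the transformation $\mu^y$ to $\lambda_i$, obtaining a monadic IDB $M_y(x)$;
\item we write the conjunction rule $Q(y_0)\angs{\tsum}\from_\odot \bigwedge_{R(\tup z)\in F}R(\tup z)\land\bigwedge_{y}M_y(y)$.
\end{itemize}
Homomorphisms of this body correspond exactly to $\Hom(F^\bullet,D^a)$, so $\odot$ followed by $\tsum$ reproduces $\res$. The EDB atoms of $F$ are classical, since $F$ is a classical database, so the rule is monadic. As $F$ may mention $\Adom$, we treat that atom as the IDB above. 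To make $Q$ total we disjoin it with a zero default. We concatenate the $m$ queries and apply $\rho$. Any $\calF_\times$ transformations $\mu^y$ are removed by Proposition~\ref{prop:NRPproductCombSimulatesProductTrans}.

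\emph{NRP $\Rightarrow$ DHN.} This is the main obstacle. NRP IDBs are partial, may be zero-ary, and are produced in an arbitrary interleaved order. We normalise them as follows.
\begin{itemize}
\item Each monadic IDB $R[1]\angs{d}$ is encoded as a total element embedding $(\chi_R)\oplus\tup e$, where the indicator $\chi_R\in\{0,1\}$ records membership and $\tup e$ is zero when $\chi_R=0$.
\item Each zero-ary IDB is broadcast to every element as a global value, with its own indicator.
\end{itemize}
The DHN carries, in each element's embedding, the concatenation of the encodings of all IDBs produced so far, and simulates each rule by one layer that copies old blocks and appends a new one.

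We simulate each rule type as follows.
\begin{itemize}
\item \emph{Transformation} $R(x)\angs{\mu}\from R_0(x)$: we compute $\chi\cdot\mu(\tup e)$ together with $\chi$. Multiplication by the indicator is a product, allowed through $\calF_\times$.
\item \emph{Disjunction}: the summed embedding is simply the sum of the blocks. The indicator is $\max$ of the indicators, computed by the $\reluFFN$ $\min(1,\sum\chi_j)$.
\item \emph{Monadic conjunction} with head variable $x$: we take the pattern $F^\bullet$ consisting of the classical EDB atoms, pointed at $x$. We use $\mu^y$ to extract, at each variable $y$, the product of the embedding blocks and indicators of the monadic IDBs/EDBs on $y$. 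The sum over homomorphisms then gives the embedding, and applying the same with embeddings replaced by indicators gives the match count $c$; the new indicator is $\min(1,c)$. Disconnected parts of the pattern are allowed because DHN patterns are arbitrary databases. Equalities forced by repeated variables are handled by the homomorphism semantics.
\item \emph{Zero-ary heads}: we use a pattern with a fresh isolated distinguished point plus the body, which computes the global sum at every element.
\end{itemize}
Since $\calQ$ is total and unary, the final relation's indicator is always $1$, so a final layer drops the indicator.

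For the transformation-class claim, the products needed for indicator masking and for the $\mu^y$ lie in $\calF_\times$, while everything else is affine, ReLU, or the original NRP transformations. Conversely, Proposition~\ref{prop:NRPproductCombSimulatesProductTrans} removes $\calF_\times$ in the first direction.

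The delicate points are the indicator bookkeeping, namely ensuring absent facts contribute zero so that products and sums agree with NRP semantics, and correctly handling zero-ary atoms inside conjunctions. Each such atom simply contributes a broadcast factor in $\mu^{y_0}$.
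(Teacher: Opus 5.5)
Your proposal is correct and follows essentially the paper's proof in both directions. For DHN$\Rightarrow$NRP you use transformation rules for the $\mu^y$, one $\odot$/$\tsum$ conjunction rule over the pattern and a zero default; for NRP$\Rightarrow$DHN you simulate each rule by a layer producing an indicator-prefixed block $(\chi)\oplus\tup e$ and stack the blocks. The only difference is that you broadcast zero-ary IDBs inside the DHN instead of first removing them by the syntactic normalization of Lemma~\ref{lem:remove_0ary_from_monadic}, which amounts to the same construction.

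One detail to fix: as in the paper, the pattern for a conjunction rule must contain a fact $\Adom(y)$ for every body variable $y$, including your fresh distinguished point. Otherwise variables occurring only in monadic or zero-ary atoms, such as $x$ in $\text{Concat}_i(x)\from_\oplus\lambda_i(x)\land\text{Aggr}_i(x)$, do not occur in $F^\bullet$ at all.
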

Recall that $\calF_\times$ contains element-wise multiplications over subspaces, and that we fix $\tsum$ aggregation and $\odot$ as combination, which can simulate concatenation $(\oplus)$ (Proposition~\ref{prop:NRM_prod_simulates_concat}). 
\begin{example}
\label{ex:triangle}
    This example concerns $\reals$-labeled simple graphs (i.e., undirected graphs without loops and $1$-dimensional embeddings associated to vertices), represented by e-databases over the schema $\scs=\{V[1]\angs{1},E[2]\angs{0}\}$. Let $\calQ_{\triangle}$ be the  embedded query of arity $1\angs{1}$ given by:
    \begin{align*}\calQ_{\triangle}(D)=\{u\angs{r}\mid~ & V(u)\angs{r_1}\in D, \\& r=\tsum \multiset{r_1\cdot r_2\cdot r_3\mid \{V(v_2)\angs{r_2}, V(v_3)\angs{r_3}, E(u,v_2),E(v_2,v_3),E(v_3,u)
    \}\subseteq D}\end{align*}
    Thus, $\calQ_\triangle$ takes the sum of products of embeddings for all triangles containing $u$, yielding embedding $(0)$ if no such triangle exists. 
    While $\calQ_\triangle$ is not computed by a GNN \cite{morris2019weisfeiler}\cite{xu2019powerful}, it is computed by a single DHN layer that queries homomorphisms from the triangle, as well as by the following monadic NRP:
    \[\begin{array}{lll}
          \text{Zero}(x) &\from &\Adom(x) \land \text{True}_0()\\
          \text{Triangle}(x)\angs{\tsum} &\from_\odot& E(x,y)\land  E(y,z)\land  E(z,x)\land P(x)\land P(y)\land P(z)\\
          \calQ_\triangle(x) &\from &\text{Triangle}(x) \lor \text{Zero}(x)
    \end{array}\]
\end{example}
DHNs also represent non-total queries computed by monadic NRPs, using a designated Boolean:
\begin{restatable}{theorem}{ThmNonTotalQueriesDHN}
\label{Thm:NRPtoDHN}
Let $\calQ$ be an embedded query with arity $1 \angs{d}$ computed by a monadic NRP $\Pi$.
Then there exists a DHN $\calN$such that for each e-database $D$ and $a \in \adom(D)$:
\begin{align*}
    \calN(\varepsilon(D))(a) = \begin{cases}
            (1)\oplus \tup e & \text{ if } a\angs{\tup e} \in \calQ(D) \\
            (0)\oplus\tup 0^{(d)} &\text{ if for all } \tup e \in \reals^d,  a\angs{\tup e} \not\in\calQ(D)
        \end{cases}
\end{align*}
If\, $\Pi$ has transformations in $\mathcal{F}$, including $\reluFFNs$, $\calN$ has transformations in $\mathcal{F} \cup \calF_\times$.
\end{restatable}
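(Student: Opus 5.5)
We prove Theorem~\ref{Thm:NRPtoDHN} by induction over the rules of $\Pi=(\Psi_1,\dots,\Psi_n)$. We maintain a single DHN whose element labelling encodes, for every monadic relation $R$ (EDB or IDB) derived so far, a block $(b_R)\oplus\tup e_R$. Here $b_R(a)=1$ if $R(a)\angs{\tup e}$ is derived, in which case $\tup e_R(a)=\tup e$; otherwise the block is $(0)\oplus\tup 0$. Zero-ary relations $R$ of arity $0\angs{d}$ are encoded the same way, but replicated at every element (a global value). The base case is exactly the representation $\varepsilon(D)$, which already has this form for the monadic EDBs. The EDBs of arity $k\angs{0}$ are available as classical facts, and hence as pattern atoms in homomorphism queries. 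Each rule adds one or a few DHN layers. Every layer must also copy all previous blocks forward, which is done by the homomorphism query with the single-element pattern $\Adom(y)$ and identity transformations. At the end we project onto the block of the target relation.

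\emph{Transformation rules.} A rule $R(x)\angs{\mu}\from R_0(x)$ is handled by a single-element query returning $(b,\tup e)$, followed by $\rho$ producing $(b,\,b\cdot\mu(\tup e))$. The gating by $b$ is a product, which is why $\calF_\times$ is allowed in $\calN$. The zero-ary case is identical.

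\emph{Disjunction rules.} For $R(x)\from R_1(x)\lor\dots\lor R_\ell(x)$, the embedding is $\sum_i\tup e_{R_i}$, since absent relations contribute $\tup 0$. The flag is $\max_i b_i$, which is computable by a $\reluFFN$ on Boolean inputs, e.g.\ $1-\relu(1-\sum_i b_i)$.

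\emph{Conjunction rules.} This is the main step. A monadic head $R(x)\angs{\tsum}\fromo R_1(\tup x_1)\land\dots\land R_\ell(\tup x_\ell)$ is translated into a homomorphism query $(F^\bullet,\mu)$. The pattern $F$ is the canonical database of the body atoms of content arity at least $2$ (all of embedding dimension $0$, by monadicity), together with an $\Adom$ atom for every variable so that all variables are covered. It is pointed at the head variable $x$. For each variable $y$, the map $\mu^y$ multiplies the flags $b_{R_i}$ of all monadic body atoms $R_i(y)$ and takes the $\odot$-product of their embedding blocks, padded with $1$s to length $d$. This is a function in $\calF_\times$ composed with affine maps, and the flag product is included as an extra output coordinate. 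Zero-ary body atoms are read off their replicated blocks at any variable, e.g.\ the point.

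Taking the product over $y$ and summing over homomorphisms gives, in the embedding coordinates, exactly $\sum_h\odot(\tup e_{1,h},\dots,\tup e_{\ell,h})$. In the flag coordinate it gives the number of valid matches. A subsequent $\rho$ turns the count into the flag $\min(1,\text{count})$ via $\relu$, since counts are natural numbers. The same $\rho$ gates the embedding by that flag; no gating is strictly needed, because the embedding is already $\tup 0$ when there is no match. Crucially, homomorphisms of $F$ with absent monadic facts contribute $0$ in both the flag and the embedding, because the flag product zeroes them.

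\emph{Zero-ary heads and the main obstacle.} The main obstacle is zero-ary heads, since a DHN query is always pointed and only sums over homomorphisms fixing the distinguished element. For a zero-ary head I would compute the monadic version with head variable $z$ equal to some body variable, or with an added $\Adom(z)$ atom. A second layer then uses a pattern with a fresh point disjoint from a copy of an $\Adom(y)$ atom, which is a disconnected pattern. Summing over $y$ gives the global total, which is replicated at every point. If the body has no variables, the global sum is taken directly over the flags. Counts must be normalised carefully so that the additional $\Adom$ atom does not multiply the result. I plan to avoid this by projecting onto a body variable instead of introducing a fresh one whenever the body is nonempty.

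Finally, the output layer emits the block $(b_{R_n})\oplus\tup e_{R_n}$, which matches the required form $(1)\oplus\tup e$ or $(0)\oplus\tup 0^{(d)}$. All transformations used lie in $\calF\cup\calF_\times$: $\reluFFN$ gadgets, affine maps, products, and the original transformations $\mu$ of $\Pi$.
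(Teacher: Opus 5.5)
Your proposal is correct and follows essentially the paper's proof. Both carry each monadic relation as a flag-plus-embedding block, turn conjunction rules into homomorphism queries whose patterns contain only the non-monadic body atoms and $\Adom$ atoms (with monadic atoms handled multiplicatively inside the $\mu^y$), and run layers in parallel to carry earlier blocks forward. The only real difference is that the paper first removes zero-ary IDBs syntactically: a zero-ary head becomes a fresh, disconnected head variable, which gives the replicated global sum in a single layer with no overcounting, whereas you handle such heads inline with an extra summation layer.

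One detail needs to be stated explicitly. The product over pattern variables is coordinate-wise, so each $\mu^y$ must multiply the embedding coordinates themselves by the flag product (e.g.\ pad with the flags instead of with $1$s), not just output the flag as a separate coordinate. Otherwise homomorphisms of $F$ that violate a monadic atom of embedding dimension below $d$ still contribute to the sum. Similarly, a variable-free body should be evaluated locally from the replicated blocks rather than by a global sum, which would overcount by $|\adom(D)|$.
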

As a corollary of Theorems~\ref{Thm:NRPMatchesDHN} and~\ref{Thm:NRPtoDHN}, the flat queries computed by DHN classifiers exactly match those computed by gated monadic NRPs:
\begin{restatable}{corollary}{CorDHNvsNRPRelational}
\label{cor:DHN_matches_NRP_relational}
    For every unary flat query $\calQ$,
\begin{center}
     $\calQ$ is computed by a DHN classifier
    \quad iff \quad $\calQ$ is computed by a gated monadic NRP.
\end{center}
Moreover, for each class of transformation functions $\calF$ including $\reluFFNs$, the DHN has transformations in $\calF \cup \calF_\times$ if and only if\, the NRP has transformations in $\calF$. 
\end{restatable}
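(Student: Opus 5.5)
We prove the two directions separately, reducing each to Theorems~\ref{Thm:NRPMatchesDHN} and~\ref{Thm:NRPtoDHN}.

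\emph{From DHN classifiers to gated monadic NRPs.} Let $(\calN,P)$ be a DHN classifier with transformations in $\calF\cup\calF_\times$. The embedded query $\calQ_\calN$ is total by definition, since a DHN assigns an embedding to every element of $\adom(D)$. Theorem~\ref{Thm:NRPMatchesDHN} therefore gives a monadic NRP $\Pi$ with transformations in $\calF$ and $\calQ_\Pi=\calQ_\calN$. Hence the gated NRP $(\Pi,P)$ accepts exactly those $a$ whose DHN output embedding is accepted by $P$, so $\calQ_{(\Pi,P)}=\calQ_{(\calN,P)}$. This direction only requires that both formalisms may use the same acceptance policy $P$.

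\emph{From gated monadic NRPs to DHN classifiers.} Let $(\Pi,P)$ be a gated monadic NRP with transformations in $\calF$, and let $R_n$ be its output relation of arity $1\angs{d}$. Theorem~\ref{Thm:NRPtoDHN} gives a DHN $\calN$ with transformations in $\calF\cup\calF_\times$ that outputs $(1)\oplus\tup e$ when $a\angs{\tup e}\in\calQ_\Pi(D)$ and outputs $\tup 0^{(d+1)}$ otherwise.

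We need a policy $P'$ on $\reals^{d+1}$ with $P'((1)\oplus\tup e)=P(\tup e)$ and $P'(\tup 0)=0$. The natural choice is $P'(b\oplus\tup e)=1$ iff $b>0$ and $P(\tup e)=1$.

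\textbf{Main obstacle.} The policy class is fixed; by default $P$ is ``minimum entry strictly positive.'' For that default, $P'$ is again the default policy on $\reals^{d+1}$: it requires $b=1>0$ and all $e_i>0$, and it rejects $\tup 0$. So the construction works directly in this case.

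For a general policy class, we instead absorb the gating into the network. We append a final DHN layer whose homomorphism query is the single-point pattern with identity $\mu$, followed by a transformation $\rho$. This layer maps $(b)\oplus\tup e$ to a value on which $P$ behaves correctly. Concretely, for the default-style policy we use the $\calF_\times$ map $b\cdot\tup e$ together with a ReLU-FFN shift, so that $b=0$ yields a rejected vector and $b=1$ yields $\tup e$ unchanged. This shows that the same policy $P$ (or the paper's fixed policy) can serve both models.

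\emph{Transformation classes.} The ``moreover'' clause follows because both theorems preserve the classes exactly as stated. The only extra transformations introduced are this final $\calF_\times$ multiplication and a ReLU-FFN, and both are allowed on the DHN side.
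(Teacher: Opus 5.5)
Your proof is correct and follows the paper's argument. The forward direction applies Theorem~\ref{Thm:NRPMatchesDHN} to the total query $\calQ_\calN$ and keeps the same policy $P$. The converse applies Theorem~\ref{Thm:NRPtoDHN} and uses exactly the paper's policy $P'$: the first coordinate is positive and $P$ accepts the remaining coordinates.

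The ``main obstacle'' paragraph is unnecessary, since DHN classifiers may use arbitrary acceptance policies; your observation that $P'$ is again the default policy when $P$ is the default is correct. However, your claim that the same $P$ can always serve both models via a final gating layer is false in general. If $P$ rejects no vector (e.g.\ $P\equiv 1$), the DHN classifier would also accept non-derived elements, so that claim should be dropped.
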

Indeed, suppose a flat query is computed by DHN classifier $(\calN,P)$. Then $\calN$ computes a total query which, by Theorem~\ref{Thm:NRPMatchesDHN} is also computed by a monadic NRP $\Pi$; thus $(\Pi,P)$ computes the same flat query. Conversely, a flat query computed by gated monadic NRP $(\Pi,P)$ is the result of selecting 
embeddings from an embedded query. By Theorem~\ref{Thm:NRPtoDHN} there exists a DHN $\calN$ that produces the same embeddings appended with the value $1$. Thus, $(\calN,P')$ computes the same flat query, where $P'(\tup e)=1$ if and only if the first value of $\tup e$ is positive and $P(\tup e_{>1})=1$, where $\tup e_{>1}$ is obtained by removing the first value from $\tup e$.

It was shown in \cite{schonherr2026expressive} that DHNs express
all queries in the \emph{unary quantifier-alternation fragment (UQAFO)}, which consists of first-order formulas $\varphi(\bar x)$ 
generated by the following grammar:
\[
\begin{array}{r@{\;}c@{\;}l}
\varphi(\bar{x}) &::=& \varphi_\exists(\bar{x})\mid \varphi_\forall(\bar{x}) \mid \varphi(\bar{x}) \circ \varphi(\bar{x}) 
\\[1mm]
\varphi_\exists(\bar{x}) &::=& R(\bar{x})\mid \neg R(\bar{x})\mid \varphi_\exists(\bar{x})\circ \varphi_\exists(\bar{x})
\mid x_i=x_j\mid x_i\neq x_j\mid \exists\bar{y}\,\varphi_\exists(\bar{x}, \bar{y})\mid \varphi_\forall(x_i) \\[1mm]
    \varphi_\forall(\bar{x}) &::=& R(\bar{x})\mid \neg R(\bar{x})\mid \varphi_\forall(\bar{x})\circ \varphi_\forall(\bar{x})\mid 
    x_i=x_j\mid x_i\neq x_j\mid \forall\bar{y}\,\varphi_\forall(\bar{x}, \bar{y})\mid \varphi_\exists(x_i)
  \end{array}
  \]
  where $\circ \in \{\lor,\wedge\}$ and $\varphi(\bar{x})$ only has free variables in $\bar{x}$. Note that quantifier alternation is only allowed when the quantified subformula has at most one free variable. For example,
  $\exists x\forall y \neg E(x,y)$ is allowed because $\forall y \neg E(x,y)$ has a single free variable, but $\exists x\exists y\forall z (E(x,z)\lor E(y,z))$ is not. 
  It was shown in \cite[Theorem 12]{schonherr2026expressive} that, for each UQAFO-formula $\varphi(x)$ in one free variable, there exists a DHN that derives element-embedding $0$ when $\varphi(x)$ is false and $1$ when $\varphi(x)$ is true. It was further shown that DHNs express first-order properties that lie outside UQAFO.
  Theorem~\ref{Thm:NRPMatchesDHN} allows us to transfer these results about DHNs to NRPs. This yields:
  \footnote{The claim for Boolean flat queries follows from the one for unary flat queries: we treat an UQAFO-sentence $\varphi$ as 
a formula with one (unused) free variable, and extend the resulting program with 
a final $\reluFFN$-transformation rule, followed by an aggregation rule of the form $\text{GlobalAns}()\angs{\tsum}\from \text{LocalAns}(x)$, and a disjunction rule for the empty domain case.}
  \begin{corollary}
  \label{cor:NRP_strict_subsumes_UQAFO}
    The Boolean and unary flat queries computed by gated monadic NRPs strictly subsume those definable by UQAFO-formulas, with a separating example on classical databases. 
  \end{corollary}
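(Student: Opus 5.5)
The corollary follows by combining Corollary~\ref{cor:DHN_matches_NRP_relational} with the results of \cite{schonherr2026expressive} on DHNs. There are two parts: inclusion and strictness.

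\emph{Inclusion, unary case.} Let $\varphi(x)$ be a UQAFO-formula in one free variable. By \cite[Theorem 12]{schonherr2026expressive} there is a DHN $\calN$ that, on every classical database, assigns embedding $1$ to elements satisfying $\varphi$ and $0$ to the others. We pair $\calN$ with the acceptance policy $P(e)=1$ iff $e>0$, obtaining a DHN classifier that computes the unary flat query $\varphi$. Corollary~\ref{cor:DHN_matches_NRP_relational} then yields a gated monadic NRP computing the same query.

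Two details need checking. First, the transformation class: the DHN from \cite{schonherr2026expressive} uses transformations in $\reluFFNs\cup\calF_\times$, so the corollary gives an NRP with $\reluFFN$ transformations. Second, the input encoding: the NRP acts on e-databases, while the DHN acts on $\varepsilon(D)$, which adds the relation $\Adom$. For a classical $D$, $\varepsilon(D)$ has constant labels and $\Adom$ holds everywhere. Hence the DHN's output on $\varepsilon(D)$ agrees with its output on $D$ itself, up to a fixed input labeling, which a first transformation absorbs.

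\emph{Inclusion, Boolean case.} For a sentence $\varphi$ we argue as in the footnote. Treat $\varphi$ as $\varphi(x)$ with $x$ unused, and take the monadic NRP that computes the $0/1$ total embedding $\text{LocalAns}(x)$ (available via Theorem~\ref{Thm:NRPMatchesDHN}). Then add three rules. The first is $\text{GlobalAns}()\angs{\tsum}\from\text{LocalAns}(x)$, which counts the elements satisfying $\varphi$. The second is a disjunction with $\text{True}_0()$, which handles the case where $\text{GlobalAns}$ would otherwise be underived. The third is a $\reluFFN$ $\min(1,\cdot)$, i.e.\ $z\mapsto z-\relu(z-1)$, so the output is positive exactly when $\varphi$ holds on a nonempty domain. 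The empty-domain case needs separate treatment: there the output is $0$, so the gated NRP must accept empty databases exactly when $\varphi$ is true on the empty structure. If $\varphi$ is true there, we fix this with an extra disjunct, for example by outputting $1-[\text{domain nonempty}]\cdot(1-\text{value})$, using a $0/1$ flag for nonemptiness obtained in the same way.

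\emph{Strictness.} Here we use the other result of \cite{schonherr2026expressive}: DHNs express first-order properties outside UQAFO, witnessed on classical databases. Take such a query $q$ (unary, or Boolean via the same wrapping) computed by a DHN classifier. Corollary~\ref{cor:DHN_matches_NRP_relational} transfers $q$ to a gated monadic NRP, and since $q$ is not UQAFO-definable, the inclusion is strict.

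\emph{Main obstacle.} The hardest step is bookkeeping, not new mathematics. We must verify that the separating example of \cite{schonherr2026expressive} is stated in the matching form, namely as a flat query on plain classical databases with the $0/1$ output convention, and in the arity needed. We must also verify that it fits the encoding $\varepsilon$ used in Corollary~\ref{cor:DHN_matches_NRP_relational}. If the separation in \cite{schonherr2026expressive} is only for unary queries, the Boolean separation needs an extra argument: apply the global-sum wrapping to a sentence such as $\exists x\,\psi(x)$, and show that this sentence is still not UQAFO-definable. That nondefinability argument is what I would look up or redo first.
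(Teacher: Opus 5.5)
Your route is the paper's. You transfer \cite[Theorem 12]{schonherr2026expressive} and the DHN-versus-UQAFO separation through the DHN/NRP correspondence. You invoke Corollary~\ref{cor:DHN_matches_NRP_relational}, whereas the paper applies Theorem~\ref{Thm:NRPMatchesDHN} directly, and that corollary is derived from it. Boolean queries are handled, as in the footnote, by the global $\tsum$-wrapping with a $\text{True}_0()$ disjunct for the empty database. Your treatment of the empty-domain case is more explicit than the footnote's.

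\emph{Boolean strictness.} You left this open, and it is a real loose end as written. Wrapping a unary separating query $q(x)$ yields $\exists x\,q(x)$, which need not lie outside UQAFO. However, you do not need the cited first-order separation for the statement as given. Every UQAFO-formula is first-order, so any non-FO-definable Boolean query computed by a gated monadic NRP is a separating example on classical databases. Two examples:
\begin{itemize}
\item The even-number-of-triangles program of Example~\ref{ex:query-algs}, which is zero-ary and hence monadic.
\item If you want only \reluFFN transformations, the query $|P|>|Q|$. Count each of $P$ and $Q$ into a zero-ary IDB via $\text{True}_1()$, add a $\text{True}_0()$ disjunct, concatenate, and apply $(a,b)\mapsto a-b$.
\end{itemize}
Neither query is FO-definable, by a standard Ehrenfeucht--Fra\"{\i}ss\'e argument. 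The same observation settles unary strictness. If a UQAFO-formula $\varphi(x)$ defined $\{a\in\adom(D)\mid |P|>|Q|\}$, then $\exists x\,\varphi(x)$ would define $|P|>|Q|$ on nonempty databases. What the result of \cite{schonherr2026expressive} adds beyond this is that the separation already occurs inside FO.

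\emph{Minor slip.} For classical $D$, the labels of $\varepsilon(D)$ are not constant; they record membership in the monadic relations. Your remedy still works, because those facts are also present in $D'$. A first layer can therefore overwrite every label with whatever input the cited DHN expects.
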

  Similarly, results in~\cite{maehara2024deep} and \cite[Theorem 13]{schonherr2026expressive}, combined with Theorem~\ref{Thm:NRPMatchesDHN}, imply:
  \begin{corollary}
        The unary flat queries computed by gated monadic NRPs strictly subsume those computed by GNN-classifiers augmented by finitely many homomorphism-count features (as in~\cite{barcelo2021graph}), with a separating example on unembedded simple graphs.
\end{corollary}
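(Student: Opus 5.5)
The plan is to reduce the statement to Theorem~\ref{Thm:NRPMatchesDHN} together with the known results on DHNs from~\cite{maehara2024deep} and \cite[Theorem 13]{schonherr2026expressive}. The statement has two parts: inclusion and strictness.

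\emph{Inclusion.} Take a GNN classifier augmented with finitely many homomorphism-count features, say for pointed patterns $F_1^\bullet,\dots,F_m^\bullet$, and an acceptance policy $P$. We first observe that such a model is itself a DHN. A first DHN layer uses the homomorphism queries $(F_i^\bullet,\mu_i)$, where each $\mu_i^y$ is the constant function $1$ for non-distinguished $y$, which yields $|\Hom(F_i^\bullet,D^a)|$. It also uses the one-element pattern with the identity transformation, which passes through the current label. Each subsequent GNN layer is the DHN layer whose queries are the single-point pattern (the node's own label) and the pointed edge pattern $E(x,y)$ with $\mu^x\equiv 1$ and $\mu^y=\mathrm{id}$ (the neighbour sum), followed by $\rho$. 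This matches the local-sum recurrence displayed at the start of this subsection. Concatenation within a layer is already built into the DHN definition.

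So GNN classifiers with homomorphism-count features are DHN classifiers. Corollary~\ref{cor:DHN_matches_NRP_relational} then turns them into gated monadic NRPs, and gives that $\reluFFN$ transformations suffice whenever the GNN uses $\reluFFN$s. Alternatively, one can write the monadic NRP directly, extending the GNN-as-NRP program above with one conjunction rule per $F_i$, built as in Example~\ref{ex:triangle} but with the constant-one embeddings $\text{True}_1()$ and a disjunction with a zero relation to obtain totality.

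\emph{Strictness.} We invoke \cite[Theorem 13]{schonherr2026expressive}, building on~\cite{maehara2024deep}. It gives a unary flat query on unembedded simple graphs that some DHN classifier computes but no GNN classifier augmented with finitely many homomorphism-count features computes. The intuition is that nested DHN layers can count patterns whose vertices carry previously computed, non-local information, which a fixed finite set of counts fed into $1$-WL-bounded message passing cannot separate. By Corollary~\ref{cor:DHN_matches_NRP_relational}, that query is computed by a gated monadic NRP. Since simple graphs are classical e-databases over $\{V[1]\angs{0},E[2]\angs{0}\}$, the separating example lives on unembedded simple graphs, as claimed.

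The main obstacle is bookkeeping rather than mathematics: making sure the cited separation is stated for exactly the model class in question. That means checking that the cited result uses finitely many count features, sum aggregation, and a node-level acceptance policy compatible with ours, and that the transfer through the element-embedded representation $\varepsilon(D)$ (which adds the $\Adom$ relation and indicator bits) changes neither the model classes nor the queries on simple graphs. If the cited separation is stated only for graph classes, we would convert it into a unary query by a global aggregation rule, as in the footnote to Corollary~\ref{cor:NRP_strict_subsumes_UQAFO}.
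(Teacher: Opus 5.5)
Your proposal is correct and follows essentially the same route as the paper. The paper derives the corollary by combining the results of~\cite{maehara2024deep} and \cite[Theorem 13]{schonherr2026expressive} (DHNs subsume, and strictly exceed, GNNs augmented with homomorphism-count features) with Theorem~\ref{Thm:NRPMatchesDHN}, i.e.\ via Corollary~\ref{cor:DHN_matches_NRP_relational}; the only difference is that you spell out the inclusion half by explicitly simulating the augmented GNN with DHN layers, whereas the paper takes that direction from the cited works too.
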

Here, simple graphs are undirected graphs without loops. Finally, recall from Section~\ref{sec:zero-ary} that the Boolean flat query $\exists x\forall y \neg E(x,y)$ is not computed by a zero-ary NRP. Since this first-order sentence belongs to UQAFO, we get:
\begin{corollary}
    The class of Boolean flat queries computed by gated monadic NRPs strictly subsumes those computed by gated zero-ary NRPs, with a separating example on unembedded simple graphs.
\end{corollary}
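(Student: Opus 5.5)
Both inclusions follow by combining the monadic fragment with results stated earlier; the only real content is a concrete separating query.

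\emph{Inclusion.} Every Boolean flat query computed by a gated zero-ary NRP is also computed by a gated monadic NRP. By definition, a zero-ary NRP has only zero-ary IDBs, and its rule bodies use only zero-ary relations or relations of arity $k\angs{0}$. These are exactly the conditions of monadicity with no monadic IDBs. So a zero-ary NRP is syntactically a monadic NRP, and the same acceptance policy works. No translation is needed; if one prefers, one can also route through Theorem~\ref{thm:query-algs}.

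\emph{Strictness.} I take the Boolean flat query $\varphi = \exists x\forall y\,\neg E(x,y)$ on unembedded simple graphs.
\begin{enumerate}
\item \emph{Not zero-ary.} By the corollary to Theorem~\ref{thm:query-algs}, $\varphi$ is not computed by any gated zero-ary NRP. This rests on the result of Chen et al.\ that the class it defines admits no non-adaptive query algorithm.
\item \emph{Monadic.} I check that $\varphi$ is a UQAFO sentence. The inner formula $\forall y\,\neg E(x,y)$ is a $\varphi_\forall(x)$ with one free variable, so it may be used inside $\exists x$ as a $\varphi_\exists$. Corollary~\ref{cor:NRP_strict_subsumes_UQAFO}, with the Boolean case handled via its footnote, then gives a gated monadic NRP computing $\varphi$.
\end{enumerate}

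\emph{Main obstacle.} The one point needing care is that the negative result really applies to the restricted class of simple graphs. Chen et al.'s non-admissibility result is stated for graphs, so I would confirm that their witnesses are simple graphs, i.e.\ undirected and loopless.

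If an independent argument is wanted for the monadic side, I would build the program directly. First compute the degree of each vertex with $\text{Deg}(x)\angs{\tsum}\from E(x,y)\land\text{True}_1()$, disjoined with $\text{Zero}(x)$ to cover isolated vertices. Then apply a $\relu$ transformation $1-\min(1,\cdot)$ to obtain an isolated-vertex indicator. Finally, sum this indicator into a zero-ary relation, disjoined with $\text{True}_0()$, and accept when the result is positive.
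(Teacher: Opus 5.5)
Your proposal is correct and follows the paper's argument. The inclusion is syntactic, and $\exists x\forall y\,\neg E(x,y)$ separates the two classes: the corollary to Theorem~\ref{thm:query-algs} rules it out for zero-ary NRPs, and it lies in UQAFO, so Corollary~\ref{cor:NRP_strict_subsumes_UQAFO} yields a gated monadic NRP for it. Your optional degree-based monadic program is a valid self-contained alternative for the monadic side, and it avoids the detour through the DHN results.
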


\subsection{Moving Beyond Monadic: Frontier Guarded Programs}
\label{sec:guarded}
We extend the connection between NRPs and DHNs beyond monadic programs, showing that frontier guarded NRPs can be translated to DHNs, provided that the input database is suitably normalized to have row-ids. Recall that an NRP is \emph{frontier guarded} if each conjunction rule has a guard atom in the body that contains all variables in the head. Frontier guarded NRPs are more expressive than monadic NRPs since they can define $k$-ary flat queries with $k>1$. We provide examples to illustrate the usage of non-monadic aggregation, also for unary queries:
\begin{example}
\label{ex:guarded}
Consider a schema $\scs=\{\text{Supervises}[2]\angs{0}, \text{PostDoc}[1]\angs{0}, \text{PhD}[1]\angs{0}\}$,
and consider the unary flat query:
\[
\text{People $x$ supervising a PostDoc $y$ that co-supervises all (and at least one) PhD students of $x$.}
\]
This is computed by the following
frontier guarded NRP with answer relation $\text{Ans}[1]\angs{1}$ and the standard acceptance policy that checks the embedding is positive:
\[
\begin{array}{lll}
\text{SPD}(x,y) &\from& \text{Supervises}(x,y)\land\text{PostDoc}(y) \\
\text{JointPhDCount}(x,y)\angs{\tsum} &\from& 
\text{SPD}(x,y)\land\text{Supervises}(x,z)\land\text{Supervises}(y,z)\\
&&\land\, \text{PhD}(z) \land \text{True}_1() \\
\text{MainPhDCount}(x,y)\angs{\tsum} &\from& 
\text{SPD}(x,y)\land\text{Supervises}(x,z)\land\text{PhD}(z) \land \text{True}_1() \\
\text{CoSupStats}(x,y) &\from_\oplus& \text{JointPhDCount}(x,y) \land \text{MainPhDCount}(x,y)\\
\text{Good}(x,y)\angs{\mu} &\from& \text{CoSupStats}(x,y) \textnormal{\mbox{~~~}~~~ where $\mu(e,e')=\relu(1-(e'-e))$}\\
\text{Ans}(x)\angs{\tsum} &\from& \text{Good}(x,y)
\end{array}
\]
Indeed, $\text{Ans}[1]\angs{1}$ includes any person $x$ supervising a postdoc that co-supervises at least one of $x$'s PhD students. The embedding of $\text{Ans}(x)$ is the number of $x$'s postdocs co-supervising all of $x'$s PhD students.
The second rule performs an aggregation grouped by pairs, which monadic rules cannot do.
\end{example}

\begin{example}
    GNNs as described in Section~\ref{sec:DHN} 
    only compute node embeddings.
    There are also GNN variants that compute both node and edge embeddings~\cite{DBLP:conf/icml/GilmerSRVD17,DBLP:journals/corr/abs-1806-01261,dwivedi2021generalization}. 
    Such architectures cannot be represented by monadic NRPs but they can be represented by frontier guarded NRPs. 
    For instance, an edge-embedding update of the form 
        $\lambda'_{i + 1}((u, v)) = \rho(\lambda_i(u), \lambda_i(v), \lambda'_i(u, v))$ translates to a pair of frontier guarded NRP rules:
    \[
    \text{Concat}_i(x, y) \from_\oplus \lambda_{i}(x) \land \lambda_{i}(y) \land \lambda'_{i}(x, y)\qquad
 \lambda'_{i+1}(x, y)\angs{\rho} \from \text{Concat}_i(x, y).
    \]
\end{example}
In the remainder of this section, we show that frontier guarded NRPs can be translated to monadic NRPs over databases equipped with linearly ordered row-ids.
An e-database \emph{is row-id normalized} if (i) the first attribute of each relation is a unique unary key (row-id) not appearing elsewhere in the database, and (ii) 
all non-monadic relations in its schema have embedding dimension 0.\footnote{Condition (ii) is important for the reduction from frontier guarded NRPs to monadic NRPs, since monadic NRPs only allow rule bodies to contain EDBs with embeddings if their content arity is at most $1$.}
Monadic NRPs are more powerful on row-id normalized e-databases since they can effectively store row embeddings in monadic IDBs via row-ids; in particular, they become as expressive as frontier guarded NRPs when equipped with a linear order.
To state this equivalence precisely we recall a well-known fact about frontier guarded Datalog programs which also applies to frontier guarded NRPs, namely that every content tuple occurring in the output is a projection of a 
tuple occurring in one of the EDB relations. Let
a $k$-ary \emph{EDB-projection} 
be a pair $(R,(i_1,\ldots,i_k))$ where
$R\in\scs$ has content arity $m$ and $i_1, \ldots, i_k\in\{1, \ldots, m\}$.
\begin{restatable}{theorem}{thmGuardedNRP}
\label{thm:guardedNRPs}
Let \(\Pi\) be a frontier guarded NRP over schema $\scs$ with a  $k\angs{d}$-ary IDB $S$. There exists a monadic NRP $\Pi'$ over schema $\scs \cup \{R_<[2]\angs{0}\}$, that produces a $1\angs{d}$-ary IDB $S_\rho$ for each $k$-ary EDB projection \(\rho=(R,(i_1,\ldots,i_k))\), so that on every row-id normalized e-database \(D\):
\[
\{ (a)\angs{\tup e}\mid S_\rho(a)\angs{\tup e}\in \Pi'(D^<)\} ~~=~~ \{ (a_1)\angs{\tup e}\mid R(a_1,\ldots,a_m)\in D
\text{ and } S(a_{i_1},\ldots,a_{i_k})\langle \tup{e}\rangle\in \Pi(D)\}
\]
where $m$ is the content arity of $R$ 
and \(D^<\) is an expansion of $D$ with a linear order \(R_<\) on the row-id values.
Moreover, if $\Pi$ only uses transformations in $\calF$, including $\reluFFNs$, the same holds for $\Pi'$.
\looseness=-1
\end{restatable}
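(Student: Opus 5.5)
We prove the statement by induction along the rule sequence $\Psi_1,\dots,\Psi_n$ of $\Pi$. For every IDB $S$ of $\Pi$ of content arity $k$ and every $k$-ary EDB-projection $\rho=(R,(i_1,\ldots,i_k))$, we maintain a monadic IDB $S_\rho$ of $\Pi'$ satisfying the invariant in the statement. That is, $S_\rho$ holds at a row-id $a_1$ of an $R$-row exactly when $S$ holds at the projection of that row, and it then carries the same embedding. For EDBs $T$ of $\scs$, condition (ii) of row-id normalization says that non-monadic EDBs are classical. So $T_\rho(x_1)$ is obtained by a classical conjunction rule $T_\rho(x_1)\from R(x_1,\ldots,x_m)\land T(x_{i_1},\ldots,x_{i_k})$. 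When $T$ is monadic with an embedding, the same rule is used, and its body atom $T(x_{i_1})$ is monadic and thus allowed. Two rows representing the same projected tuple give two row-ids carrying equal embeddings; this redundancy is harmless for the invariant.

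Transformation rules translate pointwise to $S_\rho(x)\angs{\mu}\from S'_\rho(x)$. Disjunction rules translate to $S_\rho(x)\angs{\tsum}\from S^1_\rho(x)\lor\cdots\lor S^\ell_\rho(x)$. This is correct because all disjuncts use the same projection $\rho$ and the same row-id.

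The core case is a frontier guarded conjunction rule $S(\tup x)\fromo S_1(\tup x_1)\land\dots\land S_\ell(\tup x_\ell)$ with guard $S_g(\tup x_g)\supseteq\tup x$. Each body atom $S_j(\tup x_j)$ with $S_j$ an IDB is represented by a disjunction over projections. Pick any fixed EDB-projection $\rho_j$ of the right arity and replace the atom by $R_{\rho_j}(y_j,\tup z_j)\land (S_j)_{\rho_j}(y_j)$, where $y_j$ is a fresh row-id variable. Here $R_{\rho_j}(y_j,\tup z_j)$ is the classical EDB atom, with the projected positions bound to $\tup x_j$.

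This creates a multiplicity problem. A tuple $\tup x_j$ may be witnessed by several rows, so homomorphisms are overcounted. We resolve it with the linear order: we select a canonical witness, namely the $<$-minimal row-id among the rows projecting to that tuple. Minimality is a universally quantified condition, and we express it by counting. First compute by a guarded pair-free rule the number of smaller rows with the same projection; this is a conjunction rule whose head is $y_j$ and whose body joins $R(y_j,\ldots)\land R(y',\ldots)\land R_<(y',y_j)$ with shared projected variables. Then test that number for zero with a \relu transformation ($\relu(1-c)$ on integers, together with a disjunction with $\text{True}_0$-style defaults as in the earlier examples). This yields a classical monadic relation $\text{Min}_{\rho_j}(y)$. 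Because \relu-\FFN\ outputs are only needed on integer counts, the construction stays within $\calF$.

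After this step each IDB atom has a unique representative. Take the rewritten rule body and group by the row-id $y_g$ of the guard's representative. The head $S_\rho(x_1)$ for a target projection $\rho=(R,\ldots)$ is then obtained as follows. First run a rule producing $S_{\rho_g}(y_g)$, whose head variable is the single row-id $y_g$; this makes it monadic, since all IDB atoms in the body are monadic and EDB atoms are classical. Then transfer to $\rho$ by the rule $S_\rho(x_1)\from R(x_1,\ldots)\land R_g(y_g,\ldots)\land \text{Min}(y_g)\land S_{\rho_g}(y_g)$, with the projected positions shared. The guard condition is what guarantees that $\tup x$ is determined by $y_g$, so grouping by $y_g$ is the same as grouping by $\tup x$. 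Finally, we use Proposition~\ref{prop:NRM_prod_simulates_concat} to keep $\odot$ combination. The main obstacle I expect is the canonical-witness bookkeeping: ensuring that sums over homomorphisms are neither duplicated nor lost, including for body atoms that repeat variables. We handle repeated variables by choosing projections with repeated indices, or by adding equality constraints through shared variables.
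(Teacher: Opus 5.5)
Your overall architecture matches the paper's: row-ids represent IDB tuples, a canonical witness is chosen via the order using a counting rule and a $\relu$ zero-test, and disjunction and transformation rules translate pointwise. But the conjunction case has a genuine gap at ``pick any fixed EDB-projection $\rho_j$''.

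The facts of an IDB $S_j$ are spread over the monadic relations $(S_j)_\rho$ for \emph{all} projection types $\rho$ of the right arity. One tuple is in general realized under several types (different relations, or different index sequences), not only by several rows of one type.
\begin{itemize}
\item If you fix a single $\rho_j$, every $S_j$-fact whose tuple is not a $\rho_j$-projection of some row is invisible to the rewritten body, so its homomorphisms are lost. For example, take $U(x)\from A(r,x)$, $V(x)\from B(r,x)$, $W(x)\from U(x)\lor V(x)$ and $C()\angs{\tsum}\from W(x)\land\text{True}_1()$: no single type covers all $W$-tuples.
\item If instead you range over all types and add up, your $\text{Min}_{\rho_j}$ only compares rows of the same relation under the same projection. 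A value occurring in both $A$ and $B$ is then counted twice, so the claim that afterwards ``each IDB atom has a unique representative'' is false.
\end{itemize}
The same problem affects the guard's type $\rho_g$ in your two-stage grouping.

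What is missing is canonicity \emph{across} projection types. Fix a linear order $\prec$ on the types of each arity. Let the indicator at a row $r$ of type $\rho=(R,\ldots)$ be $1$ iff $r$ is the least row of type $\rho$ with this projection \emph{and} the projected tuple is not a $\rho'$-projection of any row for any $\rho'\prec\rho$. The second condition is again a negated existential. It is handled by the same counting trick: a conjunction rule joining $R(x_1,\ldots)$ with $Q(y_1,\ldots)$ on the projected positions, feeding the count that is zero-tested. This is exactly what the paper's $\mathit{Match}_{\rho,\rho'}$, $\mathit{Earlier}_\rho$ and $\mathit{Least}_\rho$ rules do. You then need one rule per tuple $\bar\rho$ of type choices for the IDB body atoms, combined by a final disjunction. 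Then each homomorphism of the original body yields exactly one non-zero contribution.

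Second, $\text{Min}_{\rho_j}$ cannot be a \emph{classical} relation. Which facts an NRP derives is determined by a union of conjunctive queries over the content, since no rule filters on embedding values; ``least row with this projection'' is not monotone. Your counting rule plus $\relu(1-c)$ yields a relation holding at \emph{every} row with embedding $0$ or $1$. It only works as a filter if used multiplicatively:
\begin{itemize}
\item copy it into all $d$ coordinates, because under $\odot$ a one-dimensional factor multiplies only the first coordinate;
\item place it in the body, so that non-canonical witnesses still match but contribute zero vectors, which $\tsum$ ignores.
\end{itemize}
For the same reason, anchor atoms that are monadic EDBs with embeddings must be replaced by unembedded copies. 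For instance, in your EDB base case with $T=R$ monadic, the body $R(x_1)\land R(x_1)$ squares the embedding.

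Finally, grouping by $y_g$ is not the same as grouping by $\tup x$ when the guard has non-head variables. This is harmless, since your transfer rule sums over $y_g$ again with the canonical indicator. But the guard is not needed in the translation at all: as the paper does, you can put the target row $R(x_1,\ldots,x_m)$, with its $\rho$-projection identified with the head tuple, directly into the translated body. Frontier guardedness is used only to guarantee that every derived IDB tuple is a projection of an EDB row and hence represented.
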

Intuitively, Theorem~\ref{thm:guardedNRPs} shows that frontier guarded NRPs can be compiled into monadic NRPs over
row-id normalized databases. The linear order allows us to map each derived IDB fact to the row-id of the EDB tuple of which it is the projection, in a canonical way, thus preventing double counting during aggregation.
It follows that embedded queries computed by frontier guarded NRPs are also computed by monadic NRPs after a canonical translation, and hence by DHNs using the results of Section ~\ref{sec:DHN}.
Given a schema $\scs$, let:
\begin{align*}
    \widehat{\scs} = \{R_{\text{data}}[k+1]\angs{0} \,\mid\, R[k]\angs{d} \in \scs\} \cup \{R_{\text{emb}}[1]\angs{d} \,\mid\, R[k]\angs{d} \in \scs \text{ and } d>0\}
\end{align*} 
For an e-database $D$ over $\scs$, its \emph{ordered row-id expansion} $\widehat{D}$ over $\widehat{\scs}$ consists of the row-id normalized e-database that contains for each e-fact $R(a_1,\ldots, a_{k})\angs{\tup{e}} \in D$ 
the e-facts 
$R_{\text{data}}(r, a_1,\ldots, a_{k})\angs{}$ 
where $r$ is a fresh distinct value that serves
as the row-id, and
$R_{\text{emb}}(r)\angs{\tup{e}}$
if $d>0$, with, in addition, a $[2]\angs{0}$ ary order $R_<$ on the row-id values. We obtain the following corollary of Theorems~\ref{Thm:NRPtoDHN} and~\ref{thm:guardedNRPs}:
\begin{corollary}
Let \(Q\) be a \(k\angs{d}\)-ary embedded query  computed by a frontier guarded NRP over an arbitrary schema \(\scs\), and let $\rho=(R,(i_1,\ldots,i_k))$, be a $k$-ary EDB projection. There exists a DHN $\calN$ computing the total $1\angs{d+1}$-ary embedded query such that, for each e-database over $\scs$ with ordered row-id expansion $\widehat{D}$ and each $r \in \adom(\widehat{D})$, we have $r\angs{\tup e} \in \calQ_\calN(\widehat{D})$ where:
\begin{align*}
    \tup e = \begin{cases}
        (1)\oplus \tup f &\text{ if } R_{\mathrm{data}}(r,a_1,\ldots,a_m)\in \widehat{D}
\text{ and }
(a_{i_1},\ldots,a_{i_k})\langle \tup{f}\rangle\in \calQ(D)\\
        (0)\oplus \tup 0^{(d)} &\text{ otherwise}
    \end{cases}
\end{align*}
\end{corollary}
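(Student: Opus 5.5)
The proof chains Theorem~\ref{thm:guardedNRPs} with Theorem~\ref{Thm:NRPtoDHN}, with an intermediate NRP that rewrites the original frontier guarded NRP over the row-id schema $\widehat{\scs}$.

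\textbf{Step 1: rewrite over $\widehat{\scs}$.} Let $\Pi$ be the frontier guarded NRP over $\scs$ computing $\calQ$ via its final IDB $S$. I first build a frontier guarded NRP $\widehat{\Pi}$ over $\widehat{\scs}\cup\{R_<\}$ such that $\widehat{\Pi}(\widehat D)$ contains exactly the e-facts $S(\tup a)\angs{\tup f}$ of $\Pi(D)$.
\begin{itemize}
\item For each $R[k]\angs{d}\in\scs$ with $d>0$, add the rule $R(x_1,\ldots,x_k)\from R_{\mathrm{data}}(r,x_1,\ldots,x_k)\land R_{\mathrm{emb}}(r)$.
\item For $d=0$, add $R(\tup x)\from R_{\mathrm{data}}(r,\tup x)$.
\end{itemize}
Each such rule is frontier guarded by the $R_{\mathrm{data}}$ atom. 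Its combination is trivial, since only one atom carries an embedding. Because $D$ has at most one e-fact per relation symbol and content tuple, each content tuple has a unique row-id. So every head tuple has exactly one homomorphism, and the $\tsum$ aggregation simply returns $\tup e$.

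Next, append $\Pi$ with its EDB symbols renamed to these IDBs. The result $\widehat{\Pi}$ is frontier guarded. Moreover, $\widehat D$ is row-id normalized: row-ids are fresh keys, and only the monadic $R_{\mathrm{emb}}$ relations carry embeddings. Transformations are unchanged.

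\textbf{Step 2: apply Theorem~\ref{thm:guardedNRPs}.} Apply it to $\widehat{\Pi}$, with its IDB $S$ and the EDB projection $\hat\rho=(R_{\mathrm{data}},(i_1+1,\ldots,i_k+1))$ of $\widehat{\scs}$. This yields a monadic NRP $\Pi'$ with a $1\angs{d}$-ary IDB $S_{\hat\rho}$. On $\widehat D$, which already includes the linear order on row-ids, $S_{\hat\rho}(r)\angs{\tup f}$ holds iff $R_{\mathrm{data}}(r,a_1,\ldots,a_m)\in\widehat D$ and $S(a_{i_1},\ldots,a_{i_k})\angs{\tup f}\in\widehat\Pi(\widehat D)$, which by Step~1 holds iff $(a_{i_1},\ldots,a_{i_k})\angs{\tup f}\in\calQ(D)$.

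The care needed here is only bookkeeping. Theorem~\ref{thm:guardedNRPs} is stated with $R_<$ added to the schema. Since $\widehat D$ already contains the order, I take $\widehat D$ itself as $D^<$, viewing $R_<$ as part of the input.

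\textbf{Step 3: apply Theorem~\ref{Thm:NRPtoDHN}.} Make $S_{\hat\rho}$ the output of $\Pi'$ by truncating the program after the rule producing it. Theorem~\ref{Thm:NRPtoDHN} then gives a DHN $\calN$ whose output on $\varepsilon(\widehat D)$ at each element $r$ is $(1)\oplus\tup f$ when $r\angs{\tup f}$ is in the query, and $(0)\oplus\tup 0^{(d)}$ otherwise. Non-row-id elements, namely ordinary constants $a_j$, fall into the ``otherwise'' case, as required. Since a DHN assigns an embedding to every element, $\calQ_\calN$ is total of arity $1\angs{d+1}$. This gives exactly the displayed case distinction.

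The main obstacle is Step~1: verifying that the reconstruction rules reproduce the embeddings exactly, which rests on the uniqueness of row-ids per fact, and that the conditions of Theorem~\ref{thm:guardedNRPs} are met. The rest is direct composition.
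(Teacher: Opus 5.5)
Your proposal is correct and follows the paper's route, which obtains this corollary by chaining Theorem~\ref{thm:guardedNRPs} (here for the projection $(R_{\mathrm{data}},(i_1+1,\ldots,i_k+1))$ over $\widehat{\scs}$) with Theorem~\ref{Thm:NRPtoDHN}. Your Step~1 re-expresses $\Pi$ over $\widehat{\scs}$ by rebuilding each original e-relation from $R_{\mathrm{data}}$ and $R_{\mathrm{emb}}$, using uniqueness of row-ids so that $\tsum$ returns the original embedding; this usefully makes explicit a translation that the paper leaves implicit.
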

\begin{remark}
The above reduction from frontier guarded NRPs to monadic NRPs is based on expanding each relation with a row-id column, and assuming a linear order on the row-ids. Another option is to use a graph encoding that assigns a vertex to every projection of each e-fact in the input database. This avoids the need for a linear order, at the expense of increasing the size of the input to the DHN.
\end{remark}

\section{Descriptive Complexity of \reluFFN-based NRPs}
\label{sec:descriptive}

We next study the expressive power
of arbitrary (not necessarily monadic or frontier guarded)
NRPs. Specifically, we consider NRPs whose transformations
are \reluFFNs with 
rational parameters (or \emph{ReLU-FFN-based NRPs} for short). 
As standard building blocks in machine learning, \reluFFNs are a natural class of transformations. It is essential for gradient-based optimization that they compute continuous almost everywhere differentiable functions. These properties naturally extend to NRPs: for a fixed input e-database, the computation performed by a \reluFFN-based NRP is continuous and almost everywhere differentiable with respect to its parameters.\footnote{It suffices to observe that, for a fixed input, an NRP can be compiled into a large but finite \FFN (with parameter sharing over tuples derived by the same transformation rule) where sum-aggregation is replaced by finitary addition.}

We give an exact logical characterization of the 
expressive power of $\reluFFN$-based NRPs in terms 
of \FOCrats, an extension of first-order logic with real-valued counting terms. Using this characterization we show that the flat queries computed by gated $\reluFFN$-based NRPs 
are contained in uniform \TCO, and that,
over
ordered classical databases, 
they capture uniform \TCO.

\subsection{\texorpdfstring{\FOCrats}{FO+C(1/2)}}

The extension of first-order logic with counting terms was first suggested by Immerman \cite{immerman1987expressibility}, and later made explicit by Gr{\"a}del and Otto \cite{gradel1992inductive}. There exist various formalizations where counting terms range over natural numbers such as FOCN$(\mathbb{P}_\leq$) \cite{kuske2017first} and \FOC \cite{grohe2024descriptive}, which all capture uniform \TCO over ordered structures. To represent the queries computed by NRPs we define \FOCrats, a first-order logic with real-valued counting terms, which we interpret on real-weighted first-order structures. First-order logic with counting over real values has recently been proposed as a query language for verification and interpretation of neural networks
\cite{grohe2025querylanguages,grohe2026recursive}. Crucial differences between this formalism and our logic are that \FOCrats does not include division and does not allow for conditioning counting terms on formulas.

\paragraph{Weighted Structures}
Given a signature $\sigma$ with relation symbols $R_1, \ldots, R_n$,
a \emph{weighted $\sigma$-structure} $\calA=(A,R_1^\calA,\ldots, R_n^\calA)$ consists of a finite domain $A$ and a function $R_i^\calA: A^k \to \reals$ for every relation symbol $R_i$ of arity $k$.
A \emph{Boolean $\sigma$-structure} is a weighted structure satisfying $R_i^\calA(\tup a)\in\{0,1\}$ for each relation symbol $R_i$ of arity $k$ and tuple $\tup a \in A^k$.
\paragraph{Syntax}
The terms and formulas of \FOCrats are given by the following grammar, where $q \in \rats$.

\[\begin{array}{llll}
        \text{Terms:}    &
        \theta\!            &::=& q \mid \mathbbm{1}_{x=y} \mid R(x_1,\ldots,x_k) \mid \theta+\theta'\mid \theta\cdot \theta' \mid -\theta \mid \displaystyle \sum(x_1,\ldots,x_k).\theta \mid \max(\theta,\theta') \\
        \text{Formulas:} &
        \phi\!              &::=& \theta>\theta'
        \mid \neg\phi\mid\phi\land\psi
    \end{array}\]

\paragraph{Semantics}
We define a real value $\sem{\chi}^{\calA}(\tup a)$ for every term or formula $\chi$, weighted structure $\calA$ and $\tup a \in A^k$, where $k$ is the number of free variables of $\chi$. If $\chi$ is a formula, $\sem{\chi}^\calA(\tup a) \in \{0,1\}$, and if $\chi$ is a term, $\sem{\chi}^\calA(\tup a) \in \reals$. The value $\sem{\chi}^\calA(\tup a)$ is defined by induction, cf.~Table~\ref{tab:FOCrats}.
\begin{table}[t]
    \caption{Semantics of $\FOCrats$}
        \vspace{-1mm}
\[\begin{array}{lll}
         \sem{q}^{\calA}                                        & = & q \text{~ 
         for } q \in \rats                                        \\
        \sem{\mathbbm{1}_{x=y}}^{\calA}(a_1,a_2)                          & = & 1 \text{ if } a_1 = a_2 \text{ and } 0 \text{ otherwise}          \\
        \sem{R(x_1, \ldots, x_n)}^{\calA}(\tup a)                        & = & R^{\calA}(\tup{a}) \\
        \sem{\theta+\theta'}^{\calA}(\tup a)                             & = & \sem{\theta}^{\calA}(\tup a) + \sem{\theta'}^{\calA}(\tup a)      \\
        \sem{\theta\cdot\theta'}^{\calA}(\tup a)                         & = & \sem{\theta}^{\calA}(\tup a) \cdot \sem{\theta'}^{\calA}(\tup a)  \\
        \sem{-\theta}^{\calA}(\tup a)                                    & = & -\sem{\theta}^{\calA}(\tup a)                                     \\
        \sem{\displaystyle\sum(x_1, \ldots, x_k).\theta}^{\calA}(\tup a) & = &
        \text{sum of $\sem{\theta}^{\calA}(\tup a')$ where $\tup a'$ extends $\tup a$ with} \vspace{-0.2em}\\
        && \text{interpretations for $x_1, \dots, x_k$}
        \\
        \sem{\max(\theta,\theta')}^{\calA}(\tup a)                       & = & \max(\sem{\theta}^{\calA}(\tup a),\sem{\theta'}^{\calA}(\tup a))  \\
        \sem{\theta>\theta'}^{\calA}(\tup a)                         & = &
        \text{1 if $\sem{\theta}^{\calA}(\tup a)> \sem{\theta'}^{\calA}(\tup a)$, and 0 otherwise}                                            \\
        \sem{\neg\phi}^{\calA}(\tup a)                                   & = & 1-\sem{\phi}^{\calA}(\tup a)                                      \\
        \sem{\phi\land\psi}^{\calA}(\tup a)                              & = & \min(\sem{\phi}^{\calA}(\tup a), \sem{\psi}^{\calA}(\tup a))\\
    \end{array}\]
    \label{tab:FOCrats}
            \vspace{-2mm}
    \end{table}
$\FOCrats$ extends first-order logic over Boolean structures. To see this, note that every first-order formula $\phi$ can be inductively translated to a $\FOCrats$-term $\theta_\phi$ denoting its truth value: $\theta_{\phi \wedge \psi} := \theta_\phi \cdot \theta_\psi$, $\theta_{\neg \phi} := 1-\theta_\phi$ and $\exists x \phi(x) := -\max(-\sum(x).\theta_\phi,-1)$.  It follows that the \FOCrats-formula
$\theta_\phi > 0$ is equivalent to $\phi$.
In fact, over ordered Boolean structures $\FOCrats$ is equivalent to $\FOC$ as defined in \cite{grohe2024descriptive}, as we show in Theorem~\ref{thm:FOC_is_FOCrats_is_TCO}.

\subsection{Neuro-Relational Programs and \texorpdfstring{\FOCrats}{FO+C(1/2)}}
\label{sec:NRPandFOCrats}
Fix $\sigma$ to the signature that, for every $R[k]\angs{d} \in \scs$, contains $k$-ary relations $R$ and $R_1, \dots, R_d$. An e-database $D$ over $\scs$ is represented by a weighted $\sigma$-structure $\calA_D$ with domain $A=\adom(D)$ such that for all $\tup a \in A^k$, $\sem{R}^{\calA_D}(\tup a)$ is $1$ if $R(\tup a) \in D$ and $0$ otherwise, and for $i = 1, \dots, d$, $\sem{R_i}^{\calA_D}(\tup a)$ is the $i$-th embedding value of $R(\tup a)$ in $D$, or $0$ if $R(\tup a) \not\in D$. An \FOCrats formula $\phi$ with $k$ free variables then defines the $k$-ary flat query:
$$
\calQ_\phi(D) = \{\tup a \,\mid\, \tup a \in \adom(D)^k \text{ and } \sem{\phi}^{\calA_D}(\tup a)=1\}
$$

A \emph{simply-gated NRP} is a gated NRP with a \emph{simple acceptance policy}, that is, an acceptance policy $P:\reals^d\to \{0,1\}$ defined by a formula $\phi(x_1, \ldots, x_d)$ that is a Boolean combination of inequalities of the form $x_i > 0$. Because \FOCrats is interpreted on real-valued input structures, and allows for constructing all
rationals as variable-free terms, it can express \reluFFNs with 
rational parameters over e-databases. This yields the following equivalence:
\begin{theorem}
\label{thm:NRP_and_FOC(1/2)_same_relations}
For every flat query $\calQ$:
\begin{center}
    $\calQ$ is defined in $\FOCrats$ \quad iff \quad $\calQ$ is computed by a simply-gated \reluFFN-based NRP
\end{center}
\end{theorem}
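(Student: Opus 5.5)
We prove the two directions separately. In both we maintain an invariant linking NRP relations and $\FOCrats$ terms. For each IDB $S[k]\angs{d}$ of an NRP we consider a \emph{presence term} $\pi_S(\tup x)$, taking values in $\{0,1\}$ and equal to $1$ exactly when $S(\tup x)$ is derived. We also consider \emph{embedding terms} $\theta_{S,1}(\tup x),\dots,\theta_{S,d}(\tup x)$, which evaluate to the embedding coordinates of $S(\tup x)$ when it is present and to $0$ otherwise. This mirrors how $\calA_D$ encodes EDBs through $R$ and $R_1,\dots,R_d$.

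\textbf{From NRPs to $\FOCrats$.} We induct along the rule sequence. For EDBs the invariant holds by the definition of $\calA_D$.

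For a conjunction rule $S(\tup x)\fromprod R_1(\tup x_1)\land\dots\land R_\ell(\tup x_\ell)$, let $\tup y$ be the body variables that do not occur in $\tup x$. Repeated head variables are handled with products of $\mathbbm{1}_{x=y}$. We set
\[
\theta_{S,j}(\tup x)=\sum(\tup y).\Big(\prod_{i}\pi_{R_i}(\tup x_i)\cdot\prod_{i:\,d_i\geq j}\theta_{R_i,j}(\tup x_i)\Big).
\]
Presence is obtained from the analogous count term $c(\tup x)$ without the embedding factors, as $\pi_S=-\max(-c,-1)$, since $c\geq 0$ is an integer.

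For a disjunction rule, the embedding term is the sum of the $\theta_{R_i,j}$; this is correct because absent facts contribute $0$. Presence is $-\max(-\sum_i\pi_{R_i},-1)$.

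For a transformation rule with a \reluFFN $\mu$, each neuron is an affine map with rational coefficients followed by $\max(\cdot,0)$, and both operations are term constructors of $\FOCrats$. Composing layers gives terms $\mu_j(\theta_{R_0,1},\dots)$. Because these may be nonzero when the fact is absent, we multiply them by $\pi_{R_0}$.

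Finally, the simple policy is a Boolean combination of atoms $x_i>0$. We replace each atom by $\theta_{R_n,i}(\tup x)>0$, conjoin with $\pi_{R_n}(\tup x)>0$, and the $\FOCrats$ formula is complete.

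\textbf{From $\FOCrats$ to NRPs.} By structural induction, we build for each term or formula $\chi$ with free variables $\tup x$ (of length $k$) an NRP relation $T_\chi[k]\angs{1}$. This relation is required to be \emph{total} on $\adom(D)^k$ and to carry embedding $\sem{\chi}^{\calA_D}(\tup a)$.

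Totality is ensured by a base relation $\Adom(x_1)\land\dots\land\Adom(x_k)$ with embedding $0$. Each subterm is extended to the full variable tuple by a conjunction with these $\Adom$ atoms and with $\text{True}_1()$.
\begin{itemize}
\item Constants $q$ come from $\text{True}_q$.
\item $R(\tup x)$ and $R_j(\tup x)$: project the EDB's embedding onto coordinate $j$ (or $1$ for presence) by a linear transformation, then take a disjunction with the zero relation so that absent tuples get $0$.
\item $\mathbbm{1}_{x=y}$: $\text{Eq}(x,x)\from\Adom(x)\land\text{True}_1()$, then disjunction with zero.
\item $+$ and $\cdot$: concatenate via $\oplus$ (simulated by Proposition~\ref{prop:NRM_prod_simulates_concat}) followed by a linear map, and use the $\odot$ combination respectively.
\item $-\theta$ and $\max(\theta,\theta')=\theta'+\relu(\theta-\theta')$ are \reluFFNs.
\item $\sum(\tup y).\theta$: a conjunction rule projecting $T_\theta(\tup x,\tup y)$ to $\tup x$ with $\tsum$ aggregation; totality makes empty groups impossible.
\end{itemize}

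Formulas are the delicate case, because $\theta>\theta'$ is discontinuous and a \reluFFN cannot compute it on all reals. We plan to postpone comparisons as long as possible. We first push negations and conjunctions inward, so that the top-level formula becomes a Boolean combination of atoms $\theta_i>\theta'_i$. The simple acceptance policy evaluates this combination on the coordinates $\theta_i-\theta'_i$ of the final embedding, which handles top-level comparisons.

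The main obstacle is comparisons nested under terms, i.e.\ formulas used inside $\sum$ or products. For these we cannot rely on the policy. Our plan is to prove a normalization lemma: every $\FOCrats$ formula is equivalent to one whose nested comparisons are between terms whose values lie in a discrete grid. The grid arises from rationals with bounded denominators times integer counts whenever the terms avoid input weights. On such grid-valued terms, $x>0$ is computable exactly by $\relu(Nx)-\relu(Nx-1)$ for suitable $N$.

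The remaining difficulty is comparisons that depend on arbitrary real weights nested inside sums. If no normalization handles them, the fallback is to use multiplication and the structure of the gate to avoid needing an exact step function. If that also fails, we would expect the theorem to rely on the restricted syntax of $\FOCrats$, which has no conditioning of terms on formulas, to guarantee that formulas appear only at top level. Under that restriction the policy-based argument above already suffices.
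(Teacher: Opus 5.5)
Your proof is correct and matches the paper's in both directions. For NRPs to $\FOCrats$, you use the same presence/embedding-term invariant: presence via $\min(\cdot,1)$, FFN outputs multiplied by the presence term, and policy atoms $x_i>0$ replaced by $\theta_{R_n,i}>0$ and conjoined with presence. For the converse, you build total relations carrying $\sem{\theta}^{\calA_D}(\tup a)$ and combine them at the end by a linear map into differences $\theta_i-\theta'_i$ that the simple policy evaluates.

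The ``main obstacle'' you describe does not arise. In the grammar of $\FOCrats$, terms are built only from $q$, $\mathbbm{1}_{x=y}$, atoms, $+$, $\cdot$, $-$, $\sum$ and $\max$, so formulas never occur inside terms, and your last fallback is exactly the actual situation. You should drop the grid-normalization plan. It could not handle weight-dependent nested comparisons anyway, since for a fixed relational skeleton every NRP embedding is a continuous function of the input weights.
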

The proof of Theorem~\ref{thm:NRP_and_FOC(1/2)_same_relations} establishes a one-to-one correspondence between \FOCrats terms and the embedding values computed by \reluFFN-based NRPs. The result then follows since $\FOCrats$ formulas express simple acceptance policies.
\begin{remark}
\label{remark:Embedded_NRP_TCO}
    For non-gated NRPs, the proof of Theorem~\ref{thm:NRP_and_FOC(1/2)_same_relations} implies an alternative characterization. A $k\angs{d}$-ary embedded query is computed by a \reluFFN-based NRP if and only if it can be represented by a tuple $(q,\theta_1,\ldots, \theta_d)$ where $q$ is a classical $k$-ary union of conjunctive queries determining which facts are derived and $\theta_1,\dots,\theta_d$ are $k$-ary \FOCrats terms representing the derived embeddings.
\end{remark}

\subsection{Relation to \texorpdfstring{\FOC}{FO+C} and \texorpdfstring{\TCO}{TC0}}
\label{sec:FOC_TC0}
The complexity class \emph{dlogtime-uniform \TCO} (henceforth simply \emph{uniform \TCO}) contains all languages $L \subseteq \{0,1\}^*$ computable by families $\calC = (\mathfrak{C}_n)_{n \in \mathbb{N}>0}$ of polynomial size constant depth Boolean circuits with threshold gates, with the additional condition that there is a deterministic log-space Turing machine $T$ describing the family of circuits in the following sense: given the input length $n$ and a gate address, $T$ outputs the gate type and addresses of its input gates \cite{barrington1990uniformity}. Intuitively, uniform \TCO captures the computational power of constant-time parallel algorithms with threshold gates, which equip constant-depth circuits with the ability to perform basic arithmetic operations.

A weighted structure is ordered if it includes a linear order on the domain. Ordered 
rational weighted structures admit canonical bitstring representations~\cite{DBLP:books/daglib/0095988}. 
We encode a rational weight $\frac{p}{q}$ as a bitstring of length $O(\log(|p|)+\log(q))$.
For an ordered 
rational weighted structure $\calA$ and a tuple of domain elements $\tup a$, let $s(\calA, \tup a) \in \{0,1\}^*$ be the bitstring representation of $\calA$, followed by the binary positions of the elements of $\tup a$ in the order. We define flat queries on weighted structures analogously to database queries: as functions from weighted structures to classical relations. A $k$-ary flat query $\calQ$ over ordered 
rational weighted structures then has a representation as binary language:\looseness=-1
$$
L(\calQ) = \{s(\calA, \tup a) \,\mid\, \tup a \in \calQ(\calA)\}
$$
A flat query $\calQ$ is in \TCO if $L(\calQ) \in \TCO$. Over ordered Boolean structures the extension of first-order logic with counting quantifiers and number variables over a discrete domain (\FOC) captures uniform \TCO \cite{barrington1990uniformity}. We show this equivalence extends to \FOCrats:
\begin{restatable}{theorem}{thmFOCratsisFOCunweightedordered}
\label{thm:FOC_is_FOCrats_is_TCO}
    For every flat query $\calQ$ over ordered Boolean structures the following are equivalent:
    \begin{enumerate}
        \item $\calQ$ is defined in \FOCrats;
        \item $\calQ$ is defined in \FOC (as in \cite{grohe2024descriptive});
        \item $\calQ$ is in uniform \TCO.
    \end{enumerate}
\end{restatable}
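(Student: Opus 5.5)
Since (2)$\Leftrightarrow$(3) over ordered Boolean structures is the known characterization of uniform \TCO by \FOC~\cite{barrington1990uniformity,grohe2024descriptive}, it suffices to show (1)$\Leftrightarrow$(2). I would prove the two directions separately, taking the \FOC of~\cite{grohe2024descriptive}: two-sorted, with number variables ranging over $\{0,\ldots,n\}$ for $n=|A|$ (or polynomially bounded ranges), counting terms $\#(\tup x,\tup y).\phi$, arithmetic on number terms and comparisons.

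\emph{(2)$\Rightarrow$(1).} By induction on \FOC formulas, I translate each formula $\phi(\tup x,\tup y)$, with $\tup x$ vertex variables and $\tup y$ number variables, into an \FOCrats term $\theta_\phi$ taking values in $\{0,1\}$. Since we are over ordered structures, each number variable $y$ ranging over $\{0,\ldots,n^c\}$ is encoded by a tuple of $c$ vertex variables, read as the base-$n$ digits determined by the linear order. The constant term $1$, the comparison $y\le y'$ (lexicographic in the order) and equality via $\mathbbm{1}_{x=y}$ are all definable. The value of a tuple is $\sum_i \mathrm{pos}(x_i)\, n^{i}$, where $\mathrm{pos}(x)=\sum(z).\theta_{z<x}$ and $n=\sum(z).1$, and this expression is an \FOCrats term. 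A counting term $\#\tup z.\phi$ becomes $\sum(\tup z).\theta_\phi$. An atomic number comparison such as $t_1=t_2$ or $t_1+t_2=t_3$ becomes a comparison of real-valued terms. Boolean connectives are handled as in the first-order translation, and quantifiers become sums capped with $\max$. To finish, I close by $\theta_\phi>0$. Any \FOC arithmetic predicates (such as BIT) that are built in would be defined via counting first, since they are \FOC-definable from order and counting.

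\emph{(1)$\Rightarrow$(2).} This is the main obstacle. Over Boolean structures, every \FOCrats term is rational-valued, and I would show by induction that its value $\sem{\theta}(\tup a)$ is a rational with common denominator $Q_\theta$, a fixed integer depending only on $\theta$, and numerator bounded by $2^{p_\theta(n)}$ for a polynomial $p_\theta$. Products can square magnitudes and nested sums multiply them by $n^k$, so a term of constant depth has numerator of polynomially many bits. This precludes representing values by number variables bounded by $n^c$. I would instead use the standard \FOC/\TCO arithmetic machinery: represent an integer of polynomial bit length by a formula $\mathrm{BIT}_\theta(\tup x,\tup i)$ giving the $i$-th bit (plus sign), where the bit index $\tup i$ is a tuple of number variables. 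Then the induction steps are as follows. Rationals are handled after multiplying through by the common denominator $Q$. Addition, negation, $\max$ and $>$ are \FOC-definable on bit representations via carry-lookahead formulas. Multiplication of two polynomial-bit integers and iterated addition $\sum(\tup z).\theta$ of polynomially many polynomial-bit integers are both in uniform \TCO, and hence \FOC-definable by Hesse--Allender--Barrington style results, which I would cite. $R(\tup x)$ and $\mathbbm{1}_{x=y}$ are $0/1$ bits.

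Finally, the top-level formula, as a Boolean combination of comparisons $\theta>\theta'$, becomes an \FOC formula. The delicate points are aligning denominators through multiplication (the denominators multiply, so I would track $Q_\theta$ inductively and rescale before adding), and making sure the iterated-sum definability is uniform in the summation index tuple. Alternatively, (1)$\Rightarrow$(3) could be shown directly by building \TCO circuits for each term, using the same arithmetic facts, and then invoking (3)$\Rightarrow$(2).
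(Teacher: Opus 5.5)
Your proposal is sound, and your direction (2)$\Rightarrow$(1) is essentially the paper's: bound the number variables polynomially, then simulate them by tuples of element variables under the lexicographic order. Your direction (1)$\Rightarrow$(2), however, is built around an obstacle that does not exist.

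Over Boolean structures every \FOCrats term is polynomially bounded. By induction, $|\sem{\theta}^{\calA}(\tup a)|\le C_\theta n^{c_\theta}$ with $n=|A|$, for these reasons:
\begin{enumerate}
\item atoms and $\mathbbm{1}_{x=y}$ take values $0/1$, and constants are fixed;
\item $\sum(x_1,\ldots,x_k)$ multiplies the bound by $n^k$;
\item a fixed term has only constantly many product nodes, and \FOCrats has no iterated product, so ``squaring'' happens a bounded number of times and the degree is a constant independent of $n$.
\end{enumerate}
Combined with your fixed common denominator $Q_\theta$, $Q_\theta\cdot\sem{\theta}$ is an integer of absolute value $n^{O(1)}$, i.e.\ $O(\log n)$ bits. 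That is exactly the range of \FOC number terms.

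The paper uses this directly. It assigns to each term a triple $(f(\theta),t(\theta),r(\theta))$: an \FOC formula giving the sign, an \FOC number term, and a nonnegative rational constant, with $\sem{\theta}=(-1)^{1+\sem{f(\theta)}}\cdot\sem{t(\theta)}\cdot r(\theta)$. It then handles $+$, $\cdot$, $\max$, $\sum$ and $>$ with \FOC's native arithmetic, comparisons and counting terms, after putting the two constants over a common denominator.

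Your bit-level route still works, since arithmetic on polynomial-length bit strings subsumes the $O(\log n)$-bit case. But it imports deep uniform-\TCO arithmetic (iterated addition and multiplication in FO+MAJ) and uniformity bookkeeping. The paper needs that machinery only for weighted structures, in Theorem~\ref{thm:FOCrats_contained_in_TCO}, where the input weights themselves can be large.

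One point needs repair in (2)$\Rightarrow$(1). \FOCrats has no operator turning a formula into a term. So ``an atomic number comparison becomes a comparison of real-valued terms'' gives you a formula, and you cannot place that under $\sum$ or multiply it, as your $\{0,1\}$-valued $\theta_\phi$ requires. You need integrality here. Since the translated number terms are integer-valued, $\min(\max(\theta'-\theta+1,0),1)$ with $\min(a,b)=-\max(-a,-b)$ is a term equal to the indicator of $\theta\le\theta'$; this is what the paper uses. The integer gap is essential, because the term operations are continuous and cannot produce a sharp threshold on arbitrary real values. The bounds $y_i<\theta_i$ inside \FOC counting terms must likewise become such indicator factors once the number variables are encoded by element tuples.
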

Over 
rational weighted structures \FOCrats remains contained in \TCO, although it is properly contained:
\begin{restatable}{theorem}{thmFOCratsContainedInTCO}
\label{thm:FOCrats_contained_in_TCO}
Over ordered 
rational weighted structures,
the flat queries definable in $\FOCrats$ are strictly contained in uniform \TCO.
\end{restatable}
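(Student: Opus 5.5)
The statement has two parts: containment of \FOCrats-definable flat queries in uniform \TCO over ordered rational weighted structures, and strictness of that containment.

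\textbf{Containment.} I would show by induction on terms that every \FOCrats term $\theta(\tup x)$ has values computable by a uniform \TCO circuit family. The values must be exact rationals, given as numerator and denominator in binary, with bit-length polynomial in the input size.

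The key invariant concerns a term of nesting depth $t$ over a structure with domain size $n$ whose input weights have bit-length at most $m$. The claim is that its value is a rational $p/q$ with the following two properties.
\begin{itemize}
\item The denominator $q$ divides a product of at most $c_\theta$ input denominators and the constant denominators appearing in $\theta$.
\item Both $p$ and $q$ have bit-length $O(m\cdot \mathrm{poly}_\theta(\log n))$.
\end{itemize}
Sums over $k$ variables add at most $k\log n$ bits, and products double the length only a constant number of times, so the bound is polynomial in the input size.

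Each construct then needs a \TCO implementation.
\begin{itemize}
\item Addition and multiplication of binary rationals are in uniform \TCO, the latter by iterated addition.
\item Iterated sums over $n^k$ tuples are in \TCO. I would bring the terms to a common denominator, which is the product of the boundedly many relevant denominators and so requires only a bounded product, and then use iterated addition.
\item $\max$, comparison $\theta>\theta'$, and the Boolean connectives reduce to comparing cross-multiplied integers.
\end{itemize}
The constant depth of $\theta$ gives constant circuit depth. Uniformity follows from the standard dlogtime uniformity of these arithmetic primitives, or more cleanly from simulating each step inside \FOC with $\mathrm{BIT}$ over number variables as in \cite{barrington1990uniformity}, composed with Theorem~\ref{thm:FOC_is_FOCrats_is_TCO}.

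The main obstacle is the denominator bookkeeping under $\sum$. Naively summing $n$ rationals with distinct denominators would blow up the common denominator. The point is that a term refers to only boundedly many weight atoms, whose denominators depend on the chosen tuple, so a single global common denominator is too large. I would therefore first try to avoid a common denominator altogether and use the known result that iterated addition of $n$ fractions is in uniform \TCO. Hesse--Allender--Barrington give iterated multiplication and division in uniform \TCO, so one can sum $p_i/q_i$ by computing $\prod q_i$ and the numerator $\sum_i p_i\prod_{j\neq i}q_j$. This requires polynomially many bits but stays polynomial, and it settles the growth issue.

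\textbf{Strictness.} I would exploit a weakness of \FOCrats on weights: an atom $R(x)$ returns a weight only as a value, and the only access to it is through polynomial arithmetic, $\max$, and sign tests. Hence for a fixed single-element structure with one unary weight $w$, every formula defines a Boolean combination of sets $\{w : p(w)>0\}$ with $p$ piecewise polynomial and rational coefficients. Such a set is a finite union of intervals.

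The query ``the numerator of $w$ is even'', or equivalently ``$w$, written in lowest terms, has an even denominator'', is trivially in uniform \TCO given the bit encoding, by checking the last bit. Its solution set, however, is dense with dense complement in $\rats$. It is therefore not a finite union of intervals, and it is not \FOCrats-definable.

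To make this rigorous I would prove, by induction on terms, that on one-element structures each term is a continuous piecewise-polynomial function of the finitely many weight values. Here $\sum$ over one element is the identity and $\mathbbm{1}_{x=y}=1$, so these constructs cause no difficulty. Each formula is then a semialgebraic indicator, and it is locally constant on an open dense set, which separates the two query classes.
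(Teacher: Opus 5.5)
Your proof is essentially the paper's. Containment uses the known uniform \TCO arithmetic, with Hesse--Allender--Barrington iterated multiplication handling iterated sums of rationals. Strictness uses one-element structures, on which every \FOCrats term is a univariate piecewise polynomial with finitely many pieces, so the definable sets of weights are finite unions of intervals. The only difference is the separating query. Yours is dense and co-dense in $\rats$, whereas the paper uses ``the weight is an even integer'' together with infinite alternation on $\{1,2,3,\ldots\}$, which separates already over integer-weighted structures.

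Three slips should be fixed:
\begin{itemize}
\item Drop your initial invariant and the ``bounded product'' common denominator in the iterated-sum bullet. Both fail already for $\sum(x).R(x)$ with $R(a_i)=1/p_i$ for distinct primes $p_i$. After your later fix via iterated multiplication, bit-lengths grow like $m\cdot n^{O(1)}$, with the exponent depending on $\theta$, which is still polynomial.
\item ``Even numerator'' and ``even denominator'' are not equivalent (consider $1/3$), although each works as a separating query on its own.
\item Theorem~\ref{thm:FOC_is_FOCrats_is_TCO} concerns Boolean structures, so your ``cleaner'' uniformity route cannot simply compose with it for weighted inputs.
\end{itemize}
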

To prove Theorem~\ref{thm:FOC_is_FOCrats_is_TCO} we use that inequalities between the
rational terms that \FOCrats constructs over Boolean structures are represented in \FOC. Conversely, \FOC number variables ranging over $\mathbb{N}$ can be simulated using variables over the ordered domain. For the containment in Theorem~\ref{thm:FOCrats_contained_in_TCO}, we encode 
weights $\frac{p}{q}$ with $p \in \mathbb{Z}, q \in \mathbb{N}$ by the concatenation of bitstring representations for $p$ and $q$.
 The inclusion of Theorem~\ref{thm:FOCrats_contained_in_TCO} then follows from known results on arithmetic in uniform \TCO (see \cite{barrington1990uniformity, chandra1984constant} and the result for iterated integer multiplication, and thus iterated rational addition, in \cite{hesse2002uniform}). As a separating property we query if the weight of a fact with relation $R$ is an even integer.\looseness=-1

\subsection{Putting everything together}
\label{sec:Putting_everything_together}
The combination of Theorems~\ref{thm:NRP_and_FOC(1/2)_same_relations}, \ref{thm:FOC_is_FOCrats_is_TCO} and~\ref{thm:FOCrats_contained_in_TCO} characterizes the descriptive complexity of NRPs over embedded databases as follows:
\begin{corollary}
\label{cor:NRP_in_TCO_weighted}
Over arbitrary e-databases:
the flat queries computed by simply-gated \reluFFN-based NRPs are strictly contained in uniform \TCO. 
\end{corollary}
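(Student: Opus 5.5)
The plan is to combine Theorem~\ref{thm:NRP_and_FOC(1/2)_same_relations} with Theorem~\ref{thm:FOCrats_contained_in_TCO}. There is one interpretive point to settle first. Uniform \TCO is a statement about bitstring encodings, so the e-databases must be ordered and must carry rational embeddings. I therefore read ``arbitrary e-databases'' as e-databases with rational embeddings, encoded together with a linear order on the active domain, as in Section~\ref{sec:FOC_TC0}. The order is used only for the encoding; the query is not allowed to use it.

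\emph{Containment.} Let $(\Pi,P)$ be a simply-gated \reluFFN-based NRP. By Theorem~\ref{thm:NRP_and_FOC(1/2)_same_relations} there is an \FOCrats formula $\phi$ with $\calQ_{(\Pi,P)}=\calQ_\phi$ on every e-database $D$. The step that needs care is the passage from $D$ to the weighted structure $\calA_D$. This map is computable in uniform \TCO, indeed in uniform $\mathrm{AC}^0$, from the encoding of $D$ with a chosen order. It only lays out the indicator of $R$ and the coordinate weights $R_i$, inserting $0$ for absent tuples, and the resulting structure has polynomial size.

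Theorem~\ref{thm:FOCrats_contained_in_TCO} then places $\calQ_\phi$ over ordered rational weighted structures in uniform \TCO. Composing the two uniform \TCO reductions gives $L(\calQ_{(\Pi,P)})\in$ uniform \TCO. This composition preserves uniformity because \TCO is closed under such projections.

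\emph{Strictness.} I would reuse the separating property from Theorem~\ref{thm:FOCrats_contained_in_TCO}, transferred to e-databases. Take a schema with one relation $R[0]\angs{1}$, or $R[1]\angs{1}$, and consider the Boolean flat query ``the embedding of the $R$-fact is an even integer''. Checking this on the bit encoding $p/q$ is in uniform \TCO: reduce the fraction, test that $q=1$, and read the low bit of $p$. Suppose, for contradiction, that this query were computed by a simply-gated \reluFFN-based NRP. Then by Theorem~\ref{thm:NRP_and_FOC(1/2)_same_relations} it would be \FOCrats-definable over the corresponding weighted structures $\calA_D$. That contradicts the inexpressibility shown in the proof of Theorem~\ref{thm:FOCrats_contained_in_TCO}.

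The main obstacle is ensuring that this separating example really lives in the image $\{\calA_D\}$ of e-databases. If the separation in Theorem~\ref{thm:FOCrats_contained_in_TCO} uses general weighted structures, I would re-derive it directly on e-databases with a single fact. On such an input every \FOCrats term, and hence by Remark~\ref{remark:Embedded_NRP_TCO} every NRP embedding, is a fixed piecewise-polynomial function of the one weight $w$ built from $+,\cdot,\max$ with rational constants. A Boolean formula over such terms is then true on a finite union of intervals and points in $w$. That set cannot coincide with the set of even integers, which is infinite and discrete.
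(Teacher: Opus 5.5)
Your proposal is correct and follows the paper's route. The corollary is obtained by composing Theorem~\ref{thm:NRP_and_FOC(1/2)_same_relations} with Theorem~\ref{thm:FOCrats_contained_in_TCO}, and your separating query (even-integer embedding on single-fact e-databases, where every term is a piecewise polynomial in $w$) is the paper's example transported to structures of the form $\calA_D$. One small correction: do not ``reduce the fraction'' in your \TCO membership test, since computing a gcd is not known to be in \TCO. The paper's encoding already assumes reduced fractions $\frac{p}{q}$, so checking $q=1$ and the low bit of $p$ suffices; alternatively, test $2q \mid p$ using \TCO division.
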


\begin{corollary}
\label{cor:NRP_is_TCO_classical}
     Over ordered classical databases, for every flat query $\calQ$:
     \begin{center}
         $\calQ$ is in uniform \TCO \quad iff \quad $\calQ$ is computed by a simply-gated \reluFFN-based NRP.
     \end{center}
\end{corollary}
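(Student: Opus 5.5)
This corollary follows by chaining Theorem~\ref{thm:NRP_and_FOC(1/2)_same_relations} with Theorem~\ref{thm:FOC_is_FOCrats_is_TCO}. The only real work is checking that the objects match up. An ordered classical database is an e-database in which every relation has embedding dimension $0$, together with a linear order relation. So its weighted representation $\calA_D$ from Section~\ref{sec:NRPandFOCrats} is an ordered Boolean structure: the signature $\sigma$ contains only the symbols $R$, since $d=0$ for every $R$, each interpreted as a $0/1$-valued indicator, and the order is one of these relations. The domain of $\calA_D$ is $\adom(D)$, which coincides with the domain of the Boolean structure. Hence a flat query over ordered classical databases is the same thing as a flat query over ordered Boolean structures. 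For the other direction, every ordered Boolean structure arises as some $\calA_D$, because under the active-domain convention no element is isolated.

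\emph{Right-to-left.} Suppose $\calQ$ is computed by a simply-gated \reluFFN-based NRP. By Theorem~\ref{thm:NRP_and_FOC(1/2)_same_relations}, $\calQ$ is defined by an \FOCrats formula $\phi$, meaning $\calQ(D)=\calQ_\phi(D)$ for all $D$. Restricting to ordered classical $D$, $\phi$ defines $\calQ$ over ordered Boolean structures. Implication (1)$\Rightarrow$(3) of Theorem~\ref{thm:FOC_is_FOCrats_is_TCO} then places $\calQ$ in uniform \TCO.

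\emph{Left-to-right.} Suppose $\calQ$ is in uniform \TCO. Implication (3)$\Rightarrow$(1) of Theorem~\ref{thm:FOC_is_FOCrats_is_TCO} gives an \FOCrats formula $\phi$ defining $\calQ$ over ordered Boolean structures. Theorem~\ref{thm:NRP_and_FOC(1/2)_same_relations}, applied to the flat query $\calQ_\phi$, gives a simply-gated \reluFFN-based NRP computing $\calQ_\phi$. That NRP agrees with $\calQ$ on all ordered classical databases.

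The only delicate point is scope. Theorem~\ref{thm:NRP_and_FOC(1/2)_same_relations} is stated for flat queries over all e-databases, while here $\calQ$ is specified only on the restricted class. I would handle this by extending $\calQ$ to the query $\calQ_\phi$ on all e-databases and noting that the translation in both directions is uniform and preserves agreement on every input, so restricting to a subclass is harmless. I would also check the empty-database and empty-domain edge cases, and that the $\TCO$ encoding $s(\calA,\tup a)$ is the same for $\calA_D$ as for the Boolean structure, which holds by definition.
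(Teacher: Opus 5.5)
Your proposal is correct and matches the paper's argument: the corollary comes from composing Theorem~\ref{thm:NRP_and_FOC(1/2)_same_relations} with the equivalence (1)$\Leftrightarrow$(3) of Theorem~\ref{thm:FOC_is_FOCrats_is_TCO}, after identifying ordered classical databases with ordered Boolean structures via $\calA_D$. Your remark that both translations behind Theorem~\ref{thm:NRP_and_FOC(1/2)_same_relations} work uniformly over all e-databases, so restricting to the ordered classical subclass is harmless, is a correct detail the paper leaves implicit. One small imprecision: every ordered Boolean structure is of the form $\calA_D$ because each element occurs in the (reflexive) order relation, not because of the active-domain convention itself.
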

An interesting non-trivial consequence of Corollary~\ref{cor:NRP_is_TCO_classical} is that, over classical databases, gated \reluFFN-based NRPs are closed under composition. 
Consequently, inserting discontinuous simple acceptance policies, or equivalently Heaviside activations, anywhere in the program as opposed to only at the end does not increase expressive power.
\looseness=-1

\section{Conclusion}

Neuro-Relational Programs provide a declarative framework that combines relational querying with neural computation over  databases with fact embeddings. The syntactic fragments studied in this paper formally connect NRPs to non-adaptive query algorithms, deep homomorphism networks, the counting logic \FOCrats and the complexity class \TCO. These connections position NRPs as a unifying formalism for understanding the expressive power of neural models over structured data, while retaining the compositional and declarative character of database query languages.

While this paper focuses on the forward-pass expressiveness, the NRP framework naturally supports end-to-end gradient-based training, where the parameters in transformation functions $(\mu)$ are the learnable weights. 
We complement the theoretical foundation in this paper with an implemented system~\cite{lubarsky2026incorporating} 
that translates NRPs into an embedding-aware extension of relational algebra (``Neuro-Relational Algebra''), and subsequently compiles them into physical plans over PyTorch, cuDF, and SQL. Our experimental results indicate that the resulting programs achieve performance comparable to state-of-the-art implementations while being substantially simpler to express. We view this as a promising direction for natively integrating deep learning models into relational query engines.
\begin{acks}
We thank Martin Grohe for valuable discussions that contributed
to the development of this work.
Carsten Lutz was supported by DFG project LU 1417/4-1.
This work is partly supported by BMFTR (Federal Ministry of Research, Technology and Space)
in DAAD project 57616814 (\href{https://secai.org/}{SECAI, School of Embedded Composite AI})
as part of the program Konrad Zuse Schools of Excellence in Artificial Intelligence.
\end{acks}
\newpage
\bibliographystyle{ACM-Reference-Format}
\bibliography{references}

@article{irmai2026triangle,
  title        = {Graph Neural Networks with Triangle-Based Messages for the Multicut Problem},
  author       = {Irmai, Jannik and Naumann, Lucas Fabian and Andres, Bjoern},
  journal      = {arXiv preprint arXiv:2605.13673},
  year         = {2026},
  doi          = {10.48550/arXiv.2605.13673},
  eprint       = {2605.13673},
  archivePrefix= {arXiv},
  primaryClass = {cs.LG}
}

@InProceedings{MFCS2025,
  author =	{ten Cate, Balder and Kolaitis, Phokion G. and Kristj\'{a}nsson, Arnar \'{A}.},
  title =	{Adaptive Query Algorithms for Relational Structures Based on Homomorphism Counts},
  booktitle =	{MFCS},
  pages =	{34:1--34:18},
  series =	{Leibniz International Proceedings in Informatics (LIPIcs)},
  ISBN =	{978-3-95977-388-1},
  ISSN =	{1868-8969},
  year =	{2025},
  volume =	{345},
  editor =	{Gawrychowski, Pawe{\l} and Mazowiecki, Filip and Skrzypczak, Micha{\l}},
  publisher =	{Schloss Dagstuhl -- Leibniz-Zentrum f{\"u}r Informatik},
  address =	{Dagstuhl, Germany},
  URL =		{https://drops.dagstuhl.de/entities/document/10.4230/LIPIcs.MFCS.2025.34},
  URN =		{urn:nbn:de:0030-drops-241413},
  doi =		{10.4230/LIPIcs.MFCS.2025.34}
}

@article{Chen2025,
title = {On algorithms based on finitely many homomorphism counts},
journal = {Information and Computation},
volume = {306},
pages = {105326},
year = {2025},
issn = {0890-5401},
doi = {https://doi.org/10.1016/j.ic.2025.105326},
url = {https://www.sciencedirect.com/science/article/pii/S0890540125000628},
author = {Yijia Chen and Jörg Flum and Mingjun Liu and Zhiyang Xun}
}

@inproceedings{xu2019powerful,
  author       = {Keyulu Xu and
                  Weihua Hu and
                  Jure Leskovec and
                  Stefanie Jegelka},
  title        = {How Powerful are Graph Neural Networks?},
  booktitle    = {Proc.\ of {ICLR}},
  publisher    = {OpenReview.net},
  year         = {2019}
}

@inproceedings{morris2019weisfeiler,
  title={{Weisfeiler} and {Leman} go neural: Higher-order graph neural networks},
  author={Morris, Christopher and Ritzert, Martin and Fey, Matthias and Hamilton, William L and Lenssen, Jan Eric and Rattan, Gaurav and Grohe, Martin},
  booktitle={Proc.\ of {AAAI}},
  Avolume={33},
  AX-number={01},
  pages={4602--4609},
  year={2019}
}

@inproceedings{ten2024homomorphism,
  title = {When Do Homomorphism Counts Help in Query Algorithms?},
  author = {Balder ten Cate and Victor Dalmau and Phokion G. Kolaitis and Wei{-}Lin Wu},
  booktitle = {{ICDT}},
  year = {2024},
  Aorganization = {Schloss Dagstuhl--Leibniz-Zentrum f{\"u}r Informatik},
  timestamp = {Mon, 03 Mar 2025 00:00:00 +0100},
  biburl = {https://dblp.org/rec/conf/icdt/CateDKW24.bib},
  bibsource = {dblp computer science bibliography, https://dblp.org},
  doi = {10.4230/LIPIcs.ICDT.2024.8},
  publisher = {Schloss Dagstuhl - Leibniz-Zentrum f{\"{u}}r Informatik},
  volume = {290},
  pages = {8:1--8:20},
  url = {https://doi.org/10.4230/LIPIcs.ICDT.2024.8},
  editor = {Graham Cormode and Michael Shekelyan},
  series = {LIPIcs},
  _bib2doi_selected = {dblp:/rec/conf/icdt/CateDKW24.bib},
  _bib2doi_confirmed = {true},
  _bib2doi_finished = {true},
}

@phdthesis{wu2023study,
  title = {A Study of the Expressive Power of Homomorphism Counts},
  author = {Wei{-}Lin Wu},
  year = {2023},
  publisher = {University of California, Santa Cruz},
  timestamp = {Wed, 02 Oct 2024 01:00:00 +0200},
  biburl = {https://dblp.org/rec/phd/us/Wu23d.bib},
  bibsource = {dblp computer science bibliography, https://dblp.org},
  url = {https://www.escholarship.org/uc/item/4647715d},
  school = {University of California, Santa Cruz, {USA}},
  _bib2doi_selected = {dblp:/rec/phd/us/Wu23d.bib},
  _bib2doi_confirmed = {true},
  _bib2doi_finished = {true},
}

@misc{schonherr2026expressive,
  title={Expressive Power of Deep Homomorphism Networks over Relational Databases},
  author={Moritz Sch\"onherr and Balder ten Cate and Maurice Funk and Benny Kimelfeld and Carsten Lutz and Arie Soeteman},
  year={2026},
  eprint={2605.22852},
  archivePrefix={arXiv},
  primaryClass={cs.DB},
  url={https://arxiv.org/abs/2605.22852}
}

@article{grohe2024descriptive,
  title={The descriptive complexity of graph neural networks},
  author={Grohe, Martin},
  journal={TheoretiCS},
  volume={3},
  year={2024},
  publisher={Episciences. org}
}

@book{DBLP:books/daglib/0095988,
  author       = {Neil Immerman},
  title        = {Descriptive complexity},
  series       = {Graduate texts in computer science},
  publisher    = {Springer},
  year         = {1999},
  url          = {https://doi.org/10.1007/978-1-4612-0539-5},
  doi          = {10.1007/978-1-4612-0539-5},
  isbn         = {978-1-4612-6809-3},
  bibsource    = {dblp computer science bibliography, https://dblp.org}
}

@InProceedings{grohe2025querylanguages,
  author =	{Grohe, Martin and Standke, Christoph and Steegmans, Juno and Van den Bussche, Jan},
  title =	{{Query Languages for Neural Networks}},
  booktitle =	{ICDT},
  pages =	{9:1--9:18},
  series =	{Leibniz International Proceedings in Informatics (LIPIcs)},
  ISBN =	{978-3-95977-364-5},
  ISSN =	{1868-8969},
  year =	{2025},
  volume =	{328},
  editor =	{Roy, Sudeepa and Kara, Ahmet},
  publisher =	{Schloss Dagstuhl -- Leibniz-Zentrum f{\"u}r Informatik},
  address =	{Dagstuhl, Germany},
  URL =		{https://drops.dagstuhl.de/entities/document/10.4230/LIPIcs.ICDT.2025.9},
  URN =		{urn:nbn:de:0030-drops-229508},
  doi =		{10.4230/LIPIcs.ICDT.2025.9}
}

@inproceedings{barcelo2021graph,
  title={Graph neural networks with local graph parameters},
  author={Barcel{\'o}, Pablo and Geerts, Floris and Reutter, Juan and Ryschkov, Maksimilian},
  booktitle={Proc.\ of {NeurIPS}},
  pages={25280--25293},
  year={2021}
}

@article{barrington1990uniformity,
  title={On uniformity within NC1},
  author={Barrington, David A Mix and Immerman, Neil and Straubing, Howard},
  journal={Journal of Computer and System Sciences},
  volume={41},
  number={3},
  pages={274--306},
  year={1990},
  publisher={Elsevier}
}

@article{maehara2024deep,
  title={Deep homomorphism networks},
  author={Maehara, Takanori and NT, Hoang},
  journal={Advances in Neural Information Processing Systems},
  volume={37},
  pages={56076--56107},
  year={2024}
}

@book{immerman1987expressibility,
  title={Expressibility as a complexity measure: results and directions},
  author={Immerman, Neil},
  year={1987},
  publisher={IEEE}
}

@inproceedings{gradel1992inductive,
  title={Inductive definability with counting on finite structures},
  author={Gr{\"a}del, Erich and Otto, Martin},
  booktitle={International Workshop on Computer Science Logic},
  pages={231--247},
  year={1992},
  organization={Springer}
}

@inproceedings{kuske2017first,
  title={First-order logic with counting},
  author={Kuske, Dietrich and Schweikardt, Nicole},
  booktitle={{LICS}},
  pages={1--12},
  year={2017},
  organization={IEEE}
}

@article{grohe2026recursive,
author = {Grohe, Martin and Standke, Christoph and Steegmans, Juno and Van den Bussche, Jan},
title = {Recursive Querying of Neural Networks via Weighted Structures},
year = {2026},
issue_date = {May 2026},
publisher = {Association for Computing Machinery},
address = {New York, NY, USA},
volume = {4},
number = {2},
url = {https://doi.org/10.1145/3801911},
doi = {10.1145/3801911},
journal = {Proc. ACM Manag. Data},
month = may,
articleno = {115},
numpages = {20}
}

@article{hesse2002uniform,
title = {Uniform constant-depth threshold circuits for division and iterated multiplication},
journal = {Journal of Computer and System Sciences},
volume = {65},
number = {4},
pages = {695-716},
year = {2002},
note = {Special Issue on Complexity 2001},
issn = {0022-0000},
doi = {https://doi.org/10.1016/S0022-0000(02)00025-9},
url = {https://www.sciencedirect.com/science/article/pii/S0022000002000259},
author = {William Hesse and Eric Allender and David A. {Mix Barrington}},
}

@article{chandra1984constant,
  title={Constant depth reducibility},
  author={Chandra, Ashok K and Stockmeyer, Larry and Vishkin, Uzi},
  journal={SIAM Journal on Computing},
  volume={13},
  number={2},
  pages={423--439},
  year={1984},
  publisher={SIAM}
}

@inproceedings{relbench,
  author       = {Joshua Robinson and
                  Rishabh Ranjan and
                  Weihua Hu and
                  Kexin Huang and
                  Jiaqi Han and
                  Alejandro Dobles and
                  Matthias Fey and
                  Jan Eric Lenssen and
                  Yiwen Yuan and
                  Zecheng Zhang and
                  Xinwei He and
                  Jure Leskovec},
  Aeditor       = {Amir Globersons and
                  Lester Mackey and
                  Danielle Belgrave and
                  Angela Fan and
                  Ulrich Paquet and
                  Jakub M. Tomczak and
                  Cheng Zhang},
  title        = {{RelBench}: {A} Benchmark for Deep Learning on Relational Databases},
  booktitle    = {{NeurIPS}},
  year         = {2024},
  url          = {http://papers.nips.cc/paper\_files/paper/2024/hash/25cd345233c65fac1fec0ce61d0f7836-Abstract-Datasets\_and\_Benchmarks\_Track.html},
}

@article{zhou2020graph,
  author       = {Jie Zhou and
                  Ganqu Cui and
                  Shengding Hu and
                  Zhengyan Zhang and
                  Cheng Yang and
                  Zhiyuan Liu and
                  Lifeng Wang and
                  Changcheng Li and
                  Maosong Sun},
  title        = {Graph neural networks: {A} review of methods and applications},
  journal      = {{AI} Open},
  volume       = {1},
  pages        = {57--81},
  year         = {2020}
}

@inproceedings{DBLP:conf/icml/FeyHHLR0YYL24,
  author       = {Matthias Fey and
                  Weihua Hu and
                  Kexin Huang and
                  Jan Eric Lenssen and
                  Rishabh Ranjan and
                  Joshua Robinson and
                  Rex Ying and
                  Jiaxuan You and
                  Jure Leskovec},
  title        = {Position: Relational Deep Learning - Graph Representation Learning
                  on Relational Databases},
  booktitle    = {{ICML}},
  series       = {Proceedings of Machine Learning Research},
  pages        = {13592--13607},
  publisher    = {{PMLR} / OpenReview.net},
  year         = {2024}
}

@inproceedings{DBLP:conf/aaai/ZhaoWSHSY21,
  author       = {Jianan Zhao and
                  Xiao Wang and
                  Chuan Shi and
                  Binbin Hu and
                  Guojie Song and
                  Yanfang Ye},
  title        = {Heterogeneous Graph Structure Learning for Graph Neural Networks},
  booktitle    = {{AAAI}},
  pages        = {4697--4705},
  publisher    = {{AAAI} Press},
  year         = {2021}
}

@inproceedings{DBLP:conf/www/WangJSWYCY19,
  author       = {Xiao Wang and
                  Houye Ji and
                  Chuan Shi and
                  Bai Wang and
                  Yanfang Ye and
                  Peng Cui and
                  Philip S. Yu},
  title        = {Heterogeneous Graph Attention Network},
  booktitle    = {{WWW}},
  pages        = {2022--2032},
  publisher    = {{ACM}},
  year         = {2019}
}

@inproceedings{DBLP:conf/icml/WangWGWYWZ25,
  author       = {Yanbo Wang and
                  Xiyuan Wang and
                  Quan Gan and
                  Minjie Wang and
                  Qibin Yang and
                  David Wipf and
                  Muhan Zhang},
  title        = {Griffin: Towards a Graph-Centric Relational Database Foundation Model},
  booktitle    = {{ICML}},
  series       = {Proceedings of Machine Learning Research},
  publisher    = {{PMLR} / OpenReview.net},
  year         = {2025}
}

@inproceedings{DBLP:conf/cikm/LubarskyTGK23,
  author       = {Yuval Lev Lubarsky and
                  Jan T{\"{o}}nshoff and
                  Martin Grohe and
                  Benny Kimelfeld},
  title        = {Selecting Walk Schemes for Database Embedding},
  booktitle    = {{CIKM}},
  pages        = {1677--1686},
  publisher    = {{ACM}},
  year         = {2023}
}

@inproceedings{DBLP:conf/sigmod/CappuzzoPT20,
  author    = {Riccardo Cappuzzo and Paolo Papotti and Saravanan Thirumuruganathan},
  title     = {Creating Embeddings of Heterogeneous Relational Datasets for Data Integration Tasks},
  booktitle = {Proceedings of the 2020 ACM SIGMOD International Conference on Management of Data},
  pages     = {1335--1349},
  publisher = {ACM},
  year      = {2020},
  xdoi       = {10.1145/3318464.3389742},
  xurl       = {https://doi.org/10.1145/3318464.3389742}
}

@article{DBLP:journals/corr/abs-1909-01315,
  author       = {Minjie Wang and others},
  title        = {Deep Graph Library: Towards Efficient and Scalable Deep Learning on
                  Graphs},
  journal      = {CoRR},
  volume       = {abs/1909.01315},
  year         = {2019}
}

@inproceedings{fey2019fastgraphrepresentationlearning,
  author={Matthias Fey and Jan Eric Lenssen},
  booktitle = {ICLR 2019 Workshop on Representation Learning on Graphs and Manifolds},
  title = {Fast Graph Representation Learning with {PyTorch Geometric}},
  year = {2019}
}

@article{DBLP:journals/jmlr/BonaldLLC20,
  author       = {Thomas Bonald and
                  Nathan de Lara and
                  Quentin Lutz and
                  Bertrand Charpentier},
  title        = {Scikit-network: Graph Analysis in Python},
  journal      = {J. Mach. Learn. Res.},
  volume       = {21},
  pages        = {185:1--185:6},
  year         = {2020}
}

@inproceedings{DBLP:conf/pods/GreenKT07,
  author       = {Todd J. Green and
                  Gregory Karvounarakis and
                  Val Tannen},
  title        = {Provenance semirings},
  booktitle    = {{PODS}},
  pages        = {31--40},
  publisher    = {{ACM}},
  year         = {2007}
}

@article{DBLP:journals/pacmmod/ImMNP24,
  author       = {Sungjin Im and
                  Benjamin Moseley and
                  Hung Q. Ngo and
                  Kirk Pruhs},
  title        = {Polynomial Time Convergence of the Iterative Evaluation of Datalogo
                  Programs},
  journal      = {Proc. {ACM} Manag. Data},
  volume       = {2},
  number       = {5},
  pages        = {221:1--221:19},
  year         = {2024}
}

@inproceedings{DBLP:conf/vldb/DalviS04,
  author       = {Nilesh N. Dalvi and
                  Dan Suciu},
  title        = {Efficient Query Evaluation on Probabilistic Databases},
  booktitle    = {{VLDB}},
  pages        = {864--875},
  year         = {2004}
}

@article{DBLP:journals/pacmmod/ZhaoDKRT24,
  author       = {Hangdong Zhao and others},
  title        = {Evaluating Datalog over Semirings: {A} Grounding-based Approach},
  journal      = {Proc. {ACM} Manag. Data},
  volume       = {2},
  number       = {2},
  pages        = {90},
  year         = {2024}
}

@inproceedings{DBLP:conf/icml/GilmerSRVD17,
  author       = {Justin Gilmer and
                  Samuel S. Schoenholz and
                  Patrick F. Riley and
                  Oriol Vinyals and
                  George E. Dahl},
  Aeditor       = {Doina Precup and
                  Yee Whye Teh},
  title        = {Neural Message Passing for Quantum Chemistry},
  booktitle    = {Proc.\ of {ICML}},
  Aseries       = {Proceedings of Machine Learning Research},
  pages        = {1263--1272},
  Apublisher    = {{PMLR}},
  year         = {2017},
  url          = {http://proceedings.mlr.press/v70/gilmer17a.html},
}

@article{DBLP:journals/corr/abs-1806-01261,
  author       = {Peter W. Battaglia and
                  Jessica B. Hamrick and
                  Victor Bapst and
                  Alvaro Sanchez{-}Gonzalez and
                  Vin{\'{\i}}cius Flores Zambaldi and
                  Mateusz Malinowski and
                  Andrea Tacchetti and
                  David Raposo and
                  Adam Santoro and
                  Ryan Faulkner and
                  {\c{C}}aglar G{\"{u}}l{\c{c}}ehre and
                  H. Francis Song and
                  Andrew J. Ballard and
                  Justin Gilmer and
                  George E. Dahl and
                  Ashish Vaswani and
                  Kelsey R. Allen and
                  Charles Nash and
                  Victoria Langston and
                  Chris Dyer and
                  Nicolas Heess and
                  Daan Wierstra and
                  Pushmeet Kohli and
                  Matthew M. Botvinick and
                  Oriol Vinyals and
                  Yujia Li and
                  Razvan Pascanu},
  title        = {Relational inductive biases, deep learning, and graph networks},
  journal      = {CoRR},
  volume       = {abs/1806.01261},
  year         = {2018},
  url          = {http://arxiv.org/abs/1806.01261},
  eprinttype   = {arXiv},
  eprint       = {1806.01261},
  bibsource    = {dblp computer science bibliography, https://dblp.org}
}

@article{dwivedi2021generalization,
  title={A Generalization of Transformer Networks to Graphs},
  author={Dwivedi, Vijay Prakash and Bresson, Xavier},
  journal={AAAI Workshop on Deep Learning on Graphs: Methods and Applications},
  year={2021}
}

@article{DBLP:journals/vldb/DeutchGM18,
  author       = {Daniel Deutch and
                  Amir Gilad and
                  Yuval Moskovitch},
  title        = {Efficient provenance tracking for Datalog using top-k queries},
  journal      = {{VLDBJ}},
  volume       = {27},
  number       = {2},
  pages        = {245--269},
  year         = {2018}
}

@inproceedings{DBLP:conf/pods/GroheKKL20,
  author       = {Martin Grohe and
                  Benjamin Lucien Kaminski and
                  Joost{-}Pieter Katoen and
                  Peter Lindner},
  title        = {Generative Datalog with Continuous Distributions},
  booktitle    = {{PODS}},
  year         = {2020}
}

@article{lubarsky2026incorporating,
  title={Incorporating Deep Learning Design in Database Queries},
  author={Lubarsky, Yuval Lev and Light, Dean and Berger, Boaz and Agmon, Shunit and Kimelfeld, Benny},
  journal={arXiv preprint arXiv:2605.24207},
  year={2026}
}

@inproceedings{DBLP:conf/pods/AlvianoLMP23,
  author       = {Mario Alviano and
                  Matthias Lanzinger and
                  Michael Morak and
                  Andreas Pieris},
  title        = {Generative Datalog with Stable Negation},
  booktitle    = {{PODS}},
  pages        = {21--32},
  publisher    = {{ACM}},
  year         = {2023}
}

@article{DBLP:journals/tods/BaranyCKOV17,
  author       = {Vince B{\'{a}}r{\'{a}}ny and
                  Balder ten Cate and
                  Benny Kimelfeld and
                  Dan Olteanu and
                  Zografoula Vagena},
  title        = {Declarative Probabilistic Programming with Datalog},
  journal      = {{ACM} Trans. Database Syst.},
  volume       = {42},
  number       = {4},
  pages        = {22:1--22:35},
  year         = {2017}
}

@inproceedings{DBLP:conf/ijcai/RaedtKT07,
  author       = {Luc De Raedt and
                  Angelika Kimmig and
                  Hannu Toivonen},
  title        = {ProbLog: {A} Probabilistic Prolog and Its Application in Link Discovery},
  booktitle    = {{IJCAI}},
  year         = {2007}
}

@article{DBLP:journals/vldb/PanWL24,
  author       = {James Jie Pan and
                  Jianguo Wang and
                  Guoliang Li},
  title        = {Survey of vector database management systems},
  journal      = {{VLDB} J.},
  volume       = {33},
  number       = {5},
  xpages        = {1591--1615},
  year         = {2024}
}

@inproceedings{DBLP:conf/nips/LewisPPPKGKLYR020,
  xauthor       = {Patrick Lewis and
                  Ethan Perez and
                  Aleksandra Piktus and
                  Fabio Petroni and
                  Vladimir Karpukhin and
                  Naman Goyal and
                  Heinrich K{\"{u}}ttler and
                  Mike Lewis and
                  Wen{-}tau Yih and
                  Tim Rockt{\"{a}}schel and
                  Sebastian Riedel and
                  Douwe Kiela},
    author       = {Patrick Lewis and
                  others},
  title        = {Retrieval-Augmented Generation for Knowledge-Intensive {NLP} Tasks},
  booktitle    = {NeurIPS},
  year         = {2020}
}

@inproceedings{DBLP:conf/icde/TonshoffFGK23,
  author       = {Jan T{\"{o}}nshoff and
                  Neta Friedman and
                  Martin Grohe and
                  Benny Kimelfeld},
  title        = {Stable Tuple Embeddings for Dynamic Databases},
  booktitle    = {{ICDE}},
  year         = {2023},
  organization={IEEE},
  pages={1286--1299},
}

@inproceedings{DBLP:conf/sigmod/BordawekarS17,
  author       = {Rajesh Bordawekar and
                  Oded Shmueli},
  title        = {Using Word Embedding to Enable Semantic Queries in Relational Databases},
  booktitle    = {DEEM},
  year         = {2017}
}

@inproceedings{DBLP:conf/sigmod/MudgalLRDPKDAR18,
  author       = {Sidharth Mudgal and
                  Han Li and
                  Theodoros Rekatsinas and
                  AnHai Doan and
                  Youngchoon Park and
                  Ganesh Krishnan and
                  Rohit Deep and
                  Esteban Arcaute and
                  Vijay Raghavendra},
  title        = {Deep Learning for Entity Matching: {A} Design Space Exploration},
  booktitle    = {{SIGMOD} Conference},
  pages        = {19--34},
  publisher    = {{ACM}},
  year         = {2018}
}

@article{DBLP:journals/ml/CvetkovIlievAV23,
  author       = {Alexis Cvetkov{-}Iliev and
                  Alexandre Allauzen and
                  Ga{\"{e}}l Varoquaux},
  title        = {Relational data embeddings for feature enrichment with background
                  information},
  journal      = {Mach. Learn.},
  volume       = {112},
  number       = {2},
  pages        = {687--720},
  year         = {2023}
}

@inproceedings{DBLP:conf/icml/KimGV24,
  author       = {Myung Jun Kim and
                  L{\'{e}}o Grinsztajn and
                  Ga{\"{e}}l Varoquaux},
  title        = {{CARTE:} Pretraining and Transfer for Tabular Learning},
  booktitle    = {{ICML}},
  year         = {2024}
}

@article{DBLP:journals/pacmpl/LiHN23,
  author       = {Ziyang Li and
                  Jiani Huang and
                  Mayur Naik},
  title        = {Scallop: {A} Language for Neurosymbolic Programming},
  journal      = {Proc. {ACM} Program. Lang.},
  volume       = {7},
  pages        = {1463--1487},
  year         = {2023}
}

@inproceedings{DBLP:conf/nips/ManhaeveDKDR18,
  author       = {Robin Manhaeve and
                  Sebastijan Dumancic and
                  Angelika Kimmig and
                  Thomas Demeester and
                  Luc De Raedt},
  title        = {DeepProbLog: Neural Probabilistic Logic Programming},
  booktitle    = {NeurIPS},
  pages        = {3753--3763},
  year         = {2018}
}

@article{10.1007/s00521-024-09960-z,
  author       = {Bikram Pratim Bhuyan and
                  Amar Ramdane{-}Cherif and
                  Ravi Tomar and
                  T. P. Singh},
title = {Neuro-symbolic artificial intelligence: a survey},
year = {2024},
issue_date = {Jul 2024},
publisher = {Springer-Verlag},
address = {Berlin, Heidelberg},
volume = {36},
number = {21},
issn = {0941-0643},
doi = {10.1007/s00521-024-09960-z},
journal = {Neural Comput. Appl.},
pages = {12809–12844},
numpages = {36}
}

@book{DBLP:series/faia/342,
  editor       = {Pascal Hitzler and
                  Md. Kamruzzaman Sarker},
  title        = {Neuro-Symbolic Artificial Intelligence: The State of the Art},
  publisher    = {{IOS} Press},
  year         = {2021}
}

@article{DBLP:journals/nn/YuYLWP23,
  author       = {Dongran Yu and
                  Bo Yang and
                  Dayou Liu and
                  Hui Wang and
                  Shirui Pan},
  title        = {A survey on neural-symbolic learning systems},
  journal      = {Neural Networks},
  volume       = {166},
  pages        = {105--126},
  year         = {2023}
}

@inproceedings{DBLP:conf/cvpr/0004YG18,
  author       = {Xiaolong Wang and
                  Yufei Ye and
                  Abhinav Gupta},
  title        = {Zero-Shot Recognition via Semantic Embeddings and Knowledge Graphs},
  booktitle    = {{CVPR}},
  pages        = {6857--6866},
  publisher    = {Computer Vision Foundation / {IEEE} Computer Society},
  year         = {2018}
}

@inproceedings{DBLP:conf/aaai/SilvaG21,
  author       = {Andrew Silva and
                  Matthew C. Gombolay},
  title        = {Encoding Human Domain Knowledge to Warm Start Reinforcement Learning},
  booktitle    = {{AAAI}},
  pages        = {5042--5050},
  publisher    = {{AAAI} Press},
  year         = {2021}
}

@inproceedings{DBLP:conf/nips/ShahZSVYC20,
  author       = {Ameesh Shah and
                  Eric Zhan and
                  Jennifer J. Sun and
                  Abhinav Verma and
                  Yisong Yue and
                  Swarat Chaudhuri},
  title        = {Learning Differentiable Programs with Admissible Neural Heuristics},
  booktitle    = {NeurIPS},
  year         = {2020}
}

@inproceedings{DBLP:conf/iclr/ZhangCYRLQS20,
  author       = {Yuyu Zhang and
                  Xinshi Chen and
                  Yuan Yang and
                  Arun Ramamurthy and
                  Bo Li and
                  Yuan Qi and
                  Le Song},
  title        = {Efficient Probabilistic Logic Reasoning with Graph Neural Networks},
  booktitle    = {{ICLR}},
  publisher    = {OpenReview.net},
  year         = {2020}
}

@inproceedings{DBLP:conf/pods/Geerts23,
  author       = {Floris Geerts},
  title        = {A Query Language Perspective on Graph Learning},
  booktitle    = {{PODS}},
  pages        = {373--379},
  publisher    = {{ACM}},
  year         = {2023}
}

@inproceedings{DBLP:conf/iclr/GeertsR22,
  author       = {Floris Geerts and
                  Juan L. Reutter},
  title        = {Expressiveness and Approximation Properties of Graph Neural Networks},
  booktitle    = {{ICLR}},
  publisher    = {OpenReview.net},
  year         = {2022}
}

@misc{zhang2024neurosymbolicaiexplainabilitychallenges,
      title={Neuro-Symbolic {AI}: Explainability, Challenges, and Future Trends}, 
      author={Xin Zhang and Victor S. Sheng},
      year={2024},
      eprint={2411.04383},
      archivePrefix={arXiv},
      primaryClass={cs.AI},
      url={https://arxiv.org/abs/2411.04383}, 
}

@article{10.1145/3801914,
author = {Barcel\'{o}, Pablo and Geerts, Floris and Lanzinger, Matthias and Pakhomenko, Klara and Van den Bussche, Jan},
title = {A Logical View of GNN-Style Computation and the Role of Activation Functions},
year = {2026},
Aissue_date = {May 2026},
publisher = {Association for Computing Machinery},
address = {New York, NY, USA},
volume = {4},
number = {2},
journal = {Proc. ACM Manag. Data},
month = may,
articleno = {118},
numpages = {19}
}

\appendix

\section{Example: relational graph construction and message passing.}
\label{app:example}

In machine learning over relational data, a common strategy is to
construct a graph representation of the data, with entities as nodes and
relationships, often derived from foreign keys, as edges. Graph neural networks
can then be used to learn from both the feature values associated with entities
and the relational structure connecting them. Such a pipeline can be expressed
end-to-end as an NRP: graph construction, initial feature construction, and
message passing are all rules of the same program. We illustrate this with a stylized example.

Suppose the input contains relations
\[
  \mathrm{Customer}(cid,s),\qquad
  \mathrm{Product}(pid,k),\qquad
  \mathrm{Order}(oid,cid,pid,t,r),
\]
where \(s\) is the customer segment, \(k\) the product category, \(t\) the country,  and \(r\)
the order channel. We first construct a directed graph whose nodes are customers,
products, and orders, and where the 
nodes are assigned initial embeddings.
\[
\begin{array}{lll}
  \text{CNode}(cid)\angs{\tsum}
  &\from&
   \text{Customer}(cid,s)\wedge \text{SegEmb}(s),\\

  \text{PNode}(pid)\angs{\tsum}
  &\from&
  \text{Product}(pid,k)\wedge \text{CatEmb}(k),\\

  \text{ONode}(oid)\angs{\tsum}
  &\from_{\oplus}&
   \text{Order}(oid,cid,pid,t,r)\wedge \text{CountryEmb}(t)\wedge \text{ChanEmb}(r),\\[1mm]

  \text{Node}(x)\angs{\tsum} &\from& 
      \text{CNode}(x) \lor \text{PNode}(x) \lor \text{ONode}(x), \\[3mm]

  \mathrm{PlacedBy}(oid,cid) &\from& \mathrm{Order}(oid,cid,pid,t,r),\\
  \mathrm{Contains}(oid,pid) &\from& \mathrm{Order}(oid,cid,pid,t,r),\\[1mm]

  \mathrm{Edge}(u,v) &\from&
  \mathrm{PlacedBy}(u,v)\lor \mathrm{Contains}(u,v).
\end{array}
\]
Here we assume that segments, categories, countries and channels have associated embeddings stored in embedding tables
\(\mathrm{SegEmb}(s)\), \(\mathrm{CatEmb}(k)\), 
\(\mathrm{CountryEmb}(t)\),
and
\(\mathrm{ChanEmb}(r)\). 
The initial node embeddings, stored as annotations in the e-relation $\text{Node}[1]\angs{1}$, are obtained by
concatenating their feature embeddings.
Note also that we omitted the aggregation function in the last three rules for readability, because all relations in question are unembedded.

One layer of message passing over the derived graph is then:
\[
\begin{array}{lll}
  \text{Aggr}(u)\angs{\tsum}
  &\from&
  \text{Edge}(u,v)\wedge \text{Node}(v),\\
  \text{AggrZeroes}(u)\angs{\mu_{\tup{0}}}
  &\from&
  \text{Node(u)} \\
  \text{AggrOrZeroes}(u)\angs{\tsum}&\from& \text{Aggr}(u)\lor\text{AggrZeroes}(u) \\[1mm]

  \mathrm{Concat}(u)
  &\from_{\oplus}&
  \mathrm{Node}(u)\wedge \textrm{AggrOrZeroes}(u) \\[1mm]
  
  \mathrm{Node}'(u)\langle \rho\rangle
  &\from&
  \mathrm{Concat}(u)
\end{array}
\]
where $\mu_{\tup{0}}$ is the constant function returning an all-zeroes vector.
The relation \(\textrm{AggrOrZeroes}\) aggregates messages from the neighbors of $u$ and its embedding is the zero vector if \(u\) does not have any neighbors.
The trainable transformation \(\rho\)
computes the updated embedding. Repeating this block yields a multi-layer GNN
over the graph derived from the relational input.

\section{Proofs for section \ref{sec:CompareNRPs}}
\propNRPProdSimulatesConcat*
\begin{proof}
    We substitute a conjunction rule with $\tsum$ combination in $\Pi_1$ by a conjunction rule with $\oplus$ combination and a transformation rule. Let $\Psi \in \Pi_1$ with $\arity(R)=k\angs{d}, \arity(R_i)=k_i\angs{d_i}$:
    \[ R(\tup x)\angs{\tsum} \;\from_{\tsum}\;
        R_1(\tup x_1)\land\dots\land R_\ell(\tup x_\ell)
    \]
    We substitute in:
    \[ R'(\tup x)\angs{\tsum} \;\Leftarrow_{\oplus}\;
        R_1(\tup x_1)\land\dots\land R_\ell(\tup x_\ell) \qquad
        R(\tup x)\angs{\mu} \from R'(\tup x)
    \]
    where $\mu$ is a $\reluFFN$ that applies element-wise sum. Now similarly we substitute $\Psi \in \Pi_1$ with $\oplus$ combination by a rule with $\odot$ combination and several transformation rules. Let $\Psi$:
     \[ R(\tup x)\angs{\tsum} \;\Leftarrow_{\oplus}\;
        R_1(\tup x_1)\land\dots\land R_\ell(\tup x_\ell)
    \]    
    Assume w.l.o.g. that for some $m$, $d_1, \dots, d_m >0$ while $d_{m+1}, \dots, d_\ell =0$. We substitute in:
    \[\begin{array}{lllll}
        &\text{ for $1 \leq i \leq m$: } &R'_i(\tup x_i)\angs{\mu_i} & \Leftarrow &R_i(\tup x_i)\\
        &&R(\tup x)\angs{\tsum}
        &
        \Leftarrow_{\odot} &
        R'_1(\tup x_1)\land\dots\land R'_m(\tup x_m) \land R_{m+1}(\tup x_{m+1}) \land\dots\land R_\ell(\tup x_\ell) 
    \end{array}\]
    where: $\mu_i(f) = \tup 1^{(\sum^{i-1}_{j=1} d_j)} \oplus f \oplus \tup 1^{(\sum^{\ell}_{j=i+1} d_j)}$.
    The derived embeddings for $R$ are the same since for embeddings $\tup e_i \in \reals^{d_1}, \dots, \tup e_m \in \reals^{d_m}$:
    \begin{align*}
        \displaystyle\bigoplus_{1 \leq i \leq m}(\tup e_{i}) = \displaystyle\bigodot_{1 \leq i \leq m}(\mu_{i}(\tup e_{i}))
    \end{align*}
    Note that if $\Pi_1$ is disjunction-free, so is $\Pi_2$. If $\Pi_1$ is zero-ary, then by definition, all body atoms in $r$ are zero-ary or have arity $k\angs{0}$, so that the heads of added transformation rules are zero-ary, thus $\Pi_2$ is zero-ary. Monadicity is preserved for the same reason. If $\Pi_1$ is frontier guarded, so is $\Pi_2$.
\end{proof}

\PropProductTransformationtoCombination*
\begin{proof}
Let rule $\Psi_1$ in $\Pi_1$:
\begin{align*}
R(\tup x)\angs{\mu_S} &\from R_0(\tup x)
\end{align*}
where $\arity(R)=k\angs{1}, \arity(R_0)=k\angs{d_0}$ and $\mu_S(x_1,\dots,x_d) = \prod_{i \in S} x_i$ for some non-empty $S \subseteq \{1,\dots,n\}$. Substitute in $|S|$ transformation rules where for each $i \in S$, $\Psi_i$ is:
\begin{align*}
    R_i(\tup x)\angs{\mu_i} &\from R_0(\tup x) \text{ where $\mu_i$ selects the $i$-th element}
\end{align*}
and the conjunction rule:
\begin{align*}    
R(\tup x) \fromprod \bigwedge_{i \in S} R_i(\tup x)
\end{align*}
Note that if $\Pi_1$ is zero-ary, monadic, frontier guarded or disjunction-free, the same holds for $\Pi_2$.
\end{proof}

If the transformations include element-wise product, $\oplus$ combination can also simulate $\odot$ combination, although with weaker structural guarantees:
\begin{restatable}{proposition}{propConcatSimulatesProd}
\label{prop:NRM_concat_simulates_prod}
    For every NRP $\Pi_1$ with $\odot$ combination there exists a relational expansion $\Pi_2$ with only $\oplus$ combination.    
    Moreover:
    \begin{enumerate}
        \item If \,\,$\Pi_1$ is disjunction-free the same holds for $\Pi_2$; 
        \item For any function class $\mathcal{F}$ 
    containing $\reluFFNs$, if $\Pi_1$ only uses transformations from $\mathcal{F}$ then $\Pi_2$ uses transformations from $\mathcal{F}\cup \calF_\times$. 
    \end{enumerate}
\end{restatable}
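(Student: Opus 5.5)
We replace, rule by rule, every conjunction rule of $\Pi_1$ that uses $\odot$ by a conjunction rule using $\oplus$ followed by a transformation rule that multiplies the appropriate coordinates. Disjunction and transformation rules are kept unchanged. We process the rules in order, so each new rule only refers to relations that are already available. Hence $\Pi_2$ is a relational expansion of $\Pi_1$, and it is disjunction-free whenever $\Pi_1$ is.

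Concretely, consider a rule $R(\tup x)\angs{\tsum}\fromprod R_1(\tup x_1)\land\dots\land R_\ell(\tup x_\ell)$ with $d=\max_i d_i$. We replace it by two rules:
\[
R'(\tup x_1,\ldots,\tup x_\ell)\fromconcat R_1(\tup x_1)\land\dots\land R_\ell(\tup x_\ell),
\qquad
R''(\tup x_1,\ldots,\tup x_\ell)\angs{\mu}\from R'(\tup x_1,\ldots,\tup x_\ell).
\]
The head of the first rule lists all body variables, each once. Its embedding therefore lies in $\reals^{d_1+\cdots+d_\ell}$ and is the concatenation $\tup e_{1,h}\oplus\cdots\oplus\tup e_{\ell,h}$ for the unique homomorphism $h$ determined by the head tuple. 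No aggregation is involved.

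The transformation $\mu$ outputs a vector in $\reals^d$. Its $j$-th coordinate is the product of the $j$-th coordinates of all blocks $i$ with $d_i\ge j$. This is a composition of coordinate selections, which are affine and hence in $\reluFFNs\subseteq\calF$, with functions from $\calF_\times$, namely one product $\prod_{i\in S_j}$ per output coordinate $j$, applied in parallel. We assume such a parallel combination is available via composition, or we realise it by $d$ separate transformation rules into $1\angs{1}$-dimensional relations followed by re-assembly. If a set $S_j$ is empty, the coordinate is the constant $1$, which is affine.

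Finally we add the projection $R(\tup x)\angs{\tsum}\fromconcat R''(\tup x_1,\ldots,\tup x_\ell)$. It has a single body atom, so the choice of combination function is irrelevant. Its aggregation sums over exactly the homomorphisms of the original body that agree on $\tup x$. Hence it yields $\tsum\multiset{\odot(\tup e_{1,h},\ldots,\tup e_{\ell,h})\mid h(\tup x)=\tup a}$, as required.

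Correctness thus reduces to the observation that homomorphisms of the original body correspond bijectively to facts of $R'$, each carrying the concatenated embeddings.

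The main obstacle is the re-assembly of the $d$ scalar products into one $d$-dimensional vector without using $\odot$. If $\calF_\times$-transformations can only produce one scalar, we produce each coordinate in its own relation $C_j$ and then combine them with the rule $R''(\ldots)\fromconcat C_1(\ldots)\land\dots\land C_d(\ldots)$. Since all these relations have identical content tuples, the concatenation is exactly the desired vector.

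This step, together with the widened intermediate relations $R'$ and $R''$, is why monadicity and frontier guardedness are not claimed to be preserved. It is also why the claim on transformation classes reads $\calF\cup\calF_\times$.
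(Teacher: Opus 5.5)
Your proposal is correct and follows the paper's proof. The paper makes the same replacement of each $\odot$-rule by three rules: a $\oplus$-rule whose head keeps all body variables, a coordinatewise-product transformation, and a final $\tsum$-projection onto $\tup x$. Your extra step of realising the product map as $d$ separate $\calF_\times$-rules, re-assembled by a $\oplus$-conjunction, is a worthwhile tightening: the paper's single $\mu$ is described only loosely as built from $\calF_\times$ and affine maps, and it is not literally a member of $\calF\cup\calF_\times$.
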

\begin{proof}
    Let $\Psi$ in $\Pi_1$:
    \[ R(\tup x)\angs{\tsum}\;\Leftarrow_{\odot}\;
        R_1(\tup x_1)\land\dots\land R_\ell(\tup x_\ell)
    \]
    Substitute in the following rules:
    \begin{align*}
        R'(\tup x_1, \dots, \tup x_\ell)\;               & \Leftarrow_{\oplus}\;
        R_1(\tup x_1)\land\dots\land R_\ell(\tup x_\ell)                                                  \\
        R''(\tup x_1, \dots, \tup x_\ell)\angs{\mu}      & \Leftarrow R'(\tup x_1, \dots, \tup x_\ell) \\
        R(\tup x)\angs{\tsum}                           & \Leftarrow R''(\tup x_1, \dots, \tup x_\ell)
    \end{align*}
    where $\mu$ applies element-wise multiplication over the concatenation of vectors by multiplying subsets of the embedding with transformations in $\calF_\times$ and possibly using affine transformations to obtain embeddings of the right size. Note that if $\Pi_1$ is disjunction-free, so is $\Pi_2$. However, if $\Pi_1$ is zero-ary, monadic or frontier guarded the same does not necessarily hold for $\Pi_2$.
\end{proof}

\section{Proofs for Section~\ref{sec:zero-ary}}
\thmQueryAlgs*
    \begin{proof}
    {[only if]}:
        Let $(\{F_1, \ldots, F_k\},X)$ be a non-adaptive query algorithm for $\mathcal{C}$. We construct a corresponding
         NRP $\Pi$. In the definition of $\Pi$ below, we make use of $\oplus$ as combination function. However, by Proposition~\ref{prop:NRM_prod_simulates_concat}, the program can be rewritten to only use $\odot$.
        
        The NRP $\Pi$ uses 
        IDBs $S_i[0]\angs{1}$ and $R_i[0]\angs{1}$ for each $i\leq k$, as well
        as IDBs $R_{all}[0]\angs{k}$ and
         $\text{Ans}[0]\angs{1}$.
        As rules, it contains:
        \begin{itemize}
            \item for each
        $i\leq k$, a conjunction rule
        of the form 
        \[S_i()\angs{\tsum}\from \left( \bigwedge_{\text{$R(a_1, \ldots, a_n)$ a fact of $F_i$}}R(x_1,\ldots,x_n)\right)\land\text{True}_1(),\]
        \item A disjunction rule $R_i()\angs{\tsum}\from S_i() \lor \text{True}_0()$,
        \item a rule $R_{all}()\from_\oplus R_1()\land\cdots\land R_k()$,
        \item a rule $\text{Ans}()\angs{\mu}\from R_{all}()$, where $\mu$ is the indicator function of $X$, i.e.,  $\mu(\tup{x})=1$ for $\tup{x}\in X$ and $\mu(\tup{x})=0$ for $\tup{x}\not\in X$. 
        \end{itemize}
        By construction, 
        an $R_{\text{all}}()$ fact is derived on every input database $D$, and its embedding vector is the 
        homomorphism count vector $(|\Hom(F_1,D)|,\ldots,|\Hom(F_k,D)|)$.
        Note that the disjunctive rules ensure that an
        $R_i$-fact is derived even when $|\Hom(F_i,D)|=0$.
        It follows that, with answer IDB $\text{Ans}$ and the default acceptance policy ($x>0$), $\Pi$ accepts the same databases as the non-adaptive query algorithm.

        [if]: Suppose $\mathcal{C}$ is defined by a gated zero-ary NRP $\Pi=(\Psi_1,\dots,\Psi_n)$.
Let $F_1,\dots,F_k$
be the distinct canonical databases obtained as follows: for each conjunctive rule of $\Pi$, collect the EDB-atoms occurring in its body and take their canonical database.
We show that there exists a set $X\subseteq \mathbb N^k$ such that, for every input database $D$, $D\in \mathcal{C}$ iff $(|\Hom(F_1,D)|,\dots,|\Hom(F_k,D)|)\in X$.

We will in fact prove a stronger claim. For a database $D$, let:
$$\mathbf h(D):=(|\Hom(F_1,D)|,\dots,|\Hom(F_k,D)|)$$
we then prove, by induction along the rule order of $\Pi$:

\medskip
\noindent
\emph{Claim.}
For every IDB $R$ of $\Pi$ of arity $0\angs{d_R}$, there exists a function
$g_R:\mathbb N^k\to \{\bot\}\cup \mathbb R^{d_R}$
such that for every input database $D$ and $\tup e \in \reals^{d_R}$,
\begin{align*}
  R()\angs{\tup e}\in \Pi(D) &\iff g_R(\mathbf{h}(D)) = \tup e.
\end{align*}

Note that, since $R$ has content arity $0$, its e-relation is either empty or consists of the single e-fact $R()\langle \mathbf{e}\rangle$.
Thus, the above claim says that both the existence of $R()$ and, if present, its embedding are determined by $\mathbf h(D)$.
The proof is by induction over the rules $\Psi_i$, with cases corresponding to each of the three rule types.
If $\Psi_i$ is a disjunctive rule $R()\angs{\alpha} \Leftarrow R_1() \lor \cdots \lor R_\ell()$, then all $R_j$ with $1 \leq j \leq \ell$ must be IDBs, and $g_R$ can be obtained using $\alpha$ and the functions $g_{R_1}, \ldots, g_{R_\ell}$ which exist by the induction hypothesis. A similar argument applies if $\Psi_i$ is a transformation rule.
Now suppose that $\Psi_i$ is a conjunction rule of the
form 
\[
R()\langle \alpha\rangle \Leftarrow_o
\beta_1 \land \cdots \land \beta_m \land B_1() \land \cdots \land B_t(),
\]
where $\beta_1,\dots,\beta_m$ are the EDB-atoms in the body and $B_1(),\dots,B_t()$ are the zero-ary IDB-atoms in the body.
Let $F_\Psi$ be the canonical database of the EDB-atoms $\beta_1,\dots,\beta_m$.
By construction, $F_\Psi$ is one of $F_1,\dots,F_k$,
and by the induction hypothesis, the claim holds for $B_1, \ldots, B_t$.

For any input database $D$, 
whether an $R()$ fact is derived
is determined by whether the facts $B_1(),\dots,B_t()$ are present
and whether there is a homomorphism from $F_\Psi$ to $D$. The 
former, by induction hypothesis, is determined by \(\tup{h}(D)\), while
the latter is determined by \(\tup{h}(D)\)
because $F_\Psi$ is among $F_1, \ldots, F_k$.
Suppose now that all $B_1(),\dots,B_t()$ are present and there is a homomorphism from \(F_\Psi\) to \(D\).
Then the number of homomorphisms from the body of $\Psi_i$ is exactly $|\Hom(F_\Psi,D)|$.
Moreover, each such homomorphism contributes the same vector $c$ to the aggregation, which, moreover, is determined by $\tup{h}(D)$:
the EDB-atoms have embedding dimension $0$, so they contribute only the unique $0$-dimensional vectors, while each $B_j()$ contributes its fixed embedding $\tup{e}_{B_j}$, which is determined by $\mathbf h(D)$ by the induction hypothesis. 
Therefore, if $R()$ is derived, its embedding is
\[
\alpha(\multiset{\underbrace{c,\dots,c}_{|\Hom(F_\Psi,D)|\text{ times}}}).
\]
which is determined by $\tup{h}(D)$.
    \end{proof}

\section{Proofs for Section~\ref{sec:DHN}}
\subsection{Translation from DHN to Monadic NRP}
\label{sec:appdx_DHN_to_NRP}

We show that embeddings produced by homomorphism-queries, DHN layers and DHNs are also computed by monadic NRPs. Recall from Section~\ref{sec:DHN} that $\varepsilon(D^a)$ is the element-embedded database $({D'}^{a},\lambda')$ consisting of a classical database $D'$ that contains all facts in $D$ without their embeddings, a monadic relation `Adom' that contains all elements in the active domain of $D$ and an embedding function such that for monadic relations $P_1,\dots,P_n$ and $\arity(P_i)=1\angs{d_i}$:
$$\lambda'(a) = \displaystyle\bigoplus_{1 \leq i \leq n} \begin{cases}
    (1)\oplus \tup e &\text{ if } P_i(a)\angs{\tup e} \in D\\
    (0)\oplus\tup 0^{(d_i)} &\text{ otherwise}
\end{cases}$$

\begin{lemma}
    \label{lem:Homquery_as_NRP}
    Let $(F^x,\mu)$ be a homomorphism query. There exists a monadic NRP $\Pi$ with monadic relation $R$ such that for each e-database $D$ and $a \in \adom(D)$:
    $$
    R(a)\angs{\tup e} \in \Pi(D) \text{ iff } \res((F^x,\mu),\varepsilon(D^a))=\tup e
    $$
    Moreover, if $\mu$ has functions in class $\calF$, including affine transformations, $\Pi$ has transformations in $\calF$.
\end{lemma}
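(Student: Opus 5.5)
We build $\Pi$ directly from $F$ and $\mu$ in three stages. First we reconstruct the element embedding $\lambda'$ of $\varepsilon(D)$ as a monadic IDB, then we apply each $\mu^y$ to it, and finally we evaluate the homomorphism query with a single conjunction rule.

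\textbf{Stage 1: recovering $\lambda'$.} Let $P_1,\dots,P_n$ be the monadic relations of $\scs$, with $\arity(P_i)=1\angs{d_i}$. For each $i$ we derive an IDB $P_i'(x)$ of arity $1\angs{1+d_i}$ as follows.
\begin{itemize}
\item A transformation rule $P_i^+(x)\angs{\nu_i}\from P_i(x)$, where the affine map $\nu_i(\tup e)=(1)\oplus\tup e$ prepends a $1$.
\item A rule $Z_i(x)\from \Adom(x)\land \text{True}_{\tup 0}()$, with $\tup 0$ of length $1+d_i$.
\item A disjunction rule $P_i'(x)\angs{\tsum}\from P_i^+(x)\lor Z_i(x)$.
\end{itemize}
Because the zero vector is neutral for $\tsum$, $P_i'(a)$ is derived for every $a\in\adom(D)$, and its embedding is exactly the $i$-th block of $\lambda'(a)$. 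A conjunction rule $\Lambda(x)\from_\oplus P_1'(x)\land\dots\land P_n'(x)$ then yields $\Lambda(a)\angs{\lambda'(a)}$ for every $a$. We turn $\oplus$ into $\odot$ using Proposition~\ref{prop:NRM_prod_simulates_concat}, which only introduces affine transformations and preserves monadicity. The rules for $\Adom$ and $\text{True}_{\tup 0}$ use only disjunctions, conjunctions and constant (affine) maps.

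\textbf{Stage 2: applying the $\mu^y$.} For each $y\in\adom(F)$ we add a transformation rule $M_y(x)\angs{\mu^y}\from\Lambda(x)$. This is the only place where the transformations of $\mu$ appear, so every transformation of $\Pi$ lies in $\calF$.

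\textbf{Stage 3: the homomorphism query.} Let $x$ be the distinguished element of $F$, and identify the elements of $F$ with variables. We add the conjunction rule
\[
R(x)\angs{\tsum}\fromprod \bigwedge_{S(\tup z)\in F} S(\tup z)\;\land\;\bigwedge_{y\in\adom(F)} M_y(y)\;\land\;\Adom(x).
\]
The EDB atoms of $F$ may have higher content arity, so we use their versions with embedding dimension $0$. These are obtained by transformation rules that drop the embedding via the constant affine map to $\reals^0$. If $F$ contains the fact $\Adom(x)$, we read it as our derived $\Adom$. With this, every body atom of content arity greater than $1$ has dimension $0$, so the rule is monadic.

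Homomorphisms of this body into $\Pi(D)$ are exactly the homomorphisms in $\Hom(F^x,D^a)$ with $x\mapsto a$. This holds because every $M_y$ is total on $\adom(D)$, so the $M_y$ atoms impose no constraint, and the $\Adom(x)$ conjunct ensures that $x$ ranges over the domain even when $F$ does not otherwise mention $x$. The product combination gives $\prod_y\mu^y(\lambda'(h(y)))$ for each such homomorphism, and $\tsum$ aggregates over them. Hence $R(a)\angs{\tup e}$ holds with $\tup e=\res((F^x,\mu),\varepsilon(D^a))$ whenever at least one such homomorphism exists.

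\textbf{Main obstacle: totality.} The statement's "iff" only makes sense if $R(a)$ is derived for every $a\in\adom(D)$, including when no homomorphism maps $x$ to $a$; in that case $\res$ is the zero vector. We therefore rename the rule above to $R^*$ and finish with
\[
Z(x)\from\Adom(x)\land\text{True}_{\tup 0}(), \qquad R(x)\angs{\tsum}\from R^*(x)\lor Z(x).
\]
Adding the zero vector in the disjunction leaves the value unchanged when $R^*(a)$ is present and produces $\tup 0$ otherwise. One further subtlety concerns output dimensions: the $\mu^y$ must share a common output dimension $d'$ for $\prod$ to make sense, and they do by the definition of a homomorphism query. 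The $\odot$ padding convention for shorter vectors never applies, because all $M_y$ have dimension $d'$ and all other atoms have dimension $0$ or contribute no entries.
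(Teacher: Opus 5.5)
Your construction is essentially the paper's: rebuild $\lambda'$ from zero-padded disjunctions and a concatenation, apply each $\mu^y$ by a transformation rule, evaluate $F$ with a single $\odot$-conjunction rule over the facts of $F$ and the atoms $M_y(y)$, and restore totality by a final disjunction with an all-zero fact. Two of your additions repair slips in the paper's version: prepending the indicator $1$ makes the disjunction's dimensions match, and stripping embeddings from the EDB atoms of $F$ keeps them out of the product. Your one inaccurate remark is that embeddings of higher-arity EDBs can be dropped by transformation rules, since such a rule would produce a non-monadic IDB and break monadicity; this case is tacitly excluded anyway, because a monadic NRP cannot use a non-monadic relation of positive embedding dimension at all, so the lemma implicitly assumes such relations already have dimension $0$.
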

\begin{proof}
    Let $\mu = \{\mu^y \mid y \in \adom(F)\}$. We construct disjunction rules and a conjunction rule to obtain the element-embedding as a single embedding in the e-database:
\[\begin{array}{llll}
    \text{for $i = 1,\dots,n$:}  &\text{Zero}_i(x) &\from& \Adom(x) \land \text{True}_{\tup 0^{(d_i+1)}}()\\ 
     \text{for $i = 1,\dots,n$:}   &P'_i(x)\angs{\tsum} &\from& P_i(x) \lor \text{Zero}_i(x)\\
        &R_\lambda(x) &\fromconcat& P'_1(x) \land \dots \land P'_n(x)
    \end{array}\]
    Where $\text{True}_{\tup 0^{(d_i+1)}}()$ has an all-zero embedding of size $d_i+1$, and where $\oplus$ combination is simulated with $\odot$ combination and affine transformations using Proposition~\ref{prop:NRM_prod_simulates_concat}.
    For each $y \in \adom(F)$ add a transformation rule:
    \begin{align*}
        P^y(x)\angs{\mu^y} \from R_\lambda(x)
    \end{align*}
    We add a conjunction rule:
    \begin{align*}
        P(x)\angs{\tsum} \fromprod \bigwedge_{y \in \adom(F)} P^y(y) \land \bigwedge_{R(\tup y) \in F} R(\tup y)
    \end{align*}
    Now if there is a homomorphism from $F^x$ to $D^a$ then $P(a)$ has embedding:
    \begin{align*}
        \tsum \bigmultiset{ \prod_{y \in \adom(F)} \mu_y(\lambda(h(y))) \mid h \in \Hom(F^x, D^a)} = \res((F^x,\mu), \varepsilon(D^a))
    \end{align*}
However, if there is no homomorphism then $\res((F^x, \mu),\varepsilon(D^a))$ is an all-zero embedding $\tup 0^{(d')}$ for some $d'\geq 0$, while $P(a)$ is not derived. We thus add the rules:
\[\begin{array}{lll}
        \text{Zero}_0(x) &\from& \text{True}_0() \land \Adom(x)\\
        R(x)\angs{\tsum} &\from& P(x) \lor \text{Zero}_0(x)
\end{array}\]
    Then the embedding of $R$ is the output of $(F^x,\mu)$.
\end{proof}

\begin{lemma}
    \label{lem:DHNLayer_as_NRP}
    Let $\calL = (((F^\bullet_1, \mathbf{\mu}_1), \dots, (F^\bullet_m, \mathbf{\mu}_m)), \rho)$ be a DHN layer. There exists a monadic NRP $\Pi$ with monadic relation $R$ such that for each e-database $D$ and $a \in \adom(D)$:
    \begin{align*}
        \calL(\varepsilon(D))(a) = \tup e \text{ iff } R(a)\angs{\tup e} \in \Pi(D)
    \end{align*}
    Moreover, if the transformations in $\calL$ are in function class $\calF$, including affine transformations, $\Pi$ has transformations in $\calF$.
\end{lemma}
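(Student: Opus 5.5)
The plan is to build $\Pi$ by composing the programs that Lemma~\ref{lem:Homquery_as_NRP} provides for the individual homomorphism queries of the layer. First, I apply Lemma~\ref{lem:Homquery_as_NRP} to each $(F^\bullet_i,\mu_i)$, for $i=1,\dots,m$. I rename IDBs apart so that the programs share only the common preprocessing rules ($\text{Zero}_j$, $P'_j$, $R_\lambda$, $\Adom$), which may be duplicated harmlessly. This yields monadic relations $R_1,\dots,R_m$ such that, for every e-database $D$ and every $a\in\adom(D)$, $R_i(a)\angs{\tup e}\in\Pi(D)$ if and only if $\res((F^\bullet_i,\mu_i),\varepsilon(D^a))=\tup e$.

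A key point is that these relations are \emph{total}: thanks to the final disjunction rule with $\text{Zero}_0$ in that construction, $R_i(a)$ is derived for every $a\in\adom(D)$, with the zero vector when no homomorphism exists. Concatenating them in one conjunction rule is therefore safe. Each $R_i(x)$ then has exactly one fact per element, so the conjunction has exactly one homomorphism per $a$, and the sum aggregation is trivial.

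Next, I add the conjunction rule
\[
\text{Cat}(x)\fromconcat R_1(x)\land\dots\land R_m(x),
\]
which is monadic since all body atoms are monadic with the shared variable $x$. Then, for every $a$, $\text{Cat}(a)\angs{\bigoplus_i \res((F^\bullet_i,\mu_i),\varepsilon(D^a))}$ is derived. By Proposition~\ref{prop:NRM_prod_simulates_concat}, this $\oplus$ rule can be replaced by $\odot$ combination plus affine padding transformations, and the result stays monadic. Finally, I add the transformation rule $R(x)\angs{\rho}\from \text{Cat}(x)$. This gives $R(a)\angs{\tup e}\in\Pi(D)$ iff $\tup e=\rho(\bigoplus_i\res(\dots))=\calL(\varepsilon(D))(a)$, which is the claimed equivalence.

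For the function class, every transformation used is one of the following: an affine map (the padding maps, the constant-zero maps of the $\text{True}$ facts, and the selectors), one of the $\mu^y$ coming from Lemma~\ref{lem:Homquery_as_NRP}, or $\rho$. All of these lie in $\calF$ by assumption.

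The main obstacle is the edge case where a $d_i$ is $0$, or where $F_i$ has no homomorphism into $D^a$. Totality of each $R_i$, inherited from the lemma's disjunction with $\text{Zero}_0$, is exactly what prevents $\text{Cat}(a)$ from being silently underived. A second point needs explicit checking: the conjunction must introduce no multiplicity. This holds because e-databases contain at most one fact per relation and content tuple.
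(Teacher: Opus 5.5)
Your proposal is correct and follows essentially the paper's proof: it applies Lemma~\ref{lem:Homquery_as_NRP} to each homomorphism query, joins the resulting monadic relations in a single $\oplus$ conjunction rule (compiled to $\odot$ via Proposition~\ref{prop:NRM_prod_simulates_concat}), and finishes with the transformation rule for $\rho$. Your explicit checks, that each $R_i$ is total and that the join creates no multiplicities, are details the paper leaves implicit.
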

\begin{proof}
    By Lemma~\ref{lem:Homquery_as_NRP} there exist NRPs $\Pi_1, \dots, \Pi_m$ producing relations $P_1, \dots, P_m$ computing the $m$ homomorphism queries in $\calL$. We let $\Pi$ contain all rules in $\Pi_1, \dots, \Pi_m$ and add a conjunction rule and transformation rule:
    \begin{align*}
        P(x) \fromconcat P_1(x) \land \dots \land P_m(x) \qquad
        R(x)\angs{\rho} \from P(x)
    \end{align*}
    where $\oplus$ combination is simulated with $\odot$ combination and affine transformations using Proposition~\ref{prop:NRM_prod_simulates_concat}.
\end{proof}

\begin{lemma}
    \label{lem:DHN_as_NRP}
    Let $\calN = (\calL_1, \dots, \calL_n)$ be a DHN. There exists a monadic NRP $\Pi$ with monadic relation $R$ such that for each e-database $D$ and $a \in \adom(D)$:
    $$
    \calN(\varepsilon(D))(a)=\tup e \text{ iff } R(a)\angs{\tup e} \in \Pi(D)
    $$
    Moreover, if $\calN$ has transformations and combination functions in class $\calF$, including affine transformations, $\Pi$ has transformations in $\calF$.  
\end{lemma}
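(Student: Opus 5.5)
The plan is to argue by induction on the number $n$ of layers, reusing Lemma~\ref{lem:DHNLayer_as_NRP} for each layer. The subtlety is that Lemma~\ref{lem:DHNLayer_as_NRP} is stated for layers applied to $\varepsilon(D)$, i.e.\ with the fixed input labeling $\lambda'$ built from the monadic EDB relations. Later layers instead read the labeling produced by the previous layer. I would therefore first extract from the proofs of Lemmas~\ref{lem:Homquery_as_NRP} and~\ref{lem:DHNLayer_as_NRP} a slightly stronger form. Let $\Lambda[1]\angs{d}$ be any monadic relation (EDB or IDB) that is \emph{total}, meaning it contains exactly one e-fact $\Lambda(a)\angs{\lambda(a)}$ for every $a\in\adom(D)$. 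Then the rules of those proofs, with $R_\lambda$ replaced by $\Lambda$, produce a monadic relation $R$ with $R(a)\angs{\tup e}\in\Pi(D)$ iff $\calL(D',\lambda)(a)=\tup e$. This holds because the proof of Lemma~\ref{lem:Homquery_as_NRP} only uses $R_\lambda$ through the transformation rules $P^y(x)\angs{\mu^y}\from R_\lambda(x)$ and the classical relational atoms of $F$.

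For the base case, Lemma~\ref{lem:DHNLayer_as_NRP} gives an NRP $\Pi_1$ producing a relation $R_1$ that computes $\calL_1(\varepsilon(D))$. I would note that $R_1$ is total. The final disjunction rule $R(x)\angs{\tsum}\from P(x)\lor \text{Zero}_0(x)$ in Lemma~\ref{lem:Homquery_as_NRP} guarantees an e-fact for every active-domain element. The concatenation and $\rho$-transformation in Lemma~\ref{lem:DHNLayer_as_NRP} then preserve totality, and the relation has exactly one fact per element by the e-database uniqueness condition.

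For the inductive step, I would assume $\Pi_{i}$ produces a total monadic $R_i$ with $R_i(a)\angs{\tup e}\in\Pi_i(D)$ iff $(\calL_i\circ\cdots\circ\calL_1)(\varepsilon(D))(a)=\tup e$. I would then append the rules given by the strengthened layer lemma for $\calL_{i+1}$, with $\Lambda:=R_i$. Relation names are renamed to be fresh, while the EDBs, $\Adom$ and $\text{True}$ auxiliaries are shared. The result is $\Pi_{i+1}$ with a total $R_{i+1}$ satisfying the same property for $i+1$.

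Monadicity holds because every added rule has head arity at most $1$. Moreover, its body uses only monadic or zero-ary IDBs and classical EDB atoms of arbitrary content arity with dimension $0$. Transformations used are the $\mu^y$, the $\rho$'s, the constant maps for $\text{True}$, and affine maps from Proposition~\ref{prop:NRM_prod_simulates_concat}, all of which lie in $\calF$ since $\calF$ contains affine maps and is closed under composition. Setting $R:=R_n$ finishes the proof.

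The main obstacle is the step that generalizes Lemma~\ref{lem:DHNLayer_as_NRP} to an arbitrary total input relation. For it I would carefully check that the homomorphism rule $P(x)\angs{\tsum}\fromprod\bigwedge_y P^y(y)\land\bigwedge R(\tup y)$ sums over exactly $\Hom(F^x,D^a)$. It does, because every $P^y$ is total, so the atoms $P^y(y)$ impose no restriction beyond $h(y)\in\adom(D)$. The check also needs the classical part $D'$, including $\Adom$, to be the same at every layer.
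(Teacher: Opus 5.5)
Your proof is correct and follows essentially the same route as the paper: induction on the number of layers, reusing Lemma~\ref{lem:DHNLayer_as_NRP} for each layer and composing the programs sequentially. The difference is minor. The paper applies Lemma~\ref{lem:DHNLayer_as_NRP} as a black box to the e-database $\Pi_i(D)$, using a modified layer $\calL'_{i+1}$ that reads only the $P_i$-part of $\varepsilon(\Pi_i(D))$; you instead open up the construction and plug the total relation $R_i$ in place of $R_\lambda$, which makes explicit the totality invariant the paper uses implicitly.
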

\begin{proof}
    By Lemma~\ref{lem:DHNLayer_as_NRP} there exists an NRP $\Pi_1$ so that the embedding of $P_1$ in $\Pi_1(D)$ is $\calL_1(\varepsilon(D))$. We now apply induction over the sequence of layers. Suppose the embedding of $P_i$ in $\Pi_i(D)$ is $\calL_i \circ \dots \circ \calL_1(\varepsilon(D))$, where we write $(\calL \circ \calL')(D,\lambda)$ denoting $\calL(D,\calL'(D,\lambda))$. Then:
    $$
    \calL'_{i+1} (\varepsilon(\Pi_i(D))) = \calL_{i+1} \circ \dots \circ \calL_1(\varepsilon(D))
    $$
    where $\calL'_{i+1}$ applies $\calL_{i+1}$ to the part of the element-embedding representing $P_i$, and ignores the rest of the element-embedding. By Lemma~\ref{lem:DHNLayer_as_NRP} there is a program $\Pi'_{i+1}$ so that for every e-database $D$, the embedding of $P_{i+1}$ in $\Pi'_{i+1}(D)$  is $\calL'_{i+1}(\varepsilon (D))$. Then if $\Pi_{i+1}$ contains all rules in $\Pi'_{i+1}$ and $\Pi_i$:
    \begin{align*}
       P_{i+1}(a)\angs{\tup e} \in \Pi_{i+1}(D) &\text{ iff } P_{i+1}(a)\angs{\tup e} \in \Pi'_{i+1}(\Pi_i(D))\\
                     &\text{ iff } \calL'_{i+1}(\varepsilon(\Pi_i(D)))(a) = \tup e\\
                     &\text{ iff } \calL_{i+1} \circ \dots \circ \calL_1(\varepsilon(D))(a) = \tup e
    \end{align*}
    We then let $\Pi = \Pi_n$ and $R = P_n$.
\end{proof}

Lemma~\ref{lem:DHN_as_NRP} shows that every total unary query computed by a DHN with transformations in $\calF$, including affine transformations, is computed by a monadic NRP with transformations in $\calF$. If a DHN has transformations in $\calF \cup \calF_\times$, then since element-wise product transformations can be simulated with $\odot$ combination, using Proposition~\ref{prop:NRPproductCombSimulatesProductTrans}, there exists a monadic NRP that computes the same query and only uses transformations in $\calF$. This proves one direction of Theorem~\ref{Thm:NRPMatchesDHN}.

\subsection{Translation from Monadic NRP to DHN}

Monadic NRPs have IDBs with content arity $0$ or $1$, and body atoms with arity $0\angs{d},1\angs{d}$ or $k\angs{0}$ for any $d,k \geq 0$. We first show that unary queries computed by monadic NRPs can also be computed without using zero-ary relations. In this section we assume for simplicity that the input relational schema $\scs$ does not contain zero-ary relations.
\begin{lemma}
\label{lem:remove_0ary_from_monadic}
    Let $\calQ$ be a unary embedded query computed by a monadic NRP over input schema $\scs$ without zero-ary relations, then $\calQ$ is computed by a monadic NRP where all IDBs are monadic, and all body atoms are monadic or have embedding dimension $0$.
\end{lemma}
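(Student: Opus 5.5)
The plan is to eliminate zero-ary IDBs by replacing each zero-ary IDB $B[0]\angs{d}$ with a monadic IDB $\widehat B[1]\angs{d+1}$ that is \emph{total} on the active domain. Every element $a\in\adom(D)$ should carry the same embedding $(1)\oplus\tup e$ if $B()\angs{\tup e}\in\Pi(D)$, and $(0)\oplus\tup 0^{(d)}$ otherwise. The leading flag coordinate records whether the zero-ary fact exists. Since $\scs$ has no zero-ary relations, every EDB has content arity at least $1$. Hence $\adom(D)$ is empty exactly when $D$ is empty, and in that case a unary query returns the empty relation regardless. So over nonempty $D$ the broadcast copy $\widehat B$ carries exactly the information of $B$.

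The translation goes by induction along the rule order, translating each rule so that the invariant above holds. The cases are as follows.

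For a \emph{disjunction rule} $B()\angs{\tsum}\from B_1()\lor\dots\lor B_\ell()$, I use $\widehat B(x)\angs{\tsum}\from \widehat B_1(x)\lor\dots\lor\widehat B_\ell(x)$. This sums the flags and the embeddings, and absent disjuncts contribute zero. A ReLU transformation then normalizes the flag to $\min(\text{flag},1)$; flags are nonnegative integers, so this is $\relu(f)-\relu(f-1)$.

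For a \emph{transformation rule} $B()\angs{\mu}\from B_0()$, the difficulty is making the output $(0)\oplus\tup 0$ when the flag is $0$. A general $\mu$ cannot be gated by ReLU alone. Instead I gate with a product: compute $(f)\oplus\mu(\tup e)$ and multiply the $\mu$-part by $f$, using $\odot$-combination with a copy of $\widehat B_0$ transformed to $(f,f,\dots,f)$. Concatenation is handled via Proposition~\ref{prop:NRM_prod_simulates_concat}.

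For a \emph{conjunction rule} with zero-ary head and body atoms $\beta_1,\dots,\beta_m$ (EDBs, or monadic IDBs) and zero-ary $B_1(),\dots,B_t()$, I first form the monadic part. This is a rule $C(x_0)\angs{\tsum}\from_\odot \beta_1\land\dots\land\beta_m\land\widehat B_1(x_0)\land\dots\land\widehat B_t(x_0)\land\Adom(x_0)$, with the embeddings prepared so that the product of the flags appears in the first coordinate. Here $x_0$ is a fresh variable, which makes each match counted $|\adom(D)|$ times. To avoid this overcounting I instead pick the fresh variable as a copy point \emph{after} aggregation. Concretely, I first compute a rule with head $C'(y)$ where $y$ is a body variable, grouped on $y$ (a monadic head), then sum again with $\widehat B(x)\angs{\tsum}\from C'(y)\land\Adom(x)$. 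Because $\widehat B_j(y)$ is constant across elements, this yields the correct global sum at every $x$, together with a count flag which I normalize as before. If the body has no variables at all (only zero-ary atoms), I simply use $\widehat B(x)\from_\odot \widehat B_1(x)\land\dots\land\widehat B_t(x)$ after handling flags. Conjunction rules with monadic heads are treated the same way, replacing $B_j()$ by $\widehat B_j(x)$ for the head variable $x$ (or by a body variable). The monadic head relation is gated on the product of the flags, via a preliminary rule restricting to facts whose flag is $1$.

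The main obstacle is keeping derivation existence exact. A zero-ary fact being absent must make dependent monadic facts underived, not merely zero-embedded. I will handle this with a filtering step: from the total $\widehat B$ I derive a classical monadic relation, and use it as a guard in bodies. Deriving this relation requires a conjunction selecting the elements with flag $1$, which plain NRP rules cannot do directly, since rules never remove facts. My plan is therefore to avoid total encodings for the zero-ary facts that feed conjunctions. Instead I use the non-total broadcast $\overline B(x)\from B\text{-body}\ldots$, computed so that $\overline B(a)$ exists iff $B()$ exists. A zero-ary conjunction rule's body is satisfiable iff its monadic grounding is satisfiable, so presence is preserved naturally by the $C'$ plus $\Adom$ construction, without flags. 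I would fall back on the flags only for disjunction rules, where the union semantics already matches. Verifying that this presence-preservation holds in every case is the step I expect to be most delicate.
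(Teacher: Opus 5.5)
Your closing idea is the right invariant, and it is the one the paper uses: broadcast each zero-ary IDB $B$ to a monadic $\overline B$ that is present at \emph{every} $a\in\adom(D)$ with the embedding of $B()$ if $B()$ is derived, and at \emph{no} element otherwise. As written, however, the proposal never commits to this construction. The parts that deviate from it are either unsound or rest on a misconception.

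First, the flag-based total encoding has to go everywhere, including for disjunction rules. In an NRP, which facts exist is determined by a union of conjunctive queries over the classical part of the input, independently of embedding values. So a total $\widehat B$ can never be filtered back to a non-total relation. This is the obstruction you noticed for conjunctions, and it applies equally to a zero-ary IDB defined by a disjunction, because zero-ary IDBs influence monadic facts only through conjunction bodies. For example, if $B()\from B_1()\lor B_2()$ is available only as $\widehat B$ and a later rule is $S(y)\from P(y)\land B()$, then $S(a)$ is derived for every $a$ with $P(a)$, even when $B()$ is absent. The output embedded query then contains spurious facts, and zero-gating their embeddings does not remove them. Flags are also unnecessary: under the all-or-nothing invariant, disjunction and transformation rules translate verbatim as $\overline B(x)\angs{\tsum}\from\overline B_1(x)\lor\dots\lor\overline B_\ell(x)$ and $\overline B(x)\angs{\mu}\from\overline B_0(x)$, with no gating of $\mu$ at all.

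Second, your overcounting worry is mistaken. Consider the rule $\overline B(x_0)\angs{\tsum}\from_\odot\beta_1\land\dots\land\beta_m\land\overline B_1(x_0)\land\dots\land\overline B_t(x_0)$, with $\Adom(x_0)$ added if $x_0$ would otherwise not occur in the body. The fresh variable $x_0$ is in the head, so aggregation is grouped by $x_0$. The embedding of $\overline B(a)$ therefore sums only over matches with $x_0\mapsto a$. These are in bijection with the matches of the original body, since each $\overline B_j(a)$ carries the embedding of $B_j()$ and $\Adom$ has dimension $0$. So the direct substitution $B_j()\mapsto B_j(x_0)$ already works. Your two-step route via $C'(y)$ and $\overline B(x)\from C'(y)\land\Adom(x)$ is correct but superfluous. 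Your ``no body variables'' case also needs $\Adom(x)$ for safety, for instance for the empty-body rule $\mathrm{True}()\from$.

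With flags dropped, the step you flagged as delicate is routine. It is an induction along the rule order showing that monadic IDBs are unchanged and that every zero-ary IDB satisfies the all-or-nothing invariant. This is exactly the paper's proof.
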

\begin{proof}
    Let $\Pi_1$ be a monadic NRP. We construct monadic NRP $\Pi_2$ that computes the same monadic relations without zero-ary IDBs and without zero-ary body atoms with embedding dimension $>0$. In every disjunction or transformation rule in $\Pi_1$, we substitute each zero-ary atom $R()$ by monadic atom $R(x)$. In a conjunction rule in $\Pi_1$, if the head is monadic with variable $x$, substitute all zero-ary body atoms $R()$ by $R(x)$ matching the head variable. If the head is zero-ary substitute each zero-ary atom in the body or head $R()$ by a monadic atom $R(x)$, where $x$ is a single new variable used in all these monadic atoms. 
    Finally, in every rule that is now unsafe since $x$ occurs in the head but not in the body, add a body atom $\adom(x)$ where $\arity(\adom)=1\angs{0}$, and $\adom(a)$ is derived for every $a \in \adom(D)$. We show by induction that for every e-database $D$:
    \begin{enumerate}
        \item For every monadic $R$ and $a \in \adom(D)$, $R(a)\angs{\vec e} \in \Pi_1(D)$ iff $R(a)\angs{\vec e} \in \Pi_2(D)$.
    \item For every zero-ary $R$, $R()\angs{\vec e} \in \Pi_1(D)$ iff $R(a)\angs{\vec e} \in \Pi_2(D)$ for every $a \in \adom(D)$.
    \end{enumerate}
    Let $R^i$ be derived by rule $\Psi_i$ in $\Pi_1$ such that (1) and (2) hold for all relations in the body. If $\Psi_i$ is a disjunction rule:
    $$R^i(\tup x)\angs{\tsum}\from R_1(\tup x)\lor\dots\lor R_\ell(\tup x)$$
    then note that if $R^i$ is zero-ary or monadic, the same holds for all relations in the body of $\Psi_i$, so that applying either (1) or (2) to the relations in the body, (1) and (2) are preserved. The same argument applies if $\Psi_i$ is a transformation rule. Now suppose $\Psi_i$ is a conjunction rule:
    $$R^i(\tup x)\angs{\tsum}\fromprod R_1(\tup x_1)\land\dots\land R_\ell(\tup x_\ell)$$
    If $R^i$ is monadic with variable $x$ we replaced every zero-ary body atom $R_j()$ by an atom $R_j(x)$. Since these atoms have the same embeddings by (2), (1) is preserved. If $R^i$ is zero-ary, then we replaced the head by $R^i(x)$ and all zero-ary atoms $R_j()$ by $R_j(x)$. Since by (2) for every element $a \in \adom(D)$, the embedding of $R_j()$ in $\Pi_1(D)$ equals the embedding of $R_j(a)$ in $\Pi_2(D)$, the same holds for $R^i$. 
    
    By (1), $\Pi_1$ and $\Pi_2$ compute the same unary embedded queries.
\end{proof}

We now show that the embeddings produced by conjunction rules, disjunction rules and transformation rules of monadic NRPs are also computed by DHNs, where the DHN outputs an extra Boolean to indicate whether a database relation is derived. Using Lemma~\ref{lem:remove_0ary_from_monadic} we only translate rules with monadic IDBs and body atoms that are monadic or have embedding dimension $0$.

\begin{lemma}
    \label{lem:conjunction_rule_as_DHN}
    Let $\Psi$ be a conjunction rule that derives $R$ with $\arity(R)=1\angs{d}$ and has body atoms that are monadic or have embedding dimension $0$. There exists a DHN layer $\calL$ so that for each e-database $D$ and $a \in \adom(D)$:
    $$
    \calL(\varepsilon(D))(a) = \begin{cases}
        (1)\oplus \tup e &\text{ if } R(a)\angs{\tup e} \in \Psi(D)\\
        (0)\oplus\tup 0^{(d)} &\text{ if } R(a)\not\in \Psi(D)        
    \end{cases}
    $$
    Moreover, $\calL$ has $\reluFFNs$ and $\calF_\times$ as transformations, and the only monadic relation in patterns of homomorphism queries of $\calL$ is `Adom'.
\end{lemma}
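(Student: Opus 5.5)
The plan is to build a single DHN layer whose homomorphism queries are read off from the body of $\Psi$, and whose final transformation $\rho$ adds the Boolean flag. Write $\Psi$ as
\[
R(x)\angs{\tsum}\fromprod P_1(y_1)\land\dots\land P_s(y_s)\land \beta_1\land\dots\land\beta_m ,
\]
where the $P_j$ are the monadic body atoms with positive embedding dimension and the $\beta_i$ are atoms of embedding dimension $0$. The head variable $x$ occurs in the body. Let $F^x$ be the pointed canonical database consisting of all facts $\beta_i$ together with $\Adom(y)$ for every body variable $y$. Adding these $\Adom$ facts guarantees that every variable is in $\adom(F)$. Because the $P_j$ are themselves represented inside $\varepsilon(D)$ only through $\lambda'$, they are not placed in the pattern. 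This is consistent with the requirement that $\Adom$ be the only monadic relation occurring in patterns.

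Next I choose the transformations $\mu^y$. For each variable $y$, let $J_y=\{j\mid y_j=y\}$ be the set of monadic atoms on $y$. In $\lambda'(a)$, the block for $P_j$ is $(b_j)\oplus\tup e_j$, where $b_j\in\{0,1\}$ flags presence. I set
\[
\mu^y(\lambda'(a))=\bigodot_{j\in J_y}\bigl(b_j\cdot \mathrm{pad}(\tup e_j)\bigr)\ \oplus\ \prod_{j\in J_y} b_j ,
\]
with the following conventions:
\begin{itemize}
\item $\mathrm{pad}$ extends each $\tup e_j$ by ones up to dimension $d$, to mimic the variable-length $\odot$;
\item the last coordinate is an existence indicator;
\item the empty product is $1$.
\end{itemize}
Each $\mu^y$ is a composition of affine maps (selection and padding) with $\calF_\times$ maps (products of coordinates), so it lies in the allowed class. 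Since the presence flags for $P_j$ with $d_j=0$ are also recorded in $\lambda'$, those atoms can be handled the same way with an empty embedding part. This way every monadic atom is checked via its flag rather than via the pattern.

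Then I compute the resulting sum. For $h\in\Hom(F^x,D^a)$, the product over $y$ equals $\bigl(\odot_j \tup e_{j,h}\bigr)\oplus(1)$ if every $P_j(h(y_j))\in D$, and equals $\tup 0$ otherwise. Summing over homomorphisms, the first $d$ coordinates give exactly the $\tsum$-aggregated $\odot$-combination over the homomorphisms of the body of $\Psi$. The extra coordinate counts those homomorphisms, call the count $c\in\nats$. So $\res((F^x,\mu),\varepsilon(D^a))=\tup e\oplus(c)$, where $\tup e$ is the embedding of $R(a)$ when $c>0$, and $\tup e\oplus(c)=\tup 0$ when $c=0$.

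The main obstacle is the final map $\rho$. It must turn $c$ into the flag $\min(c,1)$ and pass $\tup e$ through unchanged. Since $c$ is a nonnegative integer, I will take
\[
\rho(\tup e\oplus(c))=(\relu(c)-\relu(c-1))\oplus\tup e ,
\]
which is a $\relu$-FFN. When $c=0$ all coordinates are $0$ as required, and when $c\geq1$ the flag is $1$. I need to check that the $\odot$ convention for atoms of differing dimension matches the padding by ones, and that this is a single-layer construction with only $\Adom$ in the pattern. Once both checks pass, the lemma follows.
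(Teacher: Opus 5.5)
Your proposal is correct and follows essentially the same route as the paper: $F^x$ consists of $\Adom(y)$ for every body variable plus the non-monadic atoms, and each $\mu^y$ multiplies the flag-carrying monadic blocks, so the sum yields the homomorphism count together with the aggregated embedding. A ReLU-FFN then clips the count to $1$; you only differ cosmetically by placing the indicator last rather than first and by making the padding explicit. Do define $F^x$ to exclude the monadic dimension-$0$ atoms from the outset, as your later remark intends, since otherwise the requirement that $\Adom$ be the only monadic relation in the pattern would fail.
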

\begin{proof}
    Let $\Psi$:
    \begin{align*}
        R(x)\angs{\tsum}\fromprod R_1(x)\land\dots\land R_m(x_m) \land R_{m+1}(\tup x_{m+1}) \land \dots\land R_{m+\ell}(\tup x_{m+\ell})
    \end{align*}
    where $R_1, \dots, R_m$ are monadic and $R_{m+1}, \dots R_{m+\ell}$ have embedding dimension $0$. For $i=1,\dots,m$ let $\arity(R_i)=1\angs{d_i}$.
    Suppose w.l.o.g. that $R_1, \dots R_m$ are the first $m$ monadic relations in the signature, i.e. in $\varepsilon(D)$ the first $\sum_{1 \leq i \leq m} (1 + d_i)$ values of the element-embedding $\lambda(a)$ are $\lambda_1(a)\oplus \dots \oplus \lambda_m(a)$, where $\lambda_i(a)$ is $(1)\oplus\tup e$ if $R_i(a)\angs{\tup e}\in D$ and $(0)\oplus\tup 0^{(d_i)}$ if $R_i(a) \not\in D$.
Let $\mathit{body}(\Psi)$ be the set of body atoms of $\Psi$. Let $F^x$ be the pointed database with a fact `$\Adom(y)$' for every body variable $y$, 
and the non-monadic atoms in $\mathit{body}(\Psi)$ as facts without embeddings. 
Let $\mu = \{\mu^y: y \in \adom(F^x)\}$ where:
    \begin{align*}
        \mu^{y}(\lambda(v)) =\prod_{R_j(y) \in \mathit{body}(\Psi)} \lambda_j(v)
    \end{align*}
    so that $\mu^y$ performs element-wise product over the embeddings of monadic body atoms with variable $y$. Here embeddings of different dimensions are handled in the same way as by $\odot$ combination in NRPs, using affine transformations and element-wise multiplications over subspaces to produce an embedding of size $d$, where if one of the embeddings $\lambda_j(v)$ starts with $0$ the whole product is $\tup 0^{(d)}$.
    Let $H^a$ be the set of homomorphisms from the body of $\Psi$ to $D^{a}$. Then the output of homomorphism query $(F^x, \mu)$ is:
    \begin{align*}
        \res((F^x, \mu), \varepsilon(D))(a) & = \sum_{\pi \in \Hom(F^x,D^a)}\prod_{y \in \adom(F)} \mu^{y}(\lambda(\pi(y)))                                                                             \\
                                       & = \left(|H^a|,\sum_{\pi \in H^a} \prod \multiset{\tup e_{\pi,i} \mid 1=1,\dots m \text{ and } R_i(\pi(x_i))\angs{\tup e_{\pi,i}}\in \varepsilon(D)}\right)
    \end{align*}
    Note that the first value of the output is obtained by summing over homomorphisms in $\Hom(F^x,D^a)$, which are maps from body variables to elements in $D$ preserving all non-monadic body atoms in $\Psi$, and for each such map counting $1$ if it preserves all monadic atoms and $0$ otherwise, yielding a total sum $|H^a|$. The remaining values are obtained by summing over the products of embeddings for each homomorphism in $H^a$, since homomorphisms in $\Hom(F^x,D^a)\setminus H^a$ that do not preserve all monadic body atoms yield product $\tup 0^{(d)}$. Applying a $\reluFFN$ $\rho$ that bounds the first value to a maximum $1$, the DHN layer $\calL = ((F^x, \mu),\rho)$ derives $(1)\oplus \tup e$ if $R(a)\angs{\tup e} \in \Psi(D)$ and $(0) \oplus \tup 0^{(d)}$ if $R(a) \not\in \Psi(D)$.
\end{proof}

\begin{lemma}    \label{lem:disjunction_rule_as_DHN}
    Let $\Psi$ be a disjunction rule that derives $R$ with arity $1\angs{d}$. There exists a DHN layer $\calL$ so that for each e-database $D$ and $a \in \adom(D)$:
    $$
    \calL(\varepsilon(D))(a) = \begin{cases}
        (1)\oplus \tup e &\text{ if } R(a)\angs{\tup e} \in \Psi(D)\\
        (0)\oplus \tup 0^{(d)} &\text{ if } R(a) \not\in \Psi(D)
    \end{cases}
    $$
    Moreover, $\calL$ has $\reluFFNs$ as transformations,  and the only monadic relation in patterns of homomorphism queries of $\calL$ is `Adom'.
\end{lemma}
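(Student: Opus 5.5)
The disjunction rule has the form $R(x)\angs{\tsum}\from R_1(x)\lor\dots\lor R_\ell(x)$, where every $R_j$ has arity $1\angs{d}$. By the normalization of Lemma~\ref{lem:remove_0ary_from_monadic} we may take all $R_j$ to be monadic. As in the proof of Lemma~\ref{lem:conjunction_rule_as_DHN}, we may assume that $R_1,\dots,R_\ell$ occupy the first blocks of the element embedding. Thus $\lambda(a)$ starts with $\lambda_1(a)\oplus\dots\oplus\lambda_\ell(a)$, where $\lambda_j(a)=(1)\oplus\tup e_j$ if $R_j(a)\angs{\tup e_j}\in D$, and $\lambda_j(a)=(0)\oplus\tup 0^{(d)}$ otherwise.

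We use a single homomorphism query with the one-element pattern $F^x=\{\Adom(x)\}$. Its only homomorphism into $D^a$ sends $x$ to $a$, so evaluating it simply returns $\mu^x(\lambda(a))$. We take $\mu^x$ to be the linear map $\lambda(a)\mapsto\sum_{j=1}^\ell\lambda_j(a)$, which adds the $\ell$ blocks coordinate-wise. The resulting value is $(c)\oplus\sum_{j:R_j(a)\in D}\tup e_j$, where $c=|\{j\mid R_j(a)\in D\}|\in\{0,\dots,\ell\}$. The second component is exactly the $\tsum$-aggregate of the embeddings that the disjunction rule collects for $R(a)$. If no $R_j(a)$ is present, the value is $(0)\oplus\tup 0^{(d)}$.

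It remains to normalize the first coordinate to the indicator of whether $R(a)$ is derived. For this we take $\rho$ to be the \reluFFN $\rho(c,\tup e)=(\min(c,1))\oplus\tup e$. Here $\min(c,1)=c-\relu(c-1)$, and the identity on $\tup e$ can be realized as $\relu(\tup e)-\relu(-\tup e)$. Since $c$ is a nonnegative integer, $\min(c,1)$ is $1$ exactly when some $R_j(a)\in D$, i.e.\ exactly when $R(a)\in\Psi(D)$. The layer $\calL=((F^x,\mu^x),\rho)$ therefore produces the required output in both cases.

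Its transformations are affine maps and \reluFFNs, and its only monadic pattern relation is $\Adom$. The only step needing care is bookkeeping: the per-relation indicator bits must be summed together with the embeddings, so that the count $c$ comes out directly from the linear map $\mu^x$. No product transformations from $\calF_\times$ are needed.
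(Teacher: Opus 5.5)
Your proof is correct and follows the paper's argument: a single homomorphism query with pattern $\{\Adom(x)\}$, which has exactly one homomorphism into $D^a$, and whose transformation sums the blocks $\lambda_j(a)$ and clips the first coordinate at $1$. The only difference is cosmetic: the paper does the clipping inside $\mu^x$ and takes $\rho$ to be the identity, whereas you put it into $\rho$. Since there is only one homomorphism, the two placements are interchangeable.
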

\begin{proof}
    Let $\Psi$:
    \begin{align*}
        R(x)\angs{\tsum}\from R_1(x)\lor\dots\lor R_m(x)
    \end{align*}
    As in the lemma above suppose $R_1, \dots, R_m$ with $\arity(R_i)=1\angs{d_i}$ are the first $m$ monadic relations in the signature so that in the element-embedding $\lambda$ of $\varepsilon(D)$ the first $\sum_{1\leq i \leq m} (1 + d_i)$ values are $\lambda_1(a)\oplus \dots \oplus \lambda_m(a)$, where $\lambda_i(a)$ is $(1)\oplus\tup e$ if $R_i(a)\angs{\tup e}\in D$ and $(0)\oplus\tup 0^{(d_i)}$ if $R_i(a) \not\in D$.

    Let $F^x = \{\Adom(x)\}$, let $\mu^x$ take the element-wise sum over the $\lambda_i$ and then upper bound the first value at $1$.  The DHN layer $\calL = ((F^x,\{\mu^x\}), I)$ where $I$ is the identity transformation then derives $(1)\oplus\tup e$ if $R(a)\angs{\tup e} \in \Psi(D)$ and $(0)\oplus \tup 0^{(d)}$ if $R(a) \not\in \Psi(D)$.
\end{proof}

\begin{lemma}
    \label{lem:transformation_rule_as_DHN}
    Let $\Psi$ be a transformation rule that derives $R$ with arity $1\angs{d}$. There exists a DHN layer $\calL$ so that for each e-database $D$ and $a \in \adom(D)$:
    $$
    \calL(\varepsilon(D))(a) = \begin{cases}
        (1)\oplus \tup e  &\text{ if } R(a)\angs{\tup e} \in \Psi(D)\\
        (0)\oplus \tup 0^{(d)} &\text{ if } R(a) \not\in\Psi(D)
    \end{cases}
    $$
    Moreover, if the transformation function of $\Psi$ is in a class $\calF$, including $\reluFFNs$, then $\calL$ has transformations in $\calF$. Further, the only monadic relation in patterns of homomorphism queries of $\calL$ is `Adom'.
\end{lemma}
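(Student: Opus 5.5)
We mirror the construction used for disjunction rules in Lemma~\ref{lem:disjunction_rule_as_DHN}: a single DHN layer whose only homomorphism query has the trivial pattern $F^x=\{\Adom(x)\}$. Let $\Psi$ be $R(x)\angs{\mu}\from R_0(x)$ with $\arity(R_0)=1\angs{d_0}$. Assume, as in the previous lemmas and without loss of generality, that $R_0$ is the first monadic relation of the signature. Then the first $1+d_0$ coordinates of $\lambda(a)$ in $\varepsilon(D)$ are $\lambda_0(a)$, which is $(1)\oplus\tup e_0$ if $R_0(a)\angs{\tup e_0}\in D$ and $(0)\oplus\tup 0^{(d_0)}$ otherwise.

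Because $\Hom(F^x,D^a)$ contains exactly one homomorphism, the query $(F^x,\{\mu^x\})$ returns $\mu^x(\lambda(a))$. We choose $\mu^x$ to be a projection, which is an affine map and so a one-layer $\reluFFN$, extracting $\lambda_0(a)$. The combination function $\rho$ of the layer must then turn $(b)\oplus\tup e_0$ into $(b)\oplus\mu(\tup e_0)$ when $b=1$, and into $(0)\oplus\tup 0^{(d)}$ when $b=0$. The flag $b$ is copied through unchanged. For the remaining coordinates we first form $(b)\oplus\mu(\tup e_0)$ and then multiply every coordinate of $\mu(\tup e_0)$ by $b$. This is done by duplicating $b$ with an affine map to obtain $(b,\dots,b)\oplus\mu(\tup e_0)$ and then applying the coordinatewise products $b\cdot\mu(\tup e_0)_j$, which belong to $\calF_\times$. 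Since $\calF$ contains $\reluFFNs$ and is closed under composition, the affine and $\mu$ steps stay in $\calF$, and the gating multiplication stays in $\calF\cup\calF_\times$.

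The main obstacle is the gating step. The wanted output is not a composition of $\mu$ with affine maps, because $\mu(\tup 0)$ need not be zero, and a general $\mu\in\calF$ cannot be masked with $\relu$ alone without a bound on its outputs. The product with the indicator $b$ is therefore the natural choice. This fits Theorem~\ref{Thm:NRPtoDHN}, which already permits transformations from $\calF\cup\calF_\times$, so we read ``transformations in $\calF$'' in this lemma in that sense, or equivalently as saying that the product gating is supplied by $\calF_\times$. As a fallback, we could place the gating inside the homomorphism query of the next layer through the product $\prod_y\mu^y$, using two query functions on the same variable, one yielding $b$ and one yielding $\mu(\tup e_0)$. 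Their elementwise product realizes the gate, since the query semantics multiply the $\mu^y$.

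Correctness is then a two-case check. If $R_0(a)\angs{\tup e_0}\in D$, the layer outputs $(1)\oplus\mu(\tup e_0)$, which agrees with $\Psi(D)$. Otherwise it outputs $(0)\oplus\tup 0^{(d)}$, and $R(a)\notin\Psi(D)$. The only monadic relation appearing in the pattern is $\Adom$, as the lemma requires.
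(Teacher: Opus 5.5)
Your construction matches the paper's: a single homomorphism query on the pattern $\{\Adom(x)\}$ that reads off the block $(b)\oplus\tup e_0$ of $R_0$, followed by the two-case check. The gap is the ``Moreover'' clause. You establish transformations in $\calF\cup\calF_\times$ and then reinterpret the lemma to mean that, so the statement as written (transformations in $\calF$) is not proved. Your weaker version would still be enough for Theorem~\ref{Thm:NRPtoDHN}, which allows $\calF_\times$, but it is not the lemma.

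The obstacle you give for this weakening does not actually arise. Masking an unbounded $\mu$-output with ReLU is impossible for arbitrary inputs. Here, however, you only need correct outputs on inputs that occur in $\varepsilon(D)$, and there $b=0$ forces $\tup e_0=\tup 0^{(d_0)}$. So in the $b=0$ case, $\mu(\tup e_0)$ is the fixed vector $\tup c=\mu(\tup 0^{(d_0)})$, and the gate can be affine. The map $(b)\oplus\tup y\mapsto(b)\oplus(\tup y+(b-1)\tup c)$ sends $(1)\oplus\mu(\tup e_0)$ to itself and $(0)\oplus\tup c$ to $(0)\oplus\tup 0^{(d)}$.

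To stay in $\calF$ using only affine maps and closure under composition, use a layer with two queries on the same pattern $\{\Adom(x)\}$:
\begin{itemize}
\item one whose function projects $\lambda(a)$ onto $b$;
\item one whose function is $\mu$ composed with the projection onto $\tup e_0$.
\end{itemize}
The layer concatenates these into $(b)\oplus\mu(\tup e_0)$, and the affine map above serves as $\rho$. The paper's proof simply defines $\mu^x$ on the two possible input shapes without comment. This affine correction is what justifies that such a map lies in $\calF$.

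Your fallback also does not work. A homomorphism query attaches exactly one function $\mu^y$ to each $y\in\adom(F)$, so you cannot put two functions on $x$. Adding a second variable $y$ with only the fact $\Adom(y)$ does not force $h(y)=a$. That variable ranges over the whole active domain, so the query would sum over all elements rather than gate at $a$.
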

\begin{proof}
    Let $\Psi$:
    \begin{align*}
        R(x)\angs{\mu} \from R'(x)
    \end{align*}
    Let $\arity(R')=1\angs{d'}$ and suppose w.l.o.g. that $R'$ is the first monadic relation in the signature, so that the first $1+d'$ values in the element-embedding $\lambda(a)$ of $\varepsilon(D)$ are $\lambda_1(a)$, which is $(1)\oplus \tup e$ if $R'(a)\angs{\tup e} \in D$ and $(0)\oplus \tup 0^{(d')}$ if $R'(a) \not\in D$. Let $\calL = ((F^x, \{\mu^x\}))$ where $F^x = \{\Adom(x)\}$ and
    \(\mu^x((1)\oplus \tup e)=(1)\oplus \mu(\tup e)\) and \(\mu^x((0)\oplus \tup{0}^{(d')})=(0)\oplus \tup{0}^{(d)}\).
    Then $\calL(\varepsilon(D))(a)$ is $(1) \oplus\tup e$ if $R(a)\angs{\tup e} \in \Psi(D)$ and $(0) \oplus \tup 0^{(d)}$ if $R(a) \not\in \Psi(D)$.
\end{proof}

\begin{lemma}
    \label{lem:stack_DHNs}
    Let $\calN_1, \dots \calN_n$ be DHNs with input dimension $d$ and output dimension $d_i$. There exists a DHN $\calN^\oplus$ with input dimension $d$ and output dimension $\sum_{1 \leq i \leq n} d_i$ such that for each element-embedded database $(D,\lambda)$ and $a \in \adom(D)$:
    \begin{align*}
        \calN^\oplus(D,\lambda)(a) & = \displaystyle\bigoplus_{1 \leq i \leq n} \calN_i(D, \lambda)(a)
    \end{align*}
\end{lemma}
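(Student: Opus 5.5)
The construction runs the $n$ networks side by side within a single DHN. Each layer acts on the concatenation of the current embeddings of all $n$ networks, and each block of coordinates is updated exactly as the corresponding $\calN_i$ would update it.

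First, I will equalize the depths. Let $L=\max_i$ (number of layers of $\calN_i$). Any $\calN_i$ with fewer layers is padded with identity layers. An identity layer uses the single homomorphism query $(F^x,\{\mu^x\})$ with $F^x=\{\Adom(x)\}$ and $\mu^x$ the identity, followed by the identity $\rho$. Since $\Hom(F^x,D^a)$ contains exactly the map $x\mapsto a$ (all elements lie in $\Adom$), the result is $\lambda(a)$. This requires that the patterns may mention $\Adom$, which is available in $\varepsilon(D)$. If one prefers not to rely on that, a single-point pattern with no facts gives the same unique homomorphism. Padding leaves the output of $\calN_i$ unchanged.

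Next, the layers are combined. Write the $j$-th layer of $\calN_i$ as $\calL_{i,j}=(((F^\bullet_{i,j,1},\mu_{i,j,1}),\dots,(F^\bullet_{i,j,m_{i,j}},\mu_{i,j,m_{i,j}})),\rho_{i,j})$, with input dimension $d_{i,j-1}$, where $d_{i,0}=d$. The combined layer $\calL^\oplus_j$ has input dimension $\sum_i d_{i,j-1}$ and contains every homomorphism query of every $\calL_{i,j}$, in order. The transformations $\mu^y$ of each query are precomposed with the linear projection $\pi_{i,j}$, which extracts the $i$-th block from the concatenated input. Since $\res$ applies $\mu^y$ pointwise to $\lambda(h(y))$, the modified query evaluated on the concatenated labelling equals the original query evaluated on the $i$-th block. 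The output transformation $\rho^\oplus_j$ splits its concatenated input into the groups belonging to each $i$, applies $\rho_{i,j}$ to each group, and concatenates the results. For $j=1$, every block is the full input $\lambda$, so the projection is the identity and no duplication is needed.

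Finally, I will show by induction on $j$ that $(\calL^\oplus_j\circ\cdots\circ\calL^\oplus_1)(D,\lambda)(a)=\bigoplus_i(\calL_{i,j}\circ\cdots\circ\calL_{i,1})(D,\lambda)(a)$ for every $a$. Both the base case and the inductive step follow from the computation above, and taking $j=L$ gives the statement.

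The main obstacle is closure of the transformation class. The proof needs precomposition with linear projections, and block-diagonal combination of the $\rho_{i,j}$, to stay within the class (e.g. $\reluFFNs$, possibly together with $\calF_\times$). For $\reluFFNs$, both operations are realized by stacking networks block-diagonally and absorbing the projection into the first weight matrix. Networks of unequal depth are made equal by appending identity layers, using $x=\relu(x)-\relu(-x)$. For $\calF_\times$, a product over a subset of coordinates of a projected block is again a product over a subset of coordinates of the original input. I will record this closure assumption explicitly, since the statement is used with transformation classes that contain affine maps.
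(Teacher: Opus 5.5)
Your proposal is correct and uses the same parallel-composition construction as the paper: gather the homomorphism queries of all $i$-th layers into one layer, let the queries of $\calN_i$ read only their own block, and apply the $\rho_{i,j}$ block-diagonally. You drop the paper's initial layer that copies $\lambda(a)$ $n$ times, and you make explicit two things the paper's sketch leaves implicit: the block projections, and the identity padding of shorter networks via $\{\Adom(x)\}$ (unequal depths do arise where the lemma is applied, namely when stacking $\calN_i$ with $\calL_{i+1}\circ\calN_i$; your fallback of a fact-free one-point pattern is not available, since $\adom(F)$ is defined through facts, but $\{\Adom(x)\}$ suffices for $\varepsilon(D)$).
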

\begin{proof}
    The first layer of $\calN^\oplus$ copies the input embedding of each node $n$ times. Then, given DHN layers $\calL_i=(\calF_i, \rho_i)$ for each $\calN_i$, $\calN^\oplus$ has a DHN layer $(\calL, \rho)$ where $\calL$ has all homomorphism queries of the $\calF_i$ ordered by $i$, and $\rho$ applies each $\rho_i$ to the feature space corresponding to the output of $\calN_i$.
\end{proof}
We can now prove Theorem~\ref{Thm:NRPtoDHN}:
\ThmNonTotalQueriesDHN*
\begin{proof}
Let $\calQ$ be computed by a monadic NRP. By Lemma~\ref{lem:remove_0ary_from_monadic} $\calQ$ is computed by a monadic NRP $\Pi$ where all IDBs are monadic, and all body atoms are monadic or have embedding dimension $0$. Let $\Pi = (\Psi_1,\dots,\Psi_n)$, let each $\Psi_i$ derive $R_i$ with arity $1\angs{d_i}$ and given e-database $D$ let $D_0 = D$ and for $i = 1,\dots, n$, $D_i = D_{i-1} \cup \Psi_i(D_{i-1})$. We show that for each $i = 1, \dots, n$ there exists a DHN $\calN_i$ so that $\calN_i(\varepsilon(D))$ is the element-embedding of $D_{i}$. By Lemmas~\ref{lem:conjunction_rule_as_DHN}, \ref{lem:disjunction_rule_as_DHN} and~\ref{lem:transformation_rule_as_DHN}, there is a DHN layer $\calL_1$ so that $\calL_1(\varepsilon(D))(a)$ is $(1) \oplus \tup e$ if $R_1(a)\angs{\tup e} \in \Psi_1(D)$ and $(0)\oplus \tup 0^{(d_1)}$ if $R_1(a) \not\in \Psi_1(D)$. By Lemma~\ref{lem:stack_DHNs} there then exists a DHN $\calN_1$ so that $\calN_1(\varepsilon(D))$ is the element-embedding of $\varepsilon(D_1)$.

Now let $i<n$ and suppose $\calN_i(\varepsilon(D))$ is the element-embedding of $\varepsilon(D_i)$. By Lemmas~\ref{lem:conjunction_rule_as_DHN}, \ref{lem:disjunction_rule_as_DHN} and~\ref{lem:transformation_rule_as_DHN}, there exists a DHN layer $\calL_{i+1}$ so that $\calL_{i+1}(\varepsilon(D_i))(a)$ is $(1)\oplus \tup e$ if $R_{i+1}(a)\angs{\tup e} \in \Psi_{i+1}(D_i)$ and $(0)\oplus \tup 0^{(d_i)}$ if $R_{i+1}(a) \not\in \Psi_{i+1}(D_i)$.
Since the only monadic relation in patterns of homomorphism queries in $\calL_{i+1}$ is `Adom' (so that all relations used in patterns of homomorphism queries to $\varepsilon(D_i)$ are also in $D$), and since $\calN_i(\varepsilon(D))$ is the element-embedding of $\varepsilon(D_i)$:
$$
\calL_{i+1}(\varepsilon(D_i)) = \calL_{i+1}(D, \calN_i(\varepsilon(D)))
$$
Thus, $\calL_{i+1} \circ \calN_i(\varepsilon(D))(a)$ is $(1)\oplus \tup e$ if $R_{i+1}(a)\angs{\tup e} \in \Psi_{i+1}(D_i)$ and $(0)\oplus \tup 0^{(d_{i+1})}$ if $R_{i+1}(a) \not\in \Psi_{i+1}(D_i)$. Again using Lemma~\ref{lem:stack_DHNs}, there exists a DHN $\calN_{i+1}$ so that $\calN_{i+1}(\varepsilon(D))$ is the element-embedding of $D_{i+1}$. After $\calN_n$ we add a final layer that selects the part of the element-embedding representing the output relation of $\Pi$.
\end{proof}

Theorem~\ref{Thm:NRPtoDHN} also shows that for every \emph{total} unary embedded query $\calQ$ computed by a monadic NRP with transformations in $\calF$, there exists a DHN with transformations in $\calF \cup \calF_\times$ that computes $\calQ$. One simply removes the first value of the element-embedding. Together with the results of section~\ref{sec:appdx_DHN_to_NRP}, this proves Theorem~\ref{Thm:NRPMatchesDHN}.

\section{Proofs for Section~\ref{sec:guarded}}

\thmGuardedNRP*
\begin{proof}
Let $\Pi=(\Psi_1, \ldots, \Psi_n)$ be a frontier guarded NRP, 
and let $N$ be the maximum content arity of the EDB relations of $\Pi$.
It follows from $\Pi$ being frontier guarded, by a straightforward inductive argument, that $\Pi$ can only derive
IDB e-facts whose content tuple is a projection of an EDB relation. 
The idea behind the construction of $\Pi'$ below is that we will 
represent each output tuple by the row-id of a corresponding EDB e-fact. This will allow us to simulate the rules of $\Pi$ by monadic rules that operate on row-ids. To make this work, we must
furthermore ensure that the representation of output tuples by row-ids is canonical. This is to avoid over-counting when aggregate functions are applied.

For every $k\leq N$, let $P_k$ be the finite set of all $k$-ary EDB-projections
$\rho=(R,(i_1,\ldots,i_k))$, where $R$ is an EDB relation of content arity $m$ and
$i_1,\ldots,i_k\in\{1,\ldots,m\}$. Fix an arbitrary linear order $\prec$ on every $P_k$. 
We start with the following rules, for
all $\rho=(R,(i_1,\ldots,i_k))\in P_k$
and  $\rho'=(Q,(j_1,\ldots,j_k)) \in P_k$ with $\rho'\prec\rho$:
\[
\begin{array}{lll}
\mathit{RowIDs}_\rho(x_1)\angs{\mu_0} &\from& R(x_1,\ldots,x_m)  \\
\mathit{SeenEarlier}_\rho(x_1)\angs{\mu_1} &\from& R(x_1,\ldots,x_m)\wedge R(y_1,\ldots,y_{m})\land y_1<x_1  \\
\mathit{Match}_{\rho,\rho'}(x_1)\angs{\mu_1} &\from&
R(x_1,\ldots,x_m)\wedge Q(y_1,\ldots,y_{m'}) 
\\
\mathit{Earlier}_\rho(x)\angs{\tsum}
&\from&
\mathit{RowIDs}_\rho(x)\lor \mathit{SeenEarlier}_\rho(x) \lor \bigvee_{\rho'\prec\rho}\mathit{Match}_{\rho,\rho'}(x) \\
\mathit{Least}_\rho(x)\angs{\mu} &\from& \mathit{Earlier}_\rho(x) \text{ where $\mu(e)=\mathrm{ReLU}(1-e)$}
\end{array}
\]
where, in the rule producing $\mathit{SeenEarlier}$ the variables in the body are identified so that $(x_{i_1},\ldots,x_{i_k})=(y_{i_1},\ldots,y_{i_k})$, and in the rule producing $\mathit{Match}$ rules,
the variables in the body are identified so that $(x_{i_1},\ldots,x_{i_k})=(y_{j_1},\ldots,y_{j_k})$.
Here, $\mu_0$ is the constant map with value $0$ and $\mu_1$ is the constant map with value $1$. 
Here, we only use the facts about
$\tsum$ that 
$\tsum(\multiset{0,\ldots,0})=0$ and $\tsum(X)\geq 1$ for any multi-set $X$ consisting of zeroes and ones and containing at least one 1. We note that the first three rules above are each shorthand for a pair of rules (namely a conjunctive rule followed by a transformation rule), in the obvious way.

Note that each of the IDBs defined by the above rules is of arity $1\angs{1}$.
Furthermore, it follows from the construction of these rules that, for $\rho=(R,(i_1,\ldots, i_k))$,
$\mathit{Least}_\rho$ contains all row-ids of $R$,
and that the associated embedding
is either 0 or 1, with it being 1 precisely if the
$\rho$-projection of the row in question is not the $\rho$-projection of another row of $R$ with a smaller row-id, and also is 
not the $\rho'$-projection of any EDB-fact for any projection type 
$\rho'\prec\rho$. Thus, when $\mathit{Least}_\rho(a)\angs{1}$ holds and $\rho$-projection of the row with row-id $a$ is $(a_1, \ldots, a_k)$, then 
we can use $a$ as a canonical reference to the tuple $(a_1,\ldots,a_k)$.

Let $d_{\max}$ be the maximum embedding dimension of the IDBs of $\Pi$. For each
IDB $\textit{Least}_\rho$ as defined above 
and for each 
$d\leq d_{\max}$, it will be convenient to create an additional auxiliary IDB
$\mathit{Least}^{(d)}_\rho$ given by the transformation rule
\[ \mathit{Least}^{(d)}_\rho(x)\angs{\mu}\from \mathit{Least}_\rho(x)\]
where $\mu:\reals\to\reals^d$ simply maps $r$ to 
$\angs{r,\ldots, r}$.

We now construct,
for each $k\angs{d}$-ary IDB $T$ of $\Pi$ and for each $k$-ary projection $\rho=(R,(i_1,\ldots,i_k))$, a $1\angs{d}$-ary IDB $T_\rho$, maintaining the following invariant (over all databases $D$):
\begin{equation}
\{(a)\angs{\tup{e}}\mid T_\rho(a)\angs{\tup{e}}\in \Pi'(D)\} = \{a_1\angs{\tup{e}}\mid R(a_1,\ldots, a_m)\in D \text{ and } T(a_{i_1},\ldots, a_{i_k})\angs{\tup{e}}\in \Pi(D)\}
\tag{*}
\end{equation}
In particular, for $T=S$, this then implies the statement of the theorem.

The remaining rules of $\Pi'$ are constructed following the order of the rules $\Psi_1,\ldots,\Psi_n$ of $\Pi$.
Let $i\leq n$.
If $\Psi_j$ is a disjunction rule
\[
T(\tup{z})\angs{\alpha}\from U_1(\tup{z})\lor\cdots\lor U_\ell(\tup{z}),
\]
then, for each  $\rho\in P_k$, where $k$ is the content arity of $T$, it suffices to add
\[
T_\rho(x)\angs{\alpha}\from
U_{1,\rho}(x)\lor\cdots\lor U_{\ell,\rho}(x).
\]

Similarly, if $\Psi_j$ is a transformation rule
\[
T(\tup{z})\angs{\mu}\from U(\tup{z}),
\]
then, for every $\rho\in P_k$, where $k$ is the content arity of $T$, it suffices to add
\[
T_\rho(x)\angs{\mu}\from U_\rho(x).
\]
It remains to consider conjunction rules. Let
\[
\Psi_j:\quad
T(\tup{z})\angs{\tsum}\from_{\odot}
A_1(\tup{u}_1)\wedge\cdots\wedge A_t(\tup{u}_t)\wedge\phi
\]
where each $A_i$ is an IDB and 
where $\phi$ is a conjunction of EDB atoms. 
Let $k\angs{d}$ be the arity of $T$.
For all $k$-ary target projections $\rho=(R,(i_1,\ldots,i_k))\in P_k$, if there exists an $R$-atom $R(x_1,\ldots,x_m)$ whose $\rho$-projection
$(x_{i_1},\ldots,x_{i_k})$ equals $\tup{z}$, fix any such $R$-atom. We now add rules to represent derivation of $T$ by derivation of a monadic relation $T_\rho$.
For each $s\leq t$, let $k_s$ be the content arity of $A_s$ and
let $\rho_s=(Q_s,(j^s_1,\ldots,j^s_{k_s}))$ range over $P_{k_s}$. 
For each such choice
$\bar{\rho}=(\rho_1,\ldots,\rho_t)$, define:
\[
\mathit{T}_{\rho,\bar{\rho}}(x_1)\angs{\tsum}\from_{\odot}
R(x_1,\ldots,x_m)\wedge
(\bigwedge_{s=1\ldots t} \psi_s)\wedge
\phi\]
where
$\psi_s=Q_s(y^s_1,\ldots,y^s_{m_s})\wedge \mathit{Least}^{(d)}_{\rho_s}(y^s_1)\wedge (A_s)_{\rho_s}(y^s_1)$
with the variables additionally identified so that $\tup{u}_s$ equals the projection $(y^s_{j^s_1},\ldots,y^s_{j^s_{k_s}})$. 
In case one or more of the relations $R$ and $Q_s$ are monadic EDBs with a positive embedding dimension,
we replace them by $1\angs{0}$-ary relations $R^\circ$ or $Q_s^\circ$, 
respectively, adding suitable transformation rules
of the form $R^{\circ}(x)\angs{\mu}\from R(x)$, 
in order to ensure that these atoms in the above rule body don't contribute an embedding.

Finally, we add the disjunctive rule
\[
T_\rho(x)\angs{\tsum}\from
\bigvee_{\bar{\rho}}\mathit{T}_{\rho,\bar{\rho}}(x).
\]

We  claim that, in this way, the invariant (*) is preserved. 

By the induction hypothesis, each IDB e-fact $A_s(\tup{c})\langle \tup{e}_s\rangle$ occurring in $\Pi(D)$ is represented in $\Pi'(D)$ by all row-ids of e-facts whose relevant projection is $\tup{c}$, with the same embedding $\tup{e}_s$. Among these representatives, exactly one is canonical: the least row-id for the first projection type, in the fixed order on projection types, that realizes $\tup{c}$. By construction, $Least_{\rho_s}$ has embedding $1^d$ on this canonical representative and embedding $0^d$ on all other representatives. Thus every homomorphism from the body of $\Psi_j$ to $\Pi(D)$ with head value $\bar b$ gives rise to exactly one non-zero contribution to some $T_{\rho,\bar\rho}(r)$, namely by choosing the canonical representative for each IDB body atom. 
Conversely, every non-zero contribution to some $T_{\rho,\bar\rho}(r)$ determines such a homomorphism of the original rule. 

The auxiliary atoms \(R^{(\circ)}\) and \(Q_s^{(\circ)}\)
added to the rule body do not affect the
combined embedding, since they have
embedding dimension zero. 

Non-canonical representatives may still satisfy the translated body, but their contribution is zero because the product includes a $Least_{\rho_s}$-embedding equal to $0$. 

Consequently, the multiset of contributions summed into $T_\rho(r)$ in $\Pi'(D)$ is the same as the multiset of contributions summed into $T(\bar b)$ in $\Pi(D)$, up to additional zero vectors. Since\[\mathrm{sum}(\multiset{x,0,\ldots,0})=x,\] the two resulting embeddings are equal. The final disjunction over all choices of $\bar\rho$ merely ranges over the possible canonical projection types for the IDB body atoms, so these contributions are neither lost nor counted twice. 
Here we use associativity of sum, in the sense that summing first within the
relations \(T_{\rho,\bar\rho}\) and then summing over \(\bar\rho\) gives the same
result as summing over the union of all corresponding contribution multisets.
\end{proof}

\section{Proofs for Section \ref{sec:NRPandFOCrats}}

The proofs in this section establish a one-to-one correspondence between values in embeddings produced by NRPs with 
rational $\reluFFN$ transformations and terms in $\FOCrats$, to prove the equivalence for flat queries of Theorem~\ref{thm:NRP_and_FOC(1/2)_same_relations}.

\subsection{Translation from NRP to \texorpdfstring{$\FOCrats$}{FO+C(1/2)}}

Recall that an e-database $D$ over signature $\calS$ is represented by a weighted structure $\calA_D$ over signature $\sigma$, where for every $R[k]\angs{d} \in \calS$ there are $k$-ary $R,R_1,\dots,R_d$ in $\sigma$. For all $\tup a \in A^k$, $\sem{R}^{\calA_D}(\tup a)$ is $1$ if $R(\tup a) \in D$ and $0$ otherwise, and for $i = 1, \dots, d$, $\sem{R_i}^{\calA_D}(\tup a)$ is the $i$-th embedding value of $R(\tup a)$ in $D$, or $0$ if $R(\tup a) \not\in D$. We show that under this translation, the embeddings produced by NRP rules applied to database $D$ are represented by $\FOCrats$ terms interpreted on $\calA_D$. 

\begin{definition}
\label{def:representation_database_to_struct}
Let $R$ with arity $k\angs{d}$ be produced by NRP $\Pi$. We say a tuple $(\theta^R, \theta^R_1,\dots,\theta^R_d)$ of \FOCrats terms \emph{represents $R$} if for all e-databases $D$ and $\tup a \in \adom(D)^k$:
\begin{enumerate}
    \item If $R(\tup a)\angs{\tup e} \in \Pi(D)$ then $\left(\sem{\theta^R}^{\calA_D}(\tup a), \sem{\theta^R_1}^{\calA_D}(\tup a), \dots, \sem{\theta^R_d)}^{\calA_D}(\tup a)\right)=(1)\oplus \tup e$;
    \item If $R(\tup a) \not\in \Pi(D)$ then $\left(\sem{\theta^R}^{\calA_D}(\tup a), \sem{\theta^R_1}^{\calA_D}(\tup a), \dots,  \sem{\theta^R_d)}^{\calA_D}(\tup a)\right)=(0)\oplus \tup 0^{(d)}$.
\end{enumerate}
\end{definition}

\begin{restatable}{lemma}{lemFNNinFOCrats}
    \label{lem:FNN in FOCrats}
    Let $\mathfrak{F}$ be a $\reluFFN$ with
    rational parameters, input dimension $p$ and output dimension $q$. Let $\theta_1, \dots \theta_p$ be \FOCrats terms over k variables $\tup x$. Then there exist \FOCrats terms $\theta'_1, \dots, \theta'_q$ such that for each weighted structure $\calA$ and $\tup a \in A^k$:
    \begin{align*}
        \mathfrak{F}(\sem{\theta_1}^{\calA}(\tup a), \dots, \sem{\theta_p}^{\calA}(\tup a)) = ( \sem{\theta'_1}^{\calA}(\tup a), \dots, \sem{\theta'_q}^{\calA}(\tup a))
    \end{align*}
\end{restatable}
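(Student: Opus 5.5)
The plan is an induction on the layers of $\mathfrak{F}$. A $\reluFFN$ with rational parameters is a finite composition $\mathfrak{F} = L_r \circ \relu \circ L_{r-1} \circ \cdots \circ \relu \circ L_1$. Each $L_j(\tup v) = W_j \tup v + \tup b_j$ is affine with rational matrix $W_j$ and rational bias $\tup b_j$, and $\relu$ is applied element-wise. (If the convention is that the output layer also carries a $\relu$, the argument is identical.) It therefore suffices to show that the class of tuples of \FOCrats terms over the fixed variables $\tup x$ is closed under two operations: rational affine maps and element-wise $\relu$. The closure must hold pointwise, meaning the new terms evaluate, on every weighted structure $\calA$ and tuple $\tup a$, to the image of the old values. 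Composing these two closure steps layer by layer then yields $\theta'_1,\dots,\theta'_q$.

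\textbf{Affine step.} Given terms $\theta_1,\dots,\theta_p$, output coordinate $i$ of $L_j$ is $\sum_{l} w_{il}\theta_l + b_i$ with $w_{il},b_i\in\rats$. We realise it as the term
\[
  w_{i1}\cdot\theta_1 + \cdots + w_{ip}\cdot\theta_p + b_i .
\]
This uses only the term constructors $q$ (rational constants), $\theta\cdot\theta'$ and $\theta+\theta'$, all available in the grammar. By the semantics in Table~\ref{tab:FOCrats}, the term evaluates at $(\calA,\tup a)$ to exactly the required real number, since constants are interpreted as themselves and $+$ and $\cdot$ are interpreted pointwise. No new free variables are introduced, so the terms remain over $\tup x$. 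If $p=0$ or a row is zero, the term degenerates to the constant $b_i$, which is harmless.

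\textbf{ReLU step.} We set $\relu(\theta) := \max(\theta, 0)$, where $0$ is the rational constant. The $\max$ semantics gives $\sem{\max(\theta,0)}^{\calA}(\tup a) = \max(\sem{\theta}^{\calA}(\tup a),0)$, which is precisely $\relu$.

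The final step is a straightforward induction on the number of layers: the hypothesis asserts pointwise equality of the current layer's activations with the values of the constructed terms for all $\calA$ and $\tup a$, and each of the two steps preserves this. I do not expect a genuine obstacle. The one point to get right is that the parameters must be rational, because \FOCrats only has constants $q\in\rats$; this is exactly the hypothesis of the lemma. The other is to note that terms stay over the same $k$ variables $\tup x$, since no summation is introduced.
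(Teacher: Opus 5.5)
Your proposal is correct and follows essentially the same approach as the paper: the affine part of each layer is built from rational constants, $+$ and $\cdot$, and $\relu$ is realized as $\max(\theta,0)$. The paper states this in two lines, and your layer-by-layer induction just makes the composition explicit.
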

\begin{proof}
    Multiplication and addition with 
    rational constants, as well as sums over a fixed number of terms, are available in \FOCrats. Further, $\relu(\sem{\theta}^\calA(\tup a))$ equals $\sem{\max(\theta,0)}^\calA(\tup a)$.
\end{proof}

\begin{restatable}{lemma}{lemRepresentTransformation}
    \label{lem:represent transformation}
    Let $\Psi$ be a transformation rule, with $\mu$ a 
    rational parameter $\reluFFN$:
    $$
    R(\tup x)\angs{\mu} \from R'(\tup x)
    $$
    If $R'$ is represented in \FOCrats, the same holds for $R$.
\end{restatable}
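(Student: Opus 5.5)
Let $(\theta^{R'},\theta^{R'}_1,\dots,\theta^{R'}_{d'})$ represent $R'$ in the sense of Definition~\ref{def:representation_database_to_struct}, where $\arity(R')=k\angs{d'}$ and $\arity(R)=k\angs{d}$. Since the transformation rule derives $R(\tup a)$ exactly when $R'(\tup a)$ is derived, I take the existence term unchanged: $\theta^R:=\theta^{R'}$. For the embedding coordinates, I apply Lemma~\ref{lem:FNN in FOCrats} to the \reluFFN $\mu$ and the input terms $\theta^{R'}_1,\dots,\theta^{R'}_{d'}$. This yields terms $\eta_1,\dots,\eta_d$ with $(\sem{\eta_1}^{\calA_D}(\tup a),\dots,\sem{\eta_d}^{\calA_D}(\tup a))=\mu(\sem{\theta^{R'}_1}^{\calA_D}(\tup a),\dots,\sem{\theta^{R'}_{d'}}^{\calA_D}(\tup a))$ for every $\tup a$.

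The only subtlety, and the step I expect to require care, is case (2) of the definition. When $R'(\tup a)$ is not derived, the input terms evaluate to $\tup 0^{(d')}$, but $\mu(\tup 0)$ need not be $\tup 0^{(d)}$ because of biases. I therefore gate the output by the existence term and set $\theta^R_j:=\theta^{R'}\cdot\eta_j$ for $j=1,\dots,d$. Multiplication of terms is part of the \FOCrats syntax, so this is a legitimate term with the same free variables $\tup x$.

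Verification then splits into the two cases. If $R(\tup a)\angs{\tup e}\in\Pi(D)$, then $R'(\tup a)\angs{\tup e_0}\in\Pi(D)$ with $\tup e=\mu(\tup e_0)$. By hypothesis $\sem{\theta^{R'}}(\tup a)=1$ and the $\theta^{R'}_i$ evaluate to $\tup e_0$, so $\theta^R$ evaluates to $1$ and the $\theta^R_j$ to $1\cdot\mu(\tup e_0)=\tup e$. If $R(\tup a)\notin\Pi(D)$, then $R'(\tup a)\notin\Pi(D)$, so $\sem{\theta^{R'}}(\tup a)=0$ and every $\theta^R_j$ evaluates to $0$, giving $(0)\oplus\tup 0^{(d)}$.

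One further point: $\Pi(D)$ only adds $R$-facts via this single rule, since each IDB is produced by exactly one rule. So the correspondence between derivation of $R(\tup a)$ and of $R'(\tup a)$ is exact, and the representation holds for all e-databases $D$ and all $\tup a\in\adom(D)^k$.
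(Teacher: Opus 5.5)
Your proof is correct and matches the paper's argument: keep $\theta^{R'}$ as the existence term, obtain the \reluFFN outputs $\eta_j$ via Lemma~\ref{lem:FNN in FOCrats}, and gate each one as $\theta^{R'}\cdot\eta_j$ so that non-derived tuples receive $\tup 0^{(d)}$ even when $\mu(\tup 0)\neq\tup 0$. Your explicit two-case verification simply spells out what the paper's one-line proof leaves implicit.
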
    
\begin{proof}
    Let $(\theta^{R'}, \theta^{R'}_1, \dots, \theta^{R'}_{d'})$ represent $R'$. By Lemma~\ref{lem:FNN in FOCrats} there are terms $\theta_1,\dots,\theta_d$ giving the output of $\mu$. Thus $(\theta^{R'},\theta_1\cdot\theta^{R'},\dots,\theta_d\cdot\theta^{R'})$ represents $R$.
\end{proof}

\begin{restatable}{lemma}{lemRepresentConjunction}
    \label{lem:represent conjunction}
    Let $\Psi$ be a conjunction rule:
    \begin{align*}
        R(\tup x)\angs{\tsum} \fromprod R_1(\tup x_1) \land \dots \land R_m(\tup x_m)
    \end{align*}
    If $R_1,\dots,R_m$ are represented in $\FOCrats$, the same holds for $R$.    
\end{restatable}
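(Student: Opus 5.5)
The plan is to build the representing terms for $R$ directly from those of $R_1,\dots,R_m$. We use the fact that a conjunction rule sums, over all homomorphisms of the body, the $\odot$-product of the body embeddings. Let $\tup y$ be the body variables not occurring in the head $\tup x$. We also need to handle repeated variables in $\tup x$ and in the atoms $\tup x_i$. To do this, we first rename so that the head is a tuple of distinct free variables, and express the required equalities using the terms $\mathbbm{1}_{x=y}$ (a head $R(x,x)$ gives the factor $\mathbbm{1}_{x_1=x_2}$). For each atom $R_i(\tup x_i)$, the representing terms $\theta^{R_i}, \theta^{R_i}_j$ are instantiated with the variables $\tup x_i$ by variable substitution. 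This substitution is harmless because \FOCrats terms are closed under renaming free variables.

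The key terms are
\[
\chi(\tup x,\tup y) := \prod_{i=1}^m \theta^{R_i}(\tup x_i), \qquad
\eta_j(\tup x,\tup y) := \prod_{i=1}^m \kappa_{i,j}(\tup x_i),
\]
where $\kappa_{i,j}$ is $\theta^{R_i}_j$ if $j\le d_i$ and $\theta^{R_i}$ otherwise. This mirrors the definition of $\odot$: coordinates beyond $d_i$ are simply omitted from the product. The indicator factor $\theta^{R_i}$ is used instead, because it equals $1$ whenever the fact is present. For a fixed assignment $(\tup a,\tup b)$, $\chi$ equals $1$ exactly when $(\tup a,\tup b)$ is a homomorphism of the body into $\Pi(D)$, and $0$ otherwise. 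In the first case $\eta_j$ equals the $j$-th coordinate of $\odot(\tup e_{1,h},\dots,\tup e_{m,h})$. In the second case some factor is $0$, so $\eta_j=0$; this is where the zero-padding convention of the representation (Definition~\ref{def:representation_database_to_struct}) is essential. If the conjunction contains some $R_i$ with $d_i<j$, we should double-check that case; it is handled because $\kappa_{i,j}=\theta^{R_i}$ still vanishes off homomorphisms.

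Then set
\[
\theta^R_j := \sum(\tup y).\,\eta_j, \qquad
\theta^R := -\max\Bigl(-\sum(\tup y).\,\chi,\,-1\Bigr),
\]
that is, $\min(\text{count},1)$. Non-homomorphic assignments contribute $0$ to both sums. Hence $\theta^R_j$ is exactly the $\tsum$-aggregate over the multiset of products indexed by homomorphisms $h$ with $h(\tup x)=\tup a$. Likewise, $\theta^R$ is $1$ iff at least one such homomorphism exists, which is precisely when $R(\tup a)$ is derived. When $R(\tup a)$ is not derived, all $\theta^R_j$ are $0$ as required.

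The main subtlety is quantifier ranges. \FOCrats sums range over the whole domain $A=\adom(D)$, whereas $\Pi(D)$ may contain IDB facts. However, IDB content is always drawn from $\adom(D)$, since rules only project existing constants, so summing over $A$ enumerates exactly the candidate homomorphisms. If $\tup y$ is empty, no sum is taken and $\theta^R:=\chi$ directly. I expect the bookkeeping of repeated variables and dimension padding to be the only fiddly part; the rest is routine.
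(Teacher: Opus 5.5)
Your proposal is correct and follows the paper's proof essentially verbatim. The indicator term is the same: $\prod_i \theta^{R_i}(\tup x_i)$ summed over the body-only variables and capped via $-\max(-\cdot,-1)$. The embedding terms are the same coordinatewise products, with $\theta^{R_i}$ substituted for $\theta^{R_i}_j$ whenever $j>d_i$. Your extra care with repeated head variables (via $\mathbbm{1}_{x=y}$ factors) and with the summation range being $\adom(D)$ fills in details that the paper leaves implicit.
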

\begin{proof}
    Let each $R_i$ with arity $k_i\angs{d_i}$ be represented by $(\theta^{R_i},\theta^{R_i}_1,\dots,\theta^{R_i}_{d_i})$. Let $\bigcup_{1 \leq i \leq m} \tup x_i$ be the tuple of all body variables and let $(\bigcup_{1 \leq i \leq m} \tup x_i) \setminus \tup x$ be the tuple of variables that are only in the body.
    Then
    \begin{align*}
        \theta^R(\tup x) & =
        \min\left(\sum((\bigcup_{1 \leq i \leq m} \tup x_i) \setminus \tup x).(\prod_{1 \leq i \leq m} \theta^{R_i}(\tup x_i)),1\right)
    \end{align*}
    where $\min(\theta,1)=-\max(-\theta,-1)$. For each index $j$ of the output embedding such that $j \leq d_i$ for each $i=1 \dots m$:
    \begin{align*}
        \theta^R_j(\tup x) & = \sum((\bigcup_{1 \leq i \leq m} \tup x_i) \setminus \tup x).(\prod_{1 \leq i \leq m} \theta^{R_i}_j(\tup x_i))
    \end{align*}
    If $j > d_i$ for some $i$, the corresponding $\theta^{R_i}_j$ is substituted by $\theta^{R_i}$, so that if $R_i(\tup x_i)$ holds the associated embedding doesn't impact the product, while if $R_i(\tup x_i)$ does not hold the product is $0$.
\end{proof}
\begin{restatable}{lemma}{lemRepresentDisjunction}
    \label{lem:represent disjunction}
    Let $\Psi$ be a disjunction rule:
    \begin{align*}
        R(\tup x)\angs{\tsum} \from R_1(\tup x) \lor \dots \lor R_m(\tup x)
    \end{align*}
    If $R_1,\dots,R_m$ are represented in $\FOCrats$, the same holds for $R$. 
\end{restatable}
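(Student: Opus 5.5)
Following the pattern of Lemmas~\ref{lem:represent transformation} and~\ref{lem:represent conjunction}, we build the representing terms for $R$ directly from those of $R_1,\dots,R_m$. Let each $R_i$, of arity $k\angs{d}$ (all disjuncts share the head's dimension $d$ and content tuple $\tup x$), be represented by $(\theta^{R_i},\theta^{R_i}_1,\dots,\theta^{R_i}_d)$. The existence indicator of $R$ should be $1$ exactly when some $R_i(\tup a)$ is derived. All $\theta^{R_i}$ take values in $\{0,1\}$, so their sum is a natural number between $0$ and $m$, and we cap it at $1$:
\[
\theta^R(\tup x) = \min\Bigl(\sum_{1\leq i\leq m}\theta^{R_i}(\tup x),\,1\Bigr),
\]
with $\min(\theta,1)=-\max(-\theta,-1)$ as before. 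For the embedding coordinates we set
\[
\theta^R_j(\tup x)=\sum_{1\leq i\leq m}\theta^{R_i}_j(\tup x)
\]
for $j=1,\dots,d$. Here the sum is finitary term addition, not a $\sum$ quantifier.

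For correctness we check the two clauses of Definition~\ref{def:representation_database_to_struct}. If $R(\tup a)\angs{\tup e}\in\Pi(D)$, then by the disjunction semantics $\tup e$ is the $\tsum$ of the embeddings $\tup e_i$ over those $i$ with $R_i(\tup a)\angs{\tup e_i}\in\Pi(D)$. By the induction hypothesis, the terms $\theta^{R_i}_j$ evaluate to $(\tup e_i)_j$ for these $i$ and to $0$ for all other $i$, so $\theta^R_j$ yields exactly $e_j$. At least one indicator is $1$, so $\theta^R$ evaluates to $1$.

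If $R(\tup a)\notin\Pi(D)$, then no $R_i(\tup a)$ is derived. All indicators and coordinates are $0$, so we obtain $(0)\oplus\tup 0^{(d)}$.

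One subtlety is that the $R_i$ are evaluated in $\Pi(D)$, whereas the rule is applied to $D_{j-1}$. Since facts are only ever added and each IDB is produced by a single rule, these coincide on the body relations.

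The only real obstacle is making sure that absent disjuncts contribute zero rather than spurious values. This is handled by the representation invariant from clause (2), which forces absent facts to have all-zero coordinates. Because of this, no multiplication by indicators is needed, unlike in the conjunction case.
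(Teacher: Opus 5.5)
Your proposal is correct and matches the paper's proof exactly: you use the same terms $\theta^R=\min\bigl(\sum_i\theta^{R_i},1\bigr)$ and $\theta^R_j=\sum_i\theta^{R_i}_j$. You also spell out the verification the paper leaves implicit, namely that absent disjuncts contribute zero because of clause (2) of the representation definition.
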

\begin{proof}
    \begin{align*}
        \theta^R(\tup x) & = \min(\theta^{R_1}(\tup x) + \dots + \theta^{R_m}(\tup x),1)
    \end{align*}
    And for each index $j$ of the output embedding:
    \begin{align*}
        \theta^R_j(\tup x) & =
        \theta^{R_1}_j(\tup x)+ \dots + \theta^{R_m}_j(\tup x)
    \end{align*}
\end{proof}

This yields one direction of Theorem~\ref{thm:NRP_and_FOC(1/2)_same_relations}: every flat query $\calQ$ computed by a simply-gated \reluFFN-based NRP is defined in \FOCrats.
\begin{proof}
Let $(\Pi,P)$ compute $\calQ$ with $\Pi = (\Psi_1, \ldots, \Psi_n)$ and $P: \reals^d \to \{0,1\}$ being a $d$-ary acceptance policy that maps $\tup e \in \reals^d$ to $0$ or $1$ with a Boolean combination of inequalities of the form $x_i>0$ with $i = 1,\dots,d$. Assume that $\Psi_n$ produces the IDB $R$.
Now, by definition, $\tup a \in \calQ(D)$ if and only if for some $\tup e \in \reals^d$, $R(\tup a)\angs{\tup e} \in \Pi(D)$ and $P(\tup e)=1$.

Also by definition, each EDB $S \in \scs$ with arity $k_i\angs{d_i}$ is represented by relations $(S, S_1, \dots, S_{d_i})$ over $\sigma$. Applying Lemmas~\ref{lem:represent transformation}, \ref{lem:represent conjunction} and~\ref{lem:represent disjunction} each IDB relation produced by a rule $\Psi_i$ is also represented in \FOCrats. Thus, if $\arity(R) = k\angs{d}$ there exists a tuple of $k$-ary terms $(\theta^R, \theta^R_1,\dots, \theta^R_d)$ that represents $R$. Let $\phi = \theta^R>0 \wedge \phi_P$, where $\phi_P$ is the Boolean combination of $P$ with $\theta^R_i>0$ substituted for $x_i>0$. Then $\phi$ defines $\calQ$.
\end{proof}

\subsection{Translation from \texorpdfstring{\FOCrats}{FO+C(1/2)} to NRP}

We show that for every \FOCrats term $\theta$, there exists an NRP that computes the interpretation of $\theta$ as embedding.

\begin{restatable}{lemma}{lemWeightedFOCratstoNRP}
\label{lem:FOCrats_terms_to_NRP}
Let $\theta$ be an \FOCrats term. There exists an NRP $\Pi$ with rational $\reluFFN$ transformations producing relation $R^\theta$ so that, if $\theta$ has $k\geq 0$ free variables, for each e-database $D$ and $\tup a \in \adom(D)^k$:
$$
R^\theta(\tup a)\angs{\sem{\theta}^{\calA_D}(\tup a)} \in \Pi(D)
$$
\end{restatable}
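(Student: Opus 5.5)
We proceed by structural induction on $\theta$, strengthening the statement to a \emph{total} version: for every term $\theta$ with free variables among $\tup x=(x_1,\ldots,x_k)$ (we allow $\tup x$ to contain variables not actually free in $\theta$, fixing an order), we build an NRP producing $R^\theta[k]\angs{1}$ such that for every $\tup a\in\adom(D)^k$ the e-fact $R^\theta(\tup a)\angs{\sem{\theta}^{\calA_D}(\tup a)}$ is derived. Totality is the key invariant. The base device is the $k$-fold product of $\Adom$: the rule $\text{Dom}_k(x_1,\ldots,x_k)\from \Adom(x_1)\land\cdots\land\Adom(x_k)\land\text{True}_1()$ derives every tuple with embedding $1$. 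We also write $\text{Zero}_{\rho}$ for the same relation with $\text{True}_0()$ and no projection.

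The base cases are as follows. For a constant $q$, apply a transformation rule with the constant map $q$ (an affine, hence rational $\reluFFN$) to $\text{Dom}_k$. For $\mathbbm{1}_{x=y}$, take a conjunction rule $E(x,x)\from\Adom(x)\land\text{True}_1()$ and the disjunction $R(x,y)\angs{\tsum}\from E(x,y)\lor \text{Zero}(x,y)$. This disjunction yields $1$ on the diagonal and $0$ elsewhere; the head variable $y$ in $E(x,x)$ is handled by writing the body with $x$ repeated. For an atom $R(\tup x)$ over $\sigma$, there are two subcases, both realised by a conjunction with $\text{True}_1()$ or a transformation selecting the $i$-th embedding coordinate, followed by a disjunction with $\text{Zero}$ so that tuples not in $D$ receive $0$:
\begin{itemize}
\item if the symbol is $S$ itself, the term must return the indicator of $S(\tup a)\in D$;
\item if the symbol is $S_i$, the term must return the $i$-th embedding value.
\end{itemize}
Repeated variables in the atom are handled by identifying variables in the body, and we pad to the full variable tuple $\tup x$ by conjoining with $\Adom(x_j)$ for the missing variables.

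The inductive cases use the invariant as follows.
\begin{itemize}
\item For $\theta+\theta'$ and $-\theta$: concatenate with $R(\tup x)\fromconcat R^\theta(\tup x)\land R^{\theta'}(\tup x)$ (rewritten to $\odot$ by Proposition~\ref{prop:NRM_prod_simulates_concat}), then apply an affine map.
\item For $\max(\theta,\theta')$: use the $\reluFFN$ $\max(u,v)=v+\relu(u-v)$.
\item For $\theta\cdot\theta'$: use a conjunction rule with $\odot$ combination on the two one-dimensional relations over the same variables. Since both are total and all variables occur in the head, there is exactly one homomorphism per tuple, so the product is exact.
\item For $\sum(\tup y).\theta$ with $\theta$ over $\tup x\tup y$: use the conjunction rule $R(\tup x)\angs{\tsum}\from R^\theta(\tup x,\tup y)$. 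By totality every extension $\tup a'$ of $\tup a$ over $\adom(D)$ contributes exactly once, which matches the $\FOCrats$ semantics because the domain of $\calA_D$ is $\adom(D)$.
\end{itemize}
In each case we first lift both subterms to the common variable tuple by conjoining with $\Adom$ atoms, which preserves totality and values.

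The main obstacle is the bookkeeping around totality and variable alignment: sub-relations must be defined on exactly $\adom(D)^k$ (no missing tuples, no duplicate homomorphisms). This is what makes sum aggregation coincide with the $\FOCrats$ sum and products exact. An edge case is $\adom(D)=\emptyset$ with $k=0$: there $\text{True}_q()$ still exists, and a zero-ary $\sum$ over an empty domain must give $0$. We handle this by disjoining every zero-ary result with $\text{True}_0()$, exactly as in Example~\ref{ex:query-algs}.
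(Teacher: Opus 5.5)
Your proposal is correct and follows essentially the same route as the paper's proof. Both arguments use structural induction, with totality enforced by disjoining with an $\Adom$-based zero relation; concatenation followed by affine or $\relu$ maps for $+$, $-$ and $\max$; an $\odot$-conjunction for products; and a $\tsum$-aggregating conjunction over an $\Adom$-padded body for $\sum$. Your explicit totality invariant, the variable-alignment bookkeeping, and the extra disjunction with $\text{True}_0()$ for zero-ary results are refinements of points the paper's proof leaves implicit. That last disjunction is needed on the empty e-database, where a zero-ary $\sum$ would otherwise derive no fact.

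One small slip needs fixing, in the indicator atom $S(\tup x)$ when $S$ has positive embedding dimension. There the conjunction $S(\tup x)\land\text{True}_1()$ returns, by the padding convention of $\odot$, the embedding of $S$ rather than $1$. Use the constant transformation $P(\tup x)\angs{\mu_1}\from S(\tup x)$ instead, as the paper does.
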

\begin{proof}
    We perform induction over term construction. In the induction start,
    $\theta$ is of the form $q$ for some $q \in \rats$, or $R(\tup x)$ or $\mathbbm{1}_{x = y}$.
    If $\theta$ is a rational 
    $q \in \rats$
    , $\Pi$ produces $0\angs{1}$-ary relation 
    $\text{True}_{q}$ (as described in Section~\ref{sec:nrp}). 
    Recall that every $R[k]\angs{d} \in \scs$ is represented over $\sigma$ by $(R, R_1,\dots, R_{d})$ following Definition~\ref{def:representation_database_to_struct}. Let $\theta$ be $k$-ary with free variables $\tup x = (x_1,\dots,x_k)$. Now if $\theta(\tup x) = R(\tup x)$ let $\Pi$ be:
    \begin{align*}
        \text{Zero}_k(\tup x) &\from \Adom(x_1) \land \dots \land \Adom(x_k) \land \text{True}_0()\\
        P(\tup x)\angs{\mu_1} &\from R(\tup x)\\      
        R^\theta(\tup x)\angs{\tsum} &\from P(\tup x) \lor \text{Zero}_k(\tup x)
    \end{align*}
    Here $\text{True}_0()$ has embedding $(0)$, and $\mu_1$ produces constant embedding $(1)$. Similarly, if $\theta(\tup x)$ is a $k$-ary relation $R_j$ that encodes the $j$-th embedding value of database relation $R$, let $\Pi$ be:
    \begin{align*}
        \text{Zero}_k(\tup x) &\from \Adom(x_1) \land \dots \land \Adom(x_k) \land \text{True}_0()\\
        P(\tup x)\angs{\mu} &\from R(\tup x)\\      
        R^\theta(\tup x)\angs{\tsum} &\from P(\tup x) \lor \text{Zero}_k(\tup x)
    \end{align*}
    where now $\mu$ selects the $j$-th value of the embedding. If $\theta$ is $\mathbbm{1}_{x=y}$, let $\Pi$ be:
    \begin{align*}
        \text{Zero}_2(x,y) &\from \Adom(x) \land \Adom(y) \land \text{True}_0()\\
        \mathit{Eq}(x,x)\angs{\tsum} &\from \Adom(x) \land \text{True}_1()\\
        R^\theta(x,y)\angs{\tsum} &\from \mathit{Eq}(x,y) \lor \text{Zero}_2(x,y)
    \end{align*}

    For the induction step, suppose the embeddings of relation $R^{\theta_1},R^{\theta_2}$ produced by $\Pi$ are the interpretations of terms $\theta_1, \theta_2$. If $\theta(\tup x)$ is $\theta_1(\tup x_1) + \theta_2(\tup x_2)$, we add the following conjunction rule and transformation rule to $\Pi$:
    \begin{align*}
        R'^{\theta}(\tup x)\fromconcat R^{\theta_1}(\tup x_1)\land R^{\theta_2}(\tup x_2) \qquad   R^{\theta}(\tup x)\angs{\mu} \from R'^{\theta}(\tup x)
    \end{align*}
    where $\mu$ sums the two indices of the embedding. If $\theta(\tup x)$ is $\theta_1(\tup x_1) \cdot \theta_2(\tup x_2)$ we add a single conjunction rule with product combination:
    \begin{align*}
        R^{\theta}(\tup x) \fromprod R^{\theta_1}(\tup x_1) \land R^{\theta_2}(\tup x_2)
    \end{align*}
    If $\theta(\tup x)$ is $-\theta_1(\tup x)$ we add a single transformation rule that multiplies the embedding by $-1$. 
    If $\theta(\tup x)$ is a summing term $\sum(\tup x_1).\theta_2(\tup x_2)$ 
    we add the following conjunction rule:
    \begin{align*}
        R_{\theta}(\tup x)\angs{\tsum} \from R_{\theta_2}(\tup x_2) \bigwedge_{x \in \tup x_1 \setminus \tup x_2} \Adom(x)
    \end{align*}
    Finally, if $\theta(\tup x)$ is $\max(\theta_1(\tup x_1), \theta_2(\tup x_2))$ we add a conjunction and transformation rule:
    \begin{align*}
        R'_\theta(\tup x) \fromconcat R_{\theta_1}(\tup x_1) \land R_{\theta_2}(\tup x_2) \qquad
        R_\theta(\tup x)\angs{\mu} \from R'_\theta(\tup x)
    \end{align*}
    Where $\mu((e_1,e_2))=\max(e_1,e_2)=e_1+\relu(e_2-e_1)$.
\end{proof}

Lemma~\ref{lem:FOCrats_terms_to_NRP} yields the second direction of Theorem~\ref{thm:NRP_and_FOC(1/2)_same_relations}: every flat query defined in \FOCrats is computed by a simply-gated \reluFFN-based NRP.

\begin{proof}
    Let $\calQ$ be a $k$-ary flat query defined by $\phi(\tup x)$, which is a combination of inequalities $\theta_i > \theta_j$ using negation and conjunction. Let $\theta_1,\dots,\theta_m$ be the terms in inequalities of $\phi$. Using Lemma~\ref{lem:FOCrats_terms_to_NRP} there exists for each term $\theta_i$ in $\phi$ a relation $R_i$ computed by a simply-gated \reluFFN-based NRP $\Pi_i$ so that for each e-database $D$ and $\tup a \in \adom(D)^k$, $R_i(\tup a)\angs{\sem{\theta_i}^{\calA_D}(\tup a)}\in \Pi_i(D)$. 

Let $\Pi$ be the union of these programs $\Pi_i$, together with the following conjunction and transformation rules:
\begin{align*}
    R'(\tup x)
    &\fromconcat
    R_1(\tup x) \land \cdots \land R_m(\tup x),
    \\
    R(\tup x)\angs{\mu}
    &\from
    R'(\tup x).
\end{align*}
Here, $\mu: \reals^m \to \reals^{m^2}$ maps
$\tup e = (e_1,\dots,e_m)$ to the vector containing all pairwise differences
$e_i - e_j$ for $1 \leq i,j \leq m$.
Let $P$ be the acceptance policy obtained from $\phi$ by replacing each inequality
$\theta_i > \theta_j$ with the corresponding value inequality
$e_i - e_j > 0$. Then $(\Pi,P)$ computes $\calQ$.
\end{proof}

\section{Proofs for Section~\ref{sec:FOC_TC0}}

\subsection{First-order Logic with Counting}

We recall the syntax and semantics of \FOC as defined in \cite{grohe2024descriptive}. \FOC has element variables $x,x',x_1$ and number variables $y,y',y_1$.
\paragraph{Syntax}
\[\begin{array}{llll}
        \text{Terms: }    &
        \theta            & ::= & 0 \mid 1 \mid y \mid \theta + \theta' \mid \theta \cdot \theta' \mid \#(x_1,\dots,x_k, y_1 < \theta_1,\dots, y_\ell < \theta_\ell).\phi\\
        \text{Formulas: } &
        \phi              & ::= & R(x_1,\dots,x_k) \mid x_1=x_2 \mid \neg \phi \mid \phi \wedge \psi \mid \theta \leq \theta'
    \end{array}\]
Here, in $\#(x_1,\dots,x_k, y_1 < \theta_1,\dots, y_\ell < \theta_\ell).\phi$ the bounding terms $\theta_i$ only depend on variables $x_1,\dots,x_k$ and $y_j$ with $j < i$.

\paragraph{Semantics}

\FOC is interpreted over Boolean (unweighted) $\sigma$-structures  $\calA$ together with the nonnegative integers as numerical domain $\mathbb{N}$. We define values $\sem{\phi}^{\calA}(\tup a, \tup b) \in \{0,1\}$ and $\sem{\theta}^{\calA}(\tup a, \tup b) \in \mathbb{N}$ for formulas $\phi$ and terms $\theta$, where $\tup a \in A^k$ and $\tup b \in \mathbb{N}^{k'}$ if the formula or term has $k$ free structure variables and $k'$ free number variables.
\begin{itemize}
    \item $\sem{0}^{\calA}$ =  $0$, $\sem{1}^{\calA}$ = $1$ and $\sem{y}^{\calA}(b)$  = $b$.

    \item $\sem{\theta + \theta'}^{\calA}(\tup a, \tup b)$ = $\sem{\theta}^{\calA}(\tup a, \tup b) + \sem{\theta'}^{\calA}(\tup a, \tup b)$.

    \item $\sem{\theta \cdot \theta'}^{\calA}(\tup a, \tup b)$ = $\sem{\theta}^{\calA}(\tup a, \tup b) \cdot \sem{\theta'}^{\calA}(\tup a, \tup b)$.

    \item $\sem{\#(x_1, \ldots, x_k, y_1<\theta_1,\dots,y_\ell<\theta_\ell).\phi}^{\calA}(\tup a,\tup b)$
    is the number of tuples:
$(\tup a', \tup b') \in A^k \times \nats^\ell$
such that 
$\sem{\phi}^{\calA}(\tup a, \tup a', \tup b, \tup b') = 1$ and for $i=1,\dots,\ell$, with $\tup b'_{<i}$ containing the first $i-1$ elements and $b'_i$ being the $i$-th element of $\tup b'$:~ $b'_i < \sem{\theta_i}^{\calA}(\tup a, \tup a', \tup b, \tup b'_{<i})$.
    \item $\sem{R}^\calA(\tup a) = R^\calA(\tup a)$.
    \item $\sem{x_1=x_2}^\calA(a_1,a_2)$ is $1$ if $a_1=a_2$ and $0$ otherwise.
    \item $\sem{\neg \phi}^\calA(\tup a,\tup b) = 1-\sem{\phi}^\calA(\tup a,\tup b)$, and $\sem{\phi \wedge \psi}^\calA(\tup a,\tup b) = \sem{\phi}^\calA(\tup a,\tup b)\cdot \sem{\psi}^\calA(\tup a,\tup b)$.
    \item $\sem{\theta \leq \theta'}^{\calA}(\tup a, \tup b)$ is $1$ if $\sem{\theta }^{\calA}(\tup a, \tup b) \leq \sem{\theta'}^{\calA}(\tup a, \tup b)$ and $0$ otherwise.    
\end{itemize}

\subsection{Proof of Theorems \ref{thm:FOC_is_FOCrats_is_TCO} and \ref{thm:FOCrats_contained_in_TCO}}

An ordered (weighted) $\sigma$-structure is a $\sigma \cup \{\leq\}$-structure $\calA$ where $\leq^\calA$ is a linear order on the domain of $\calA$. Over ordered structures, number variables can be represented by element variables. We give an explicit proof.
\begin{lemma}
    \label{lem:FOC without number variables on ordered structures}
    Every \FOC-formula without free number variables is equivalent, over ordered Boolean structures, to a \FOC-formula without number variables
\end{lemma}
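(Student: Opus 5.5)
The plan is to eliminate number variables by coding numbers as tuples of domain elements in base $n=|A|$, using the linear order. Because an \FOC-formula has no free number variables and terms are built from $0$, $1$, $+$, $\cdot$ and counting terms, every term has a value polynomially bounded in $n$. So there is a constant $c$, computable from the formula, such that every number term and number variable occurring in it takes values below $n^{c}$. I would prove this first, by induction on terms. Counting terms $\#(\tup x,\tup y<\tup\theta).\phi$ are bounded by $n^{k}\cdot\prod_i \max\theta_i$, sums and products of bounded terms are bounded, and bounded number variables inherit the bound of their bounding term. Let $c$ exceed all resulting exponents.

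Next, I represent each number $b<n^{c}$ by a $c$-tuple $\tup x$ of elements, read as the base-$n$ digits with respect to $\leq$ (the $i$-th element in the order is digit $i-1$). I then show that the basic arithmetic on such codes is \FOC-definable without number variables. The relations needed are $\tup x<\tup x'$ (lexicographic order, which is first-order), $\tup x=0$ and $\tup x=1$ (via the minimum and its successor), and the ternary relations $\mathrm{ADD}(\tup x,\tup x',\tup x'')$ and $\mathrm{MULT}(\tup x,\tup x',\tup x'')$.

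For ADD, first-order logic with order suffices via carry-lookahead, though it needs successor and digit addition. Digit addition and carries are definable using counting. For example, ``digit $u$ plus digit $v$ equals digit $w$'' is the statement $\#(z).(z<u)+\#(z).(z<v)=\#(z).(z<w)$, which uses only counting terms over element variables and $+$ on counts.

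Multiplication is the main obstacle. Rather than build it by hand, I would use the pure counting terms themselves. A counting term $\#(\tup z).\psi$ with only element variables already yields a number, and number terms are compared only with $\leq$. So the real task is to translate each number-term atom $\theta\leq\theta'$ into a formula about codes.

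I would do this by proving, by induction on terms $\theta$ with free number variables $\tup y$, that there is a formula $\chi_\theta(\tup x_{\mathrm{free}},\tup v_{\tup y},\tup w)$ without number variables expressing ``$\theta$ evaluated at the numbers coded by $\tup v_{\tup y}$ equals the number coded by $\tup w$''. The cases are as follows.
\begin{itemize}
\item Constants and variables are immediate.
\item For $+$ and $\cdot$, use existentially quantified codes together with ADD and MULT. MULT on $c$-digit codes is definable in \FOC without number variables, since iterated addition of $c^{2}$ partial digit products is expressible via counting, following the classical FO[$<$]+MAJ $\subseteq$ uniform \TCO constructions of \cite{barrington1990uniformity}. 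I would cite this rather than reprove it.
\item For a counting term $\#(\tup x',\tup y<\tup\theta).\phi$, replace each $y_i$ by a code $\tup v_i$ with the guard $\exists\tup w_i(\chi_{\theta_i}\wedge\tup v_i<\tup w_i)$. This gives a count of element tuples $(\tup x',\tup v_1,\dots)$, i.e.\ a pure element counting term $\#(\tup x',\tup v).\phi'$. Its value lies below $n^{c}$, but it is a number and not a code, so I need the bridge ``count equals coded value''. I would express this as $\#(\tup x',\tup v).\phi'=\#(\tup u).(\tup u<\tup w)$, an equality of two counting terms, which is allowed syntactically.
\end{itemize}

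Finally, an atom $\theta\leq\theta'$ becomes $\exists\tup w\,\exists\tup w'\,(\chi_\theta\wedge\chi_{\theta'}\wedge\tup w\leq\tup w')$, and the Boolean structure is kept. The degenerate case $n=1$, where base-$n$ codes collapse, is handled separately by a direct case split.
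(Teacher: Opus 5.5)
Your argument is correct and shares the paper's skeleton: polynomially bound the number variables, code bounded numbers as fixed-length tuples of elements read in base $|A|$ via the order, and treat $|A|=1$ separately. Where you differ is in how terms are translated. You build graph formulas $\chi_\theta$ relating codes of the free number variables to a code of the value of $\theta$. This forces you to define $\mathrm{ADD}$ and $\mathrm{MULT}$ on codes, and to make every intermediate term value fit into $c$ digits.

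The paper never does arithmetic on codes. It first moves each bounding term into the body of its counting term as a guard, so every number variable is bounded by $p(\ord)$. It then replaces each occurrence of a number variable $y_i$ by the counting term $\#(\tup x').(\tup x'<_{\mathrm{lex}}\tup x^i)$, whose value is exactly the number coded by $\tup x^i$. Each binder $y_i<p(\ord)$ becomes a binder over the tuple $\tup x^i$, and $+$, $\cdot$ and $\leq$ stay as \FOC's native operations on numbers.

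You already use this bridge term, $\#(\tup u).(\tup u<\tup w)$, in your counting case. Applied uniformly, it removes your ``main obstacle'': for instance, $\mathrm{MULT}(\tup x,\tup x',\tup x'')$ is just $\#(\tup u).(\tup u<\tup x)\cdot\#(\tup u).(\tup u<\tup x')=\#(\tup u).(\tup u<\tup x'')$. So the carry-lookahead discussion and the appeal to \cite{barrington1990uniformity} are unnecessary. That citation, phrased via $\mathrm{FO}[<]+\mathrm{MAJ}\subseteq$ uniform \TCO, also points the wrong way for definability of multiplication in the logic.

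The paper's route is a direct syntactic substitution. It only needs the number variables, not all terms, to be bounded by $|A|^N$. Your route buys no extra generality here, because your digit-addition step and your bridge already rely on native arithmetic on counts.
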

\begin{proof}
        For a $\FOC$-formula $\phi$ without free number variables and a univariate
    polynomial $p$, we will say that $\phi(\tup x)$ is
    \emph{uniformly $p$-bounded} if for every Boolean structure $\calA$ and $\tup a \in A^{|\tup x|}$, all numerical variables are bounded by $p(|A|)$. As a first step, we will show that every \FOC-formula without free number variables is equivalent, over Boolean structures, to a uniformly $p$-bounded formula. Let $\phi$ be a \FOC-formula without free number variables.
    We may assume without loss of generality that each number variable is only bound by one $\#$ binder (by renaming variables as needed). We can then associate
    to each number variable $y$ a unique corresponding
    bounding term $\theta_{y}$. Consider the
    directed graph whose nodes are the number variables and
    with an edge $y\to y'$ if $y'$ occurs in the bounding term $\theta_{y}$. It follows from the definition of \FOC that this graph has no cycles.
    We can thus associate a rank to each variable, where
    the leaves of the graph have the lowest rank.
    We now compute by induction on the rank, for each number variable $y$ a polynomial $p_{\theta_y}(z)$ so that $p_{\theta_y}(|A|)$ bounds the interpretations of $\theta_y$ for each structure $\calA$.   
    \begin{itemize}
        \item $p_0(z)=0$,
        \item $p_1(z)=1$,
        \item $p_{y}(z)=p_{\theta_y}(z)$
        \item
              $p_{\theta+\theta'}(z)=p_\theta(z)+ p_{\theta'}(z)$
              and $p_{\theta\cdot\theta'}(z)=p_\theta(z)\cdot p_{\theta'}(z)$
        \item $p_{\#(x_1,\ldots, x_k,y_1\leq \theta_1, \ldots, y_\ell<\theta_\ell).\phi}(z)=z^k\cdot p_{\theta_1}(z)\cdot \ldots \cdot p_{\theta_\ell}(z)$.
    \end{itemize}
    Let $p$ be a polynomial that dominates $p_{\theta_y}$ for each
    number variable $y$ occurring in $\phi$.
    Let $\phi^p$ be obtained from $\phi$ as follows, by induction, where `$\operatorname{ord}$' is an abbreviation of $\#x.x=x$, representing object domain size:
    \begin{itemize}
        \item $(\psi)^p = \psi$ for all atomic formulas and terms,
        \item $(\cdot)^p$ commutes with all operators except $\#$.
        \item $(\#(x_1,\ldots, x_k,y_1\leq \theta_1, \ldots, y_\ell<\theta_\ell).\psi)^p= \#(x_1,\ldots, x_k,y_1\leq p(\ord), \ldots, y_\ell<p(\ord)).(y_1\leq \theta_1\land \cdots\land y_\ell<\theta_\ell \land \psi^p) $
    \end{itemize}
    Then $\phi$ is equivalent to $\phi^p$ over all Boolean structures, and $\phi^p$ is uniformly $p$-bounded.

    Now, we prove the main statement.
    Let $\phi$ be an \FOC-formula without free number variables. By the above, we may assume that $\phi$ is uniformly $p$-bounded for some univariate polynomial $p$.
    Let us first restrict attention to structures with domain size at least two. We can then
    assume without loss of generality that $p(x)=x^N-1$ for some $N\in\mathbb{N}$. We simulate number variables by length-$N$ tuples of element variables. Formally, for each number variable
    $y_i$, let $\tup x^i=x^i_1, \ldots, x^i_{N}$ be
    corresponding fresh element variables. We write
    $\textbf{x}^i<_{\mathrm{lex}}\textbf{x}^j$ for
    \[\bigvee_{k=1,\ldots,N} \Big(x^i_{k}<x^j_k\land \bigwedge_{k'<k} x^i_{k'}=x^j_{k'}\big) \]
    We now inductively translate $\phi$ to an equivalent $(\phi)^\dagger$ without number variables as follows, where $\tup x'$ is another $N$-tuple of fresh element variables:
    \begin{itemize}
        \item $(y_i)^\dag = \#(\tup x')(\tup x'<_{\mathrm{lex}} \tup x^i)$,
        \item $(\chi)^\dag = \chi$ for all other atomic formulas/terms,
        \item $(\cdot)^\dag$ commutes with all operators other than $\#$,
        \item $\big(\#(x_1, \ldots, x_n, y_{i_1}<p(\ord), \ldots, y_{i_\ell}<p(\ord)).\phi\big)^\dag=\#(x_1, \ldots, x_n,\tup x^{i_1},\dots,\tup x^{i_\ell}).(\phi)^\dag$
    \end{itemize}
    This argument assumed that we restrict attention to structures of domain size at least two. Over structures of domain size 1, it is trivial that $\phi$ is equivalent to a formula without number variables, since there are only finitely many isomorphism types of such structures.
    Furthermore, we can test with a \FOC-formula whether the domain size of a structure is at least two.    
\end{proof}

\begin{corollary}
\label{cor:FOC_to_FOCrats_formulas}
Every $\FOC$ formula is equivalent over ordered Boolean structures to an $\FOCrats$ formula.
\end{corollary}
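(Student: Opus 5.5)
By Lemma~\ref{lem:FOC without number variables on ordered structures}, I may assume the given \FOC-formula $\phi$ has no number variables at all, possibly after also treating its free number variables as absent, since the statement concerns formulas defining flat queries. In that case every term of $\phi$ is built from $0$, $1$, $+$, $\cdot$ and counting terms $\#(x_1,\dots,x_k).\psi$ that range only over element variables. I then translate by simultaneous induction. Each \FOC-formula $\psi$ goes to an \FOCrats-formula $\psi^*$. Each \FOC-term $\theta$ goes to an \FOCrats-term $\theta^*$ with $\sem{\theta^*}^{\calA}(\tup a)=\sem{\theta}^{\calA}(\tup a)$ on every ordered Boolean structure.

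The key auxiliary step is to also provide, for each formula $\psi$, an \FOCrats-term $t_\psi$ whose value is the truth value of $\psi$, in $\{0,1\}$. This is needed because \FOCrats only allows counting terms over terms, not over formulas.

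For atoms, I take $t_{R(\tup x)}:=R(\tup x)$, which is already $0/1$ on Boolean structures, and $t_{x_1=x_2}:=\mathbbm{1}_{x_1=x_2}$. The order atom $x_1\le x_2$ is just another relation symbol $\le$ of the ordered signature, so it is treated like $R$. The connectives follow the pattern already given for first-order logic: $t_{\neg\psi}:=1-t_\psi$, written $1+(-t_\psi)$, and $t_{\psi\land\chi}:=t_\psi\cdot t_\chi$. For a comparison $\theta\le\theta'$, I need a $0/1$ term for ``$\theta^*\le\theta'^*$''. Since both sides are integers, I can use
\[
t_{\theta\le\theta'} := 1-\min\bigl(\max(\theta^*-\theta'^*,0),1\bigr),
\]
with $\min(u,1)=-\max(-u,-1)$. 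This is where integrality of \FOC-values is used: a positive difference is at least $1$, so the clipping yields exactly $0$ or $1$.

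Terms are then straightforward. $0$ and $1$ are rational constants, and sums and products translate directly. A counting term translates as $(\#(x_1,\dots,x_k).\psi)^*:=\sum(x_1,\dots,x_k).t_\psi$, which counts exactly the satisfying tuples.

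Finally, I set $\phi^*:= t_\phi>0$. Correctness is a routine induction showing $\sem{t_\psi}^\calA(\tup a)=\sem{\psi}^\calA(\tup a)$ and $\sem{\theta^*}^\calA(\tup a)=\sem{\theta}^\calA(\tup a)$. The only mildly delicate point is the comparison case, handled above by integrality and clipping.

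I expect the main obstacle to be bookkeeping rather than any real difficulty. First, free number variables in \FOC-terms must be fully eliminated before the translation, which is exactly why Lemma~\ref{lem:FOC without number variables on ordered structures} is invoked first. Second, the order relation must be available as a relation symbol in the \FOCrats signature.
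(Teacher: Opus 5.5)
Your proposal is correct and follows the paper's proof essentially step for step. Both eliminate number variables via the preceding lemma, translate formulas to $0/1$-valued terms (negation as $1-t$, conjunction as product, counting as $\sum$, comparison by clipping using integrality), and output $t_\phi>0$; your comparison term $1-\min(\max(\theta^*-\theta'^*,0),1)$ is just a rearrangement of the paper's $\min(\max(\theta'+1-\theta,0),1)$.
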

\begin{proof}
Lemma~\ref{lem:FOC without number variables on ordered structures} states that every formula or term in \FOC is equivalent to a formula or term without number variables on ordered structures. We show by induction on formula construction that for every formula or term  $\chi$ in \FOC without number variables there exists an equivalent term $(\chi)^\dagger$ in \FOCrats. The cases where $\chi$ is a relational atom, equality, $0$ or $1$ are immediate. Further:
    \begin{itemize}
        \item $(\cdot)^\dagger$ commutes with addition and multiplication;
        \item $(\neg \phi)^\dagger = 1-(\phi)^\dagger$;
        \item $(\phi \land \psi)^\dagger = (\phi)^\dagger \cdot (\psi)^\dagger$;
        \item $(\theta \leq \theta')^\dagger = \min(\max((\theta')^\dagger+1-(\theta)^\dagger,0),1)$;
        \item $(\#(x_1,\dots,x_n).\phi)^\dagger = \sum(x_1,\dots,x_n).(\phi)^\dagger$
    \end{itemize}
    where $\min(\theta,\theta') := -\max(-\theta,-\theta')$. Every $\FOC$ formula $\phi$ is then equivalent over ordered Boolean structures to the $\FOCrats$ formula $(\phi)^\dagger>0$.
\end{proof}

\begin{lemma}
\label{lem:FOCrats_to_FOC_formulas}
Every \FOCrats formula is equivalent over Boolean structures to an \FOC formula.
\end{lemma}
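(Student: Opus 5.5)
The plan is to show that, over Boolean structures, every $\FOCrats$ term takes values in a fixed lattice of rationals $\frac{1}{M^{\delta}}\mathbb{Z}$. Each term can then be simulated by a pair of natural-number-valued $\FOC$ terms encoding a signed integer numerator. Concretely, let $M$ be the least common multiple of the denominators of all rational constants occurring in the given formula $\phi$. To every $\FOCrats$ term $\theta$ I assign a syntactic \emph{degree} $\delta(\theta)\in\mathbb{N}$: constants have degree $1$, atoms $R(\tup x)$ and $\mathbbm{1}_{x=y}$ have degree $0$, products add degrees, and $+$, $\max$, $-$ and $\sum$ take the maximum degree (respectively keep the degree). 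I then prove by induction on $\theta$ that there are $\FOC$ terms $\theta^+,\theta^-$ with the same free element variables such that, for every Boolean structure $\calA$ and tuple $\tup a$,
\[
\sem{\theta}^{\calA}(\tup a)=\frac{\sem{\theta^+}^{\calA}(\tup a)-\sem{\theta^-}^{\calA}(\tup a)}{M^{\delta(\theta)}} .
\]
The Boolean hypothesis enters exactly in the base case for atoms, whose values lie in $\{0,1\}$.

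For the base cases I write $q=\frac{p}{M}$ with $p\in\mathbb{Z}$, and take $\theta^\pm$ to be the $\FOC$ numerals $\max(\pm p,0)$, built from $1$ and $+$. For $R(\tup x)$ and $\mathbbm{1}_{x=y}$ I take $\theta^+$ to be the indicator of the corresponding atomic formula and $\theta^-=0$. The indicator $[\psi]$ of an $\FOC$ formula $\psi$ is written as $\#(y<1).\psi$ with a dummy number variable $y$; this avoids relying on a counting binder with no variables.

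The inductive steps are as follows. Negation swaps $\theta^+$ and $\theta^-$. For addition I first raise both summands to the common degree by multiplying both components by the numeral $M^{\delta-\delta_i}$, and then add componentwise. For products I use $(a^+-a^-)(b^+-b^-)=(a^+b^++a^-b^-)-(a^+b^-+a^-b^+)$. For summation I use $\sum(\tup x).\theta$ with components $\#(\tup x, y<\theta^{\pm}).(y=y)$, whose value is $\sum_{\tup x}\sem{\theta^\pm}$; this is legal because the bounding term may depend on $\tup x$. For $\max$ I again normalise to a common degree and set $\chi:=(b^++a^-\le a^++b^-)$. The components are then $[\chi]\cdot a^{\pm}+[\neg\chi]\cdot b^{\pm}$.

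Formulas are handled last. An inequality $\theta>\theta'$ becomes, after normalising both sides to a common degree, the $\FOC$ formula $\neg\bigl(a^++b^-\le b^++a^-\bigr)$. Negation and conjunction translate literally. Since a formula of $\FOCrats$ has value $1$ exactly when the translated $\FOC$ formula holds, this proves the lemma.

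I expect the main obstacle to be the degree bookkeeping rather than any single conceptual step. One must make sure that the common-denominator representation is \emph{uniform}: it may depend only on the syntax and never on the structure. This is exactly why $\delta$ is defined syntactically and the rescaling factors are fixed numerals. I would check carefully that the $\max$ case and the summation case preserve the degree invariant. Nonnegativity of the components is maintained throughout, since $\FOC$ terms range over $\mathbb{N}$.
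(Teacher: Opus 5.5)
Your proof is correct, but it uses a different numeric encoding from the paper's.

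\textbf{The two encodings.} The paper represents each term $\theta$ in sign--magnitude form by a triple $(f(\theta),t(\theta),r(\theta))$, with $\sem{\theta}=(-1)^{1+\sem{f(\theta)}}\cdot\sem{t(\theta)}\cdot r(\theta)$. Here $f(\theta)$ is an \FOC formula giving the sign, $t(\theta)$ is an \FOC term giving the magnitude, and $r(\theta)$ is a rational scaling constant. You instead write $\sem{\theta}$ as a difference $(\sem{\theta^+}-\sem{\theta^-})/M^{\delta(\theta)}$ of two natural-number terms, over a denominator determined by syntax alone. Both encodings are uniform, and the case analyses correspond one to one.

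\textbf{What your encoding buys.} Addition, multiplication and iterated summation become monotone operations on naturals: componentwise sums, the product expansion, and one counting term per component. Comparisons are then needed only for $\max$ and for atomic formulas $\theta>\theta'$, and subtraction never occurs.

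\textbf{What sign--magnitude costs.} Every sum of terms with possibly different signs forces the paper to determine the sign of the result and compute an absolute difference. It introduces a truncated subtraction $-_t$ via a counting term in the summation case. Its addition case is written with the coefficient $2\cdot\mathbbm{1}_{\phi_1}-1$, which is not an \FOC term, since \FOC terms range over $\mathbb{N}$. It would have to be unfolded into a case split with truncated subtraction. The paper's representation does expose sign and magnitude directly, but for this lemma your argument is the more elementary and more robust one.

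\textbf{One small fix.} \FOC has equality only between element variables. So the body of your summation counting term should be $y\le y$ (or $0\le 0$) rather than $y=y$.
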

\begin{proof}
    We prove by simultaneous induction on formulas and terms that, over ordered Boolean structures,
    \begin{enumerate}
        \item For every \FOCrats formula $\phi$ there is an equivalent \FOC formula $(\phi)^\dagger$;
        \item For every \FOCrats term $\theta$ there is a triple $(\theta)^\dagger = (f(\theta),t(\theta),r(\theta))$ where 
        $f(\theta)$ is an \FOC formula, $t(\theta)$ is an \FOC term and $r(\theta)$ is a non-negative rational so that: 
        $$\sem{\theta} = (-1)^{1+\sem{f(\theta)}} \cdot \sem{t(\theta)}  \cdot r(\theta)$$
    \end{enumerate}
    We use $\mathbbm{1}_\phi$ to abbreviate $\#().\phi$.
    \begin{itemize}
        \item For $q \in \rats$, if $q \geq 0$ then $(q)^\dagger = (\top, 1, q)$, else $(q)^\dagger = (\bot, 1, -q)$
        \item $(\mathbbm{1}_{x=x'})^\dagger = (\top, \mathbbm{1}_{x=x'}, 1)$;
        \item $(R(x_1,\dots,x_k))^\dagger = (\top, \mathbbm{1}_{R(x_1,\dots,x_k)}, 1)^\dagger$;
        \item $(\cdot)^\dagger$ commutes with negation and conjunction;
        \item $(-\theta)^\dagger = (\neg f(\theta), t(\theta),r(\theta))$.
        \item $(\theta_1 \cdot \theta_2)^\dagger = (f(\theta_1)\leftrightarrow f(\theta_2), t(\theta_1)\cdot t(\theta_2), r(\theta_1)\cdot r(\theta_2))$

    \end{itemize}
    The remaining cases are addition, max, iterated sum and inequality. Given $\FOCrats$ terms $\theta_1,\theta_2$, let $f(\theta_1)=\phi_1$, $f(\theta_2)=\phi_2$ and let $p_1,p_2,q_1,q_2 \in \mathbb{N}$ so that $r(\theta_1)=\frac{p_1}{{q_1}}$, $r(\theta_2)=\frac{p_2}{{q_2}}$. Let $p'_1 = p_1\cdot q_2$, and $p'_2=p_2\cdot q_1$. 
    Then:
    \begin{align*}
        \sem{\theta_1} &= (-1)^{1+\sem{\phi_1}} \cdot \sem{t(\theta_1)} \cdot \frac{p'_1}{q_1\cdot q_2}\\
        \sem{\theta_2} &= (-1)^{1+\sem{\phi_2}} \cdot \sem{t(\theta_2)} \cdot \frac{p'_2}{q_1\cdot q_2}
    \end{align*}
\begin{itemize}
    \item 
    If $\theta$ is $\theta_1+\theta_2$, then:
    \begin{align*}
        f(\theta) &= \Big(\phi_1 \wedge \Big((p'_1 \cdot t(\theta_1)) \geq (p'_2 \cdot t(\theta_2))\Big)\Big) \lor \Big(\phi_2 \wedge \Big((p'_2 \cdot t(\theta_2)) \geq (p'_1 \cdot t(\theta_1))\Big)\Big)\\
        t(\theta) &= (2\cdot \mathbbm{1}_{\phi_1} -1) \cdot p'_1 \cdot t(\theta_1) + (2\cdot \mathbbm{1}_{\phi_2} -1)\cdot p'_2 \cdot t(\theta_2)\\
        r(\theta) &= \frac{1}{q_1\cdot q_2}
    \end{align*}
\item
    If $\theta$ is $\max(\theta_1,\theta_2)$, then:
    \begin{align*}
        f(\theta) &= \phi_1 \lor \phi_2\\
        t(\theta) &= \Big(\mathbbm{1}_{(\phi_1 \wedge \neg \phi_2)} \cdot p'_1 \cdot t(\theta_1) \Big) + \Big(\mathbbm{1}_{(\phi_2 \wedge \neg \phi_1)} \cdot p'_2 \cdot t(\theta_2)\Big)\\
        & \quad + \Big(\mathbbm{1}_{\phi_1 \wedge \phi_2} \cdot \Big(\mathbbm{1}_{p'_1 \cdot t(\theta_1) \geq p'_2 \cdot t(\theta_2)} \cdot p'_1 \cdot t(\theta_1) + \mathbbm{1}_{p'_1 \cdot t(\theta_1) < p'_2 \cdot t(\theta_2)} \cdot p'_2 \cdot t(\theta_2)\Big)\Big)\\
        & \quad+ \Big(\mathbbm{1}_{\neg \phi_1 \wedge \neg \phi_2} \cdot \Big(\mathbbm{1}_{p'_1 \cdot t(\theta_1) \geq p'_2 \cdot t(\theta_2)} \cdot p'_2 \cdot t(\theta_2) + \mathbbm{1}_{p'_1 \cdot t(\theta_1) < p'_2 \cdot t(\theta_2)} \cdot p'_1 \cdot t(\theta_1)\Big)\Big)\\
        r(\theta) &= \frac{1}{q_1\cdot q_2}        
    \end{align*}
\item
    If $\theta$ is $\sum(x_1,\dots,x_n).\theta_1$, then with $\theta_1 -_t \theta_2$ abbreviating $\#(y\leq \theta_1).(y>\theta_2)$:
    \begin{align*}
         f(\theta) &= \Big(\#(x_1,\dots,x_n,y\leq p_1 \cdot t(\theta_1)).\phi_1\Big)\geq \Big(\#(x_1,\dots,x_n,y\leq p_1 \cdot t(\theta_1)).(\neg \phi_1)\Big)\\
         t(\theta) &= \bigg(\Big(\#(x_1,\dots,x_n,y\leq p_1 \cdot t(\theta_1)).\phi_1\Big) -_t \Big(\#(x_1,\dots,x_n,y\leq p_1 \cdot t(\theta_1)).(\neg \phi_1)\Big)\bigg)\\
         &\quad + \bigg(\Big(\#(x_1,\dots,x_n,y\leq p_1 \cdot t(\theta_1)).(\neg \phi_1)\Big) -_t 
         \Big(\#(x_1,\dots,x_n,y\leq p_1 \cdot t(\theta_1)).\phi_1\Big)\bigg)\\
         r(\theta) &= r(\theta_1)
     \end{align*}
     \item
    Finally, if $\phi$ is $\theta_1 > \theta_2$, then:
    \begin{align*}
        (\phi)^\dagger =& \Big(\phi_1 \wedge \phi_2 \wedge p'_1\cdot t(\theta_1)>p'_2\cdot t(\theta_2)\Big) \lor \Big(\neg\phi_1 \wedge \neg\phi_2 \wedge p'_1 \cdot t(\theta_1)<p'_2\cdot t(\theta_2)\Big)\\
        &\, \lor
        \Big(\phi_1 \wedge \neg \phi_2 \wedge (p'_1\cdot t(\theta_1)>0 \lor p'_2\cdot t(\theta_2)>0)\Big) 
    \end{align*}
    \end{itemize}
\end{proof}

\thmFOCratsisFOCunweightedordered*
\begin{proof}
    The equivalence between (2) and (3) is well known \cite{barrington1990uniformity}. The equivalence between (1) and (2) follows from Corollary~\ref{cor:FOC_to_FOCrats_formulas} and Lemma~\ref{lem:FOCrats_to_FOC_formulas}.
\end{proof}

\thmFOCratsContainedInTCO*
\begin{proof}
    Given $m \geq 1$, represent a 
    rational $\frac{p}{q} \in \rats$ with $p \in \mathbb{Z}, q \in \mathbb{N}$ as a binary string of length $2m$ by concatenating the first $m$ bits of the binary representations of $p$ and of $q$. We assume that 
    $\frac{p}{q}$ is reduced. For an ordered rational weighted structure $\calA$ and $\tup a \in A^k$ for some $k \geq 0$, let $s(\calA, \tup a) \in \{0,1\}^*$ be as follows. Start with $1^{|A|}01^m0$ where $2m$ is the size of the largest exact bit representation of a rational weight in $\calA$. Then the weights of $\calA$ follow as bitstrings of size $2m$ in order of the signature and the domain $A$. The final $k\cdot \log(|A|)$ bits encode $\tup a$, yielding a string of length:
    \begin{align*}
        ||s(\calA, \tup a)|| = O(|A| + \sum_{R \in \sigma} \sum_{\tup a \in A^{\arity(R)}} 2m)
    \end{align*}
    The inclusion follows from known results on arithmetic in uniform \TCO. The representation of addition, multiplication and iterated addition over integers in non-uniform \TCO goes back to \cite{chandra1984constant}. It is not hard to see that the arguments can be adapted to uniform \TCO (see \cite{barrington1990uniformity}). Iterated addition over rationals requires iterated multiplication over integers, since one multiplies the denominators of summed rationals. It was shown in \cite{hesse2002uniform} that this is also available in uniform \TCO.

    For strictness, we give a separating example over structures with integer weights. Let $a \in \calQ(\calA)$ if and only if $\sem{R}^\calA(a)$ is an even integer.
    To compute this in \TCO, let $n=||s(\calA,a)||$ for rational weighted $\calA$ and $a \in A$. The circuit $C_n$ selects the weight $\frac{p_i}{q_i}$, where $i$ is the position of $a$ in the order, represented by the last $\log(|A|)$ bits. Then verify that $q_i$ is $1$, and the last bit of $p_i$ is $0$. Now consider structures $\calA_r$ consisting of a single element $a$ and a weight $r$ for $R(a)$. Any \FOCrats formula that expresses $\calQ$ is equivalent to a boolean combination over finitely many inequality checks $\theta_i > \theta_j$. Substitute each such check by the equivalent $\theta_i - \theta_j >0$. Since the truth value alternates infinitely for $r \in \{1, 2, 3, \dots\}$ at least one of these inequality checks must alternate infinitely many times. However, each $\theta_i-\theta_j$ defines a univariate piecewise polynomial with finitely many pieces in $r$, since $r$ is the only weight in $\calA_r$ and $\theta_i$ is constructed from $r$ and constants with addition, multiplication an max. Since each contributing polynomial has a number of roots bounded by its degree, $\theta_i-\theta_j$ cannot iterate between $>0$ and $\leq 0$ infinitely many times. Hence $\calQ$ is not defined in \FOCrats.
\end{proof}

\end{document}